\begin{document}
	
	\hypersetup{
		pdftitle = {Adjusting Higher Chern--Simons Theory},
		pdfauthor = {Gianni Gagliardo,Dominik Rist,Christian Saemann,Martin Wolf},
		pdfkeywords = {}
	}
	
	\date{\today}
	
	\email{ggg2001@hw.ac.uk, dr.dominik.rist@gmail.com,c.saemann@hw.ac.uk,m.wolf@ surrey.ac.uk}
	
	\preprint{DMUS--MP--25/02}
	
	\title{Adjusting Higher Chern--Simons Theory}
	
	\author[a]{Gianni~Gagliardo\,\orcidlink{0009-0005-5724-7965}\,}
	\author[a]{Dominik~Rist\,\orcidlink{0000-0002-1817-3458}\,}
	\author[a]{\\ Christian~Saemann\,\orcidlink{0000-0002-5273-3359}\,}
	\author[b]{Martin~Wolf\,\orcidlink{0009-0002-8192-3124}\,}
	
	\affil[a]{Maxwell Institute for Mathematical Sciences\\ Department of Mathematics, Heriot--Watt University\\ Edinburgh EH14 4AS, United Kingdom}
	\affil[b]{School of Mathematics and Physics,\\ University of Surrey, Guildford GU2 7XH, United Kingdom}
	
	\abstract{A fundamental problem in formulating higher Chern--Simons theories is the construction of a consistent higher gauge theory that circumvents the fake-flatness constraint. Here, we propose a solution to this problem using adjusted higher connections. In particular, we shall demonstrate that there is an obstruction to constructing such action functionals since, generically, adjusted higher gauge algebras do not admit an inner product. To overcome this obstruction, we introduce half-adjusted higher Chern--Simons theories. These theories have both well-defined underlying kinematic data as well as the expected properties of a higher generalisation of Chern--Simons theory. We develop the general construction of these theories in arbitrary dimensions and provide explicit details for the four-dimensional case. We also present the complete differential cohomological framework for principal 2-bundles with half-adjusted connections. Finally, we discuss an alternative approach introducing additional trivial symmetries.}
	
	\acknowledgements{We gratefully acknowledge discussions with Alexander Schenkel and Beno{\^i}t Vicedo.}
	
	\declarations{
		\textbf{Funding.} 
		GG has been supported by an STFC studentship ST/Y509206/1.\\[5pt]
		\textbf{Conflict of interest.}
		The authors have no relevant financial or non-financial interests to disclose.\\[5pt]
		\textbf{Data statement.}
		No additional research data beyond the data presented and cited in this work are needed to validate the research findings in this work.\\[5pt]
		\textbf{Licence statement.}
		For the purpose of open access, the authors have applied a Creative Commons Attribution (CC-BY) license to any author-accepted manuscript version arising.
	}
	
	\begin{body}
		
		\section{Introduction and conclusions}
		
		After Yang--Mills theory, Chern--Simons theory~\cite{Chern:1974ft} is arguably the second most extensively studied gauge theory. Mathematically, it has achieved remarkable success as a computational tool for topological problems, particularly in computing knot and link invariants~\cite{Witten:1988hf}. Physically, it has emerged in diverse contexts ranging from M2-brane models~\cite{Bagger:2007jr,Gustavsson:2007vu} to colour--kinematics duality~\cite{Ben-Shahar:2021zww,Borsten:2022vtg}. In more general forms, holomorphic Chern--Simons theory has played a vital role in topological and twistor string theory~\cite{Witten:9207094,Witten:2003nn}, whilst Chern--Simons-like actions underlie the pure spinor superfield formulation~\cite{Cederwall:2013vba}. More recently, Costello constructed a four-dimensional Chern--Simons theory in~\cite{Costello:2013zra}, which was then further developed in~\cite{Costello:2017dso,Costello:2018gyb,Costello:2019tri}. This theory has attracted considerable attention since it provides an elegant gauge-theoretic perspective on the Yang--Baxter equations and two-dimensional integrable systems.
		
		String-theory-inspired field theories now incorporate connection forms of higher degrees which combine into higher gauge theories whose kinematic data features higher or categorified gauge groups and describes higher-dimensional parallel transport. This naturally raises the question of whether higher-dimensional variants of Chern--Simons theory can replicate the successes of ordinary Chern--Simons theories for higher-dimensional objects.
		
		In addition, Chern--Simons-type action functionals are particularly appealing when exploring higher gauge theories, as their action functionals are essentially fixed and very natural. Note that higher analogues of Yang--Mills theory are much less ``canonical'' and require many more choices. Moreover, the seemingly physically preferred such higher gauge theories, i.e.~the ones appearing in the context of string theory and supergravity, do not seem to rely on an inner product structure of the higher gauge Lie algebra which is of constant degree, a mathematically preferred choice.\footnote{See the discussion in \ref{ssec:general_higher_gauge}.}
		
		Early versions of higher Chern--Simons theory are implicit in the AKSZ approach~\cite{Alexandrov:1995kv}. These were then followed by studies of BF-type higher gauge theories~\cite{Girelli:2003ev,Girelli:0708.3051,Martins:2010ry}, i.e.~higher Chern--Simons theories with strict higher gauge group, which can be quantised using generalised spin foam quantisation. A main motivation for studying these higher gauge theories stems from using them in a formulation of a theory of quantum gravity, see e.g.~\cite{Radenkovic:2019qme} and~\cite{Stipsic:2025rhk} for very recent work.
		
		A general but abstract description of higher Chern--Simons theory appeared in~\cite{Fiorenza:2011jr}, with related works including~\cite{Antoniadis:2013jja} and subsequent developments~\cite{Song:2022ftp,Song:2023rra}. For developments motivated by the applications of higher Chern--Simons theories to categorified knot invariants, see e.g.~the works~\cite{Soncini:2014ara,Zucchini:2015ohw,Zucchini:2019mbz,Zucchini:2021bnn}. A combinatorial approach to quantising higher Chern--Simons theory was discussed in~\cite{Chen:2025msk,Chen:2025qpt}.
		
		Recent work~\cite{Schenkel:2024dcd,Chen:2024axr} has considered higher analogues of Costello's four-dimensional Chern--Simons theory. As explained in~\cite{Schenkel:2024dcd}, the standard description of four-dimensional integrable systems in terms of a six-dimensional holomorphic Chern--Simons perspective is somewhat unsatisfying, since one would expect Lax connections for a $(d+1)$-dimensional integrable field theory to involve a $d$-form. This suggests that even three-dimensional integrable field theories should involve higher gauge connections.
		
		In this paper, we seek to define interesting higher Chern--Simons theories. Concretely, we have the following expectations of such theories:
		\begin{enumerate}[(i)]
			\itemsep-1pt
			\item The theory should be a higher gauge theory with kinematic data consisting exclusively of a connection on a higher principal bundle with consistently acting gauge and higher gauge transformations. This ensures that the theory can be consistently formulated on any manifold of suitable dimension.
			\item The equations of motion of the theory's action functional should be equivalent to all curvature forms vanishing identically. 
			\item The differential of the Lagrangian form, when lifted to higher dimensions, should be a polynomial expression in the curvatures and hence a topological invariant.
		\end{enumerate}
		Note that as common in the discussion of Chern--Simons theory, we will restrict ourselves here to higher principal bundles which are topologically trivial.
		
		A fundamental challenge in higher gauge theories\footnote{which is also shared, but less visibly, with gauge-matter theories} is the subtle definition of the curvature forms of connections on higher principal bundles. These forms are vital, as they also induce the form of gauge transformations.\footnote{Gauge transformations are partially flat homotopies, cf.~\ref{ssec:four_dimensional_case_infinitesimal_considerations}.} For ordinary connections, the curvature form is directly determined by the Maurer--Cartan equation
		\begin{equation}
			F\ =\ \rmd A+\tfrac12[A,A]\ =\ 0
		\end{equation}
		for a Lie-algebra valued gauge potential one-form $A$. In higher gauge theories, this uniqueness is lost: any higher curvature form can be modified by expressions involving lower curvature forms without affecting the moduli space of flat connections. For instance, the redefinition of a 3-form curvature $H$
		\begin{equation}
			H\ \rightarrow\ H+\kappa(A,F)
		\end{equation}
		with $\kappa$ being some linear map, leads to equivalent flatness conditions.
		
		We usually model our higher gauge algebras by $L_\infty$-algebras (generalisations of differential graded Lie algebras), which are non-trivial only in non-positive degrees. Tensoring these with the differential graded commutative algebra of differential forms, we obtain larger $L_\infty$-algebras. These $L_\infty$-algebras come with a natural gauge theory, called homotopy Maurer--Cartan theory.
		
		Whilst homotopy Maurer--Cartan theory provides a natural and suitable theory for flat higher connections with all expected properties, it typically fails for non-flat connections. In particular, gauge transformations generically only close up to curvature terms, two gauge transformations linked by a higher gauge transformation have images that differ by curvature terms, and under a quasi-isomorphism of gauge $L_\infty$-algebras, gauge transformations are only mapped to gauge transformations up to curvature terms. A comprehensive discussion of these subtleties is found in~\cite{Borsten:2024gox}. These problems are resolved for fake-flat connections, for which all curvature forms except for the one of highest degree vanish. Usually, however, this condition is too restrictive for physical applications.
		
		An alternative approach involves identifying suitable non-flat curvature forms using either the Chern--Simons terms of~\cite{Sati:2008eg} or the more general adjustments of~\cite{Saemann:2019dsl,Kim:2019owc,Rist:2022hci,Fischer:2024vak}.  An adjustment is an additional algebraic datum on the higher gauge group or algebra that can be used to deform the naive gauge transformations suggested by homotopy Maurer--Cartan theory. These redefined gauge transformations then cure the above mentioned problems; in particular, the higher gauge transformations all close as expected.
		
		In the case of Chern--Simons theory, however, one might think that fake-flatness is a viable approach, since the equations of motion include this condition, anyway. This, however, would simplify the expected action functional too much. In the case of four-dimensional higher Chern--Simons theory with strict gauge 2-group, for example, one expects terms in the Lagrangian of the form $BF+\tfrac12B^2$, where $B$ is a 2-form potential and $F$ the 2-form fake curvature. Putting $F=0$ from the outset would render the theory free and much less interesting, e.g.~for any future quantisation. Generally, imposing fake-flatness from the outset on the kinematical data restricts the moduli space of connections to a zero-set of the original moduli space. As a consequence, varying the natural action functionals no longer yields full flatness of the connection.
		
		This paper studies this problem in detail and resolves it by clarifying the construction of interesting higher Chern--Simons actions that manage to satisfy the expectations listed above. In our approach, we are relying on the above mentioned adjustments.
		
		Our first result is a surprising no-go theorem: higher gauge algebras modelled on minimal $L_\infty$-algebras\footnote{Any general $L_\infty$-algebra is quasi-isomorphic to a minimal one.} equipped with a cyclic structure (inner product) do not allow for adjustments. This is significant because, as we demonstrate, quasi-isomorphic gauge $L_\infty$-algebras lead to semi-classically equivalent higher Chern--Simons theories, implying that non-minimal $L_\infty$-algebras cannot provide interesting theories.
		
		Since a full adjustment is impossible, we take a detour: starting with an adjusted higher gauge algebra, we complete it to its natural cotangent bundle to accommodate an inner product. This construction, reminiscent of the Batalin--Vilkovisky anti-field formalism, produces what we term a \uline{half-adjusted higher gauge algebra}. As we show, all gauge and higher gauge transformations close except for the higher gauge transformations in the cotangent directions, which we exclude manually.
		
		We discuss the example of four-dimensional higher Chern--Simons theory in detail, presenting both local, infinitesimal descriptions as well as the integrated, finite gauge transformations. For completeness, we also develop the explicit differential cohomology describing principal 2-bundles with half-adjusted connections and their isomorphisms. We then develop the local, infinitesimal description of half-adjusted higher Chern--Simons theory in arbitrary dimensions, demonstrating that this theory has the desired properties, with a mild restriction on higher gauge transformations.
		
		We conclude by comparing alternatives to half-adjustments in general higher gauge theory. For higher Chern--Simons theories, one alternative approach introduces additional (higher) gauge symmetries to achieve the expected closure and globular structure of the higher gauge transformations. We show that for strict higher gauge algebras\footnote{Any higher gauge algebra is expected to be equivalent to a strict one by abstract nonsense.}, this leads to a trivialisation of all higher gauge transformations, contrary to the half-adjustments, reducing higher gauge theory to ordinary gauge theory.
		
		Comparing our approach to other higher gauge theories in the literature, we note that for theories that are not of Chern--Simons type, the preferred procedure is to either not use a cyclic higher gauge algebra or to lose compatibility with quasi-isomorphism on the higher gauge algebra. Since neither solution is available for higher Chern--Simons theory, we believe that our approach of half-adjusted theories represents the most promising current method for deriving interesting and consistent Chern--Simons action functionals.
		
		In the future, we intend to study higher versions of the four-dimensional Chern--Simons theory and their relations to higher-dimensional integrable models as well as the application of these theories to higher-dimensional knots.
		
		\section{\texorpdfstring{$L_\infty$}{L infinity}-algebras and higher Chern--Simons theory}
		
		We begin with a brief review of strong homotopy Lie algebras, also known as $L_\infty$-algebras, as well as the natural form of classical higher Chern--Simons theory arising from homotopy Maurer--Cartan theory. For more details and background material, we refer to the review parts in~\cite{Jurco:2018sby,Jurco:2019bvp}.
		
		\subsection{Cyclic \texorpdfstring{$L_\infty$}{L infinity}-algebras}\label{ssec:cyclic_L_infty}
		
		\paragraph{$L_\infty$-algebras.}
		\uline{$L_\infty$-algebras} generalise the notion of (differential graded) Lie algebras in that they allow for higher brackets contributing to weaker forms of the Jacobi identities. In particular, an $L_\infty$-algebra consists of a $\IZ$-graded vector space $\frL=\bigoplus_{k\in\IZ}\frL_k$ and graded anti-symmetric multi-linear maps $\mu_i:\frL\times\cdots\times\frL\rightarrow\frL$ of degree $2-i$ for all $i\in\IN$, called the \uline{higher products}, and which obey the \uline{homotopy Jacobi identities}
		\begin{subequations}
			\begin{equation}\label{eq:hJidentity}
				\sum_{i_1+i_2=i}\sum_{\sigma\in\overline{\rm Sh}(i_1;i)}(-1)^{i_2}\chi(\sigma;V_1,\ldots,V_i)\mu_{i_2+1}(\mu_{i_1}(V_{\sigma(1)},\ldots,V_{\sigma(i_1)}),V_{\sigma(i_1+1)},\ldots,V_{\sigma(i)})\ =\ 0
			\end{equation}
			for all homogeneous $V_{1,\ldots,i}\in\frL$ and $i\in\IN^+$. Here, the sum is over all $(i_1;i)$ \uline{unshuffles} $\sigma$, that is, permutations $\sigma$ of $\{1,\ldots,i\}$ such that $\sigma(1)<\cdots<\sigma(i_1)$ and $\sigma(i_1+1)<\cdots<\sigma(i)$. In addition, $\chi(\sigma;V_1,\ldots,V_i)$ is the \uline{Koszul sign} which is defined by
			\begin{equation}
				V_1\wedge\ldots\wedge V_i\ =\ \chi(\sigma;V_1,\ldots,V_i)V_{\sigma(1)}\wedge\ldots\wedge V_{\sigma(i)}
			\end{equation}
		\end{subequations}
		for all homogeneous $V_{1,\ldots,i}\in\frL$. 
		
		For $i=1$, the homotopy Jacobi identities~\eqref{eq:hJidentity} state that $\mu_1^2=0$, that is, $\mu_1$ is a differential. For $i=2$, we find that $\mu_1$ is a derivation with respect to the bracket $\mu_2$, and for $i=3$, we find that $\mu_2$ satisfies the Jacobi identity up to terms involving $\mu_3$.
		
		In the following, we shall denote the degree of a homogeneous element $V\in \frL$ by $|V|$. Furthermore, an $L_\infty$-algebra is called \uline{minimal} provided that $\mu_1=0$ and \uline{strict} provided that $\mu_{i>2}=0$. If $\frL$ is only concentrated in degrees $-n+1,\ldots,0$, that is, $\frL=\frL_{-n+1}\oplus\cdots\oplus \frL_0$, we call the $L_\infty$-algebra an \uline{$n$-term $L_\infty$-algebra}. We call an $L_\infty$-algebra with $\mu_{i>1}=0$ \uline{Abelian}.
		
		\paragraph{Example: Crossed modules of Lie algebras.}
		A simple non-trivial example of an $L_\infty$-algebra is obtained from a crossed module of Lie algebras. Recall that a \uline{crossed module of Lie algebras} consists of a pair of Lie algebras $\frh$ and $\frg$ together with a morphism $\sft:\frh\rightarrow\frg$ and an action $\acton$ of automorphisms of $\frg$ on $\frh$ such that
		\begin{equation}
			\sft(X\acton Y)\ =\ [X,\sft(Y)]
			\eand
			\sft(Y_1)\acton Y_2\ =\ [Y_1,Y_2]
		\end{equation}
		for all $X\in\frg$ and $Y,Y_{1,2}\in\frh$. The first condition is known as the \uline{equivariance condition} and the second as the \uline{Peiffer condition}. We then set $\frL=\frL_{-1}\oplus\frL_0$ with $\frL_{-1}\coloneqq\frh$ and $\frL_0\coloneqq\frg$ as well as
		\begin{equation}
			\begin{gathered}
				\mu_1(Y)\ \coloneqq\ \sft(Y)~,
				\quad
				\mu_1(X)\ \coloneqq\ 0~,
				\\
				\mu_2(X_1,X_2)\ \coloneqq\ [X_1,X_2]~,
				\quad
				\mu_2(Y_1,Y_2)\ \coloneqq\ 0~,
				\\
				\mu_2(X,Y)\ \coloneqq\ X\acton Y\ \eqqcolon\ -\mu_2(Y,X)~,
			\end{gathered}
		\end{equation}
		for all $X,X_{1,2}\in\frL_0$ and $Y,Y_{1,2}\in\frL_{-1}$. This then defines a strict $2$-term $L_\infty$-algebra. Furthermore, this construction can easily be inverted, and it thus follows that crossed modules of Lie algebras are, in fact, equivalent to strict $2$-term $L_\infty$-algebras.
		
		\paragraph{Example: String Lie 2-algebra.}
		Another very important example, which is not a differential graded Lie algebra, is the \uline{(skeletal) string Lie 2-algebra}. This is a 2-term $L_\infty$-algebra constructed from a real metric Lie algebra $(\frg,[-,-],\inner{-}{-})$. We set $\frL=\frL_{-1}\oplus\frL_0$ with $\frL_{-1}\coloneqq\IR$ and $\frL_0\coloneqq\frg$ and with the only non-vanishing higher products
		\begin{equation}
			\mu_2(X_1,X_2)\ \coloneqq\ [X_1,X_2]
			\eand
			\mu_3(X_1,X_2,X_3)\ \coloneqq\ \inner{X_1}{[X_2,X_3]}
		\end{equation}
		for all $X_{1,2,3}\in\frg$. There is also a strict version of this 2-term $L_\infty$-algebra based on path and loop spaces in $\frg$, see~\cite{Baez:2005sn}. The string Lie 2-algebra of $\frsu(2)$ is a natural higher version of $\frsu(2)\cong \frspin(3)$, and the corresponding Lie 2-group features higher analogues of many of the properties of the Lie group $\sfSU(2)$.
		
		\paragraph{Cyclic structure.}
		\uline{Cyclic $L_\infty$-algebras} generalise the notion of metric or quadratic (differential graded) Lie algebras. In particular, they are $L_\infty$-algebras that are equipped with a non-degenerate graded symmetric bilinear form\footnote{here we restrict to the real situation} $\inner{-}{-}:\frL\times \frL\rightarrow\IR$ of definite degree subject to the \uline{cyclicity condition}
		\begin{equation}
			\inner{V_1}{\mu_i(V_2,\ldots,V_{i+1})}\ =\ (-1)^{i+i(|V_1|+|V_{i+1}|)+|V_{i+1}|\sum_{j=1}^i|V_j|}\inner{V_{i+1}}{\mu_i(V_1,\ldots,V_i)}
		\end{equation}
		for all homogeneous $V_{1,\ldots,i}\in\frL$ and $i\in\IN^+$. We shall also refer to $\inner{-}{-}$ as an \uline{inner product}.
		
		\paragraph{Example: $2$-term $L_\infty$-algebras.}
		For $(\frg,[-,-],\inner{-}{-})$ a metric Lie algebra, we obtain a strict cyclic 2-term $L_\infty$-algebra $\frL=\frL_{-1}\oplus\frL_0$ by setting $\frL_{-1}\coloneqq\frg$ and $\frL_0\coloneqq\frg$ and with the only non-vanishing higher product $\mu_2(X_1,X_2)\coloneqq[X_1,X_2]$.
		
		Generally, for a (finite-dimensional) 2-term $L_\infty$-algebra $\frL=\frL_{-1}\oplus\frL_0$ to be cyclic, we need $\inner{-}{-}$ to of degree $1$ inducing $\frL_{-1}\cong(\frL_0)^*$. For example, the string Lie 2-algebra cannot be made cyclic as, generically, $\frL_{-1}\ncong(\frL_0)^*$.
		
		\paragraph{Example: Cotangent $L_\infty$-algebra.}
		Let $(\frL,\mu_i)$ be a (finite-dimensional) $n$-term $L_\infty$-algebra. Then, we can always construct a cyclic $n$-term $L_\infty$-algebra $(\hat\frL,\hat\mu_i)$ by considering the degree-shifted cotangent space\footnote{\label{fn:degreeShift}The notation $\sfV[l]$ with $l\in\IZ$ for $\sfV=\bigoplus_k\sfV_k$ a graded vector space means the degree-shifted vector space $\sfV[l]=\bigoplus_k(\sfV[l])_k$ with $(\sfV[l])_k\coloneqq\sfV_{k+l}$. Likewise, its dual $\sfV^*$ is $\sfV^*=\bigoplus_k(\sfV^*)_k$ with $(\sfV^*)_k\coloneqq(\sfV_{-k})^*$. Hence, $(\frL^*[n-1])_k=(\frL^*)_{k+n-1}=(\frL_{1-n-k})^*$.}
		\begin{subequations}\label{eq:shifted_cotangent_construction}
			\begin{equation}
				\hat\frL\ \coloneqq\ T^*[n-1]\frL\ =\ \frL\oplus\frL^*[n-1]\ =\ \bigoplus_k(\frL_k\oplus(\frL_{1-n-k})^*)~.
			\end{equation}
			This graded vector space admits a canonical non-degenerate pairing
			\begin{equation}
				\begin{aligned}
					\inner{-}{-}\,:\,\hat\frL\times\hat\frL\ &\rightarrow\ \IR
				\end{aligned}
			\end{equation}
			of degree $n-1$ given by
			\begin{equation}
				\inner{(V_1,W^*_1)}{(V_2,W^*_2)}\ \coloneqq\ \tfrac12\big(W^*_1(V_2)+(-1)^{|V_1||W^*_2|}W^*_2(V_1)\big)
			\end{equation}
			for all homogeneous $(V_{1,2},W^*_{1,2})\in\frL\oplus\frL^*[n-1]$; this is only non-zero for $(V_1,W^*_1)\in\hat\frL_k$ and $(V_2,W^*_2)\in\hat\frL_{1-n-k}$ for all $k\in\IZ$. The higher products $\hat\mu_i$ are given by
			\begin{equation}
				\hat\mu_i\big((V_1,W^*_1),\ldots,(V_i,W^*_i)\big)\ \coloneqq\ \underbrace{\mu_i(V_1,\ldots,V_i)}_{\in\,\frL}+\underbrace{\sum_{j=1}^i\mu_i^*(V_1,\ldots,V_{j-1},W^*_j,V_{j+1},\ldots,V_i)}_{\in\,\frL^*[n-1]}
			\end{equation}
			for all $(V_{1,\ldots,i},W^*_{1,\ldots,i})\in\frL\oplus\frL^*[n-1]$ with
			\begin{equation}
				\begin{aligned}
					\mu^*_i(V_1,\ldots,V_{i-1},W^*)\ &\coloneqq\ -(-1)^{|W^*||V_{i-1}|}\mu^*_i(V_1,\ldots,V_{i-2},W^*,V_{i-1})
					\\
					&\ \ \vdots
					\\
					&\coloneqq\ -(-1)^{|W^*|\sum_{j=1}^{i-1}|V_j|}\mu^*_i(W^*,V_1,\ldots,V_{i-1})
				\end{aligned}
			\end{equation}
			for all $W^*\in\frL^*[n-1]$ and using the cyclification 
			\begin{equation}
				\inner{\mu^*_i(V_1,\ldots,V_{i-1},W^*)}{V_i}\ =\ -(-1)^{ i |W^*| + |W^*| \sum_{j=1}^{i-1}|V_j| }\inner{W^*}{\mu_i(V_1,\ldots, V_{i-1},V_i)}~.
			\end{equation}
		\end{subequations}
		We call the cyclic $L_\infty$-algebra $(\hat\frL,\hat\mu_i,\inner{-}{-})$ the \uline{cotangent $L_\infty$-algebra of $(\frL,\mu_i)$}. One can see that this sign allows one to use cyclicity to transform any homotopy Jacobi relation involving a cotangent vector into one in the original $L_\infty$-algebra and hence $(\hat\frL,\hat\mu_i,\inner{-}{-})$ is a cyclic $L_\infty$-algebra. Furthermore, we stress that this construction is a generalisation of the construction that underlies the extension of the BRST complex to the Batalin--Vilkovisky complex in the physics literature, see also~\cite{Saemann:2019dsl} for a physical application of this construction.
		
		\paragraph{Cotangent $L_\infty$-algebra of the string Lie 2-algebra.} In the case of the string Lie 2-algebra of a metric Lie algebra $\frg$ introduced above, we have
		\begin{subequations}
			\begin{equation}
				\hat\frL\ =\ \big(\hat\frL_{-1}\xrightarrow{~\hat\mu_1~}\hat\frL_0\big)\ =\
				\left(
				\begin{array}{c}
					\frg^* \\[-0.1cm] \oplus \\[-0.1cm] \IR
				\end{array}
				\xrightarrow{~0~}
				\begin{array}{c}
					\IR^* \\[-0.1cm] \oplus \\[-0.1cm] \frg
				\end{array}
				\right)
			\end{equation}
			with $\hat\mu_2$ and $\hat\mu_3$ as the only non-vanishing higher products. The only non-vanishing dual products are\footnote{Recall a left group action $\acton\,:\sfG\times\sfV\rightarrow\sfV$ of a group $\sfG$ on a vector space $\sfV$ induces an action on the dual space $\sfV^*$ defined by $(g\acton v^*)(v)=v^*(g^{-1}\acton v)$ for all $v^*\in\sfV^*$ and $v\in\sfV$, implying our sign convention here.}
			\begin{equation}
				\begin{aligned}
					\mu^*_2(X_1,X^*)(X_2)\ &=\ -X^*([X_1,X_2])~,
					\\
					\mu^*_3(X_1,X_2,Y^*)(X_3)\ &=\ Y^*(\inner{[X_1,X_2]}{X_3})
				\end{aligned}
			\end{equation}
		\end{subequations}
		for all $X_{1,2,3}\in\frg$, $X^*\in\frg^*$, and $Y^*\in\IR^*$.
		
		\subsection{Higher Chern--Simons theory}\label{ssec:higher_CS}
		
		\paragraph{$L_\infty$-algebra valued differential forms.}
		It is well-known that the tensor product of a commutative algebra and a Lie algebra naturally forms a Lie algebra. This observation extends to the tensor product of a differential graded commutative algebra and an $L_\infty$-algebra naturally forming an $L_\infty$-algebra. Let us explore this for the example of the de~Rham complex and an $L_\infty$-algebra.
		
		In particular, let $M$ be a manifold and consider the \uline{de~Rham complex} $\Omega^\bullet(M)$ of differential forms on $M$ with $\rmd$ the exterior differential and $\wedge$ the standard wedge product between differential forms. This is a differential graded algebra. Hence, given an $L_\infty$-algebra $(\frL,\mu_i)$, we can form a new $L_\infty$-algebra $\big(\Omega^\bullet(M,\frL),\mu^{\Omega^\bullet(M,\frL)}_i\big)$ by means of
		\begin{subequations}
			\begin{equation}
				\Omega^\bullet(M,\frL)\ \coloneqq\ \bigoplus_{k\in\IZ}\Omega^\bullet_k(M,\frL)
				\ewith
				\Omega^\bullet_k(M,\frL)\ \coloneqq\ \bigoplus_{i+j=k}\Omega^i(M)\otimes\frL_j
			\end{equation}
			and
			\begin{equation}\label{eq:tensored_higher_products}
				\begin{aligned}
					\mu_1^{\Omega^\bullet(M,\frL)}(\omega\otimes V)\ &\coloneqq\ \rmd\omega\otimes V+(-1)^{|\omega|}\omega\otimes\mu_1(V)~,
					\\[5pt]
					\mu_i^{\Omega^\bullet(M,\frL)}(\omega_1\otimes V_1,\ldots,\omega_i\otimes V_i)\ &\coloneqq\ (-1)^{i\sum_{j=1}^i|\omega_i|+\sum_{j=0}^{i-2}|\omega_{i-j}|\sum_{k=1}^{i-j-1}|V_k|}
					\\
					&\kern1cm\times(\omega_1\wedge\ldots\wedge\omega_i)\otimes\mu_i(V_1,\ldots V_i)
				\end{aligned}
			\end{equation}
		\end{subequations}
		for all homogeneous $\omega,\omega_{1,\ldots,i}\in\Omega^\bullet(M)$ and $V,V_{1,\ldots,i}\in \frL$.
		
		Suppose now that the $L_\infty$-algebra is a cyclic $(d-2)$-term $L_\infty$-algebra with inner product $\inner{-}{-}_\frL:\frL\times \frL\rightarrow\IR$ and $M$ a $d$-dimensional compact oriented manifold without boundary. Then, $\Omega^\bullet(M,\frL)$ can be endowed with the inner product
		\begin{equation}
			\inner{\omega_1\otimes V_1}{\omega_2\otimes V_2}^{\Omega^\bullet(M,\frL)}\ \coloneqq\ (-1)^{|\omega_2||V_1|}\int_M\omega_1\wedge\omega_2\,\inner{V_1}{V_2}
		\end{equation}
		for all homogeneous $\omega_{1,2}\in\Omega^\bullet(M)$ and $V_{1,2}\in\frL$.
		
		\paragraph{Higher Chern--Simons theory.}
		As above, consider a $d$-dimensional compact oriented manifold $M$ without boundary. Furthermore, let $(\frL,\mu_i,\inner{-}{-})$ be a cyclic $(d-2)$-term $L_\infty$-algebra. We shall refer to this $L_\infty$-algebra as the gauge $L_\infty$-algebra. Then, for $a\in\Omega^\bullet_1(M,\frL)$, the action of \uline{higher Chern--Simons theory} is defined as\footnote{This is simply the homotopy Maurer--Cartan action for the $L_\infty$-algebra $\Omega^\bullet(M,\frL)$, see again e.g.~\cite{Jurco:2018sby,Jurco:2019bvp} for a review.}
		\begin{equation}\label{eq:higherCSAction}
			S_d\ \coloneqq\ \sum_i\frac1{(i+1)!}\inner{a}{\mu_i^{\Omega^\bullet(M,\frL)}(a,\ldots,a)}^{\Omega^\bullet(M,\frL)}~.
		\end{equation}
		As a consequence of the homotopy Jacobi identities and the cyclicity of the inner product, this action is invariant under the infinitesimal gauge transformations
		\begin{equation}\label{eq:gaugeTransformationsHigherCSTheory}
			\delta_{c_0}a\ \coloneqq\ \sum_i\frac1{i!}\mu_{i+1}^{\Omega^\bullet(M,\frL)}(a,\ldots,a,c_0)
		\end{equation}
		for all $c_0\in\Omega^\bullet_0(M,\frL)$. We also have higher gauge transformations that are defined recursively by
		\begin{equation}\label{eq:higherGaugeTransformationsHigherCSTheory}
			\delta_{c_{-k-1}}c_{-k}\ \coloneqq\ \sum_i\frac1{i!}\mu_{i+1}^{\Omega^\bullet(M,\frL)}(a,\ldots,a,c_{-k-1})
		\end{equation}
		for all $c_{-k}\in\Omega^\bullet_{-k}(M,\frL)$. Furthermore, the equation of motion following from~\eqref{eq:higherCSAction} is
		\begin{equation}\label{eq:eomHigherCSTheory}
			f\ \coloneqq\ \sum_i\frac1{i!}\mu_i^{\Omega^\bullet(M,\frL)}(a,\ldots,a)\ =\ 0~.
		\end{equation}
		We call $a\in\Omega^\bullet_1(M,\frL)$ the \uline{gauge potential} and $f\in\Omega^\bullet_2(M,\frL)$ its \uline{curvature}. Solutions $f=0$ are called \uline{Maurer--Cartan elements}. Note that the curvature $f$ transforms under the gauge transformations~\eqref{eq:gaugeTransformationsHigherCSTheory} as
		\begin{equation}\label{eq:gaugeTransformationCurvatureHigherCSTheory}
			\delta_{c_0}f\ =\ \sum_i\frac1{i!}\mu_{i+2}^{\Omega^\bullet(M,\frL)}(a,\ldots,a,f,c_0)~.
		\end{equation}
		
		\paragraph{Example: $3$-dimensional Chern--Simons theory.}
		For $d=3$, we have $\frL=\frL_0\eqqcolon\frg$ with only $\mu_2(-,-)\eqqcolon[-,-]$ non-trivial. Hence, the gauge $L_\infty$-algebra reduces to a metric Lie algebra. Consequently, $a\eqqcolon A\in\Omega^\bullet_1(M,\frL)=\Omega^1(M)\otimes\frg$ and the higher Chern--Simons action~\eqref{eq:higherCSAction} reduces to the action of standard Chern--Simons theory,
		\begin{equation}
			S_{d=3}\ =\ \int_M\big\{\tfrac12\inner{A}{\rmd A}+\tfrac1{3!}\inner{A}{[A,A]}\big\}\,.
		\end{equation}
		In this case, the gauge transformations~\eqref{eq:gaugeTransformationsHigherCSTheory} are the standard ones,
		\begin{equation}
			\delta_\alpha A\ =\ \rmd \alpha+[A,\alpha]
		\end{equation}
		with $c_0\eqqcolon \alpha\in\Omega^\bullet_0(M,\frL)=\Omega^0(M)\otimes\frg$, and there are no higher gauge transformations. Furthermore, the equation of motion~\eqref{eq:eomHigherCSTheory} is simply
		\begin{equation}
			F\ =\ \rmd A+\tfrac12[A,A]\ =\ 0~,
		\end{equation}
		where $f\eqqcolon F\in\Omega^\bullet_2(M,\frL)=\Omega^2(M)\otimes\frg$ is the standard curvature $2$-form. Finally,~\eqref{eq:gaugeTransformationCurvatureHigherCSTheory} reduces to
		\begin{equation}
			\delta_\alpha F\ =\ [F,\alpha]~.
		\end{equation}
		
		\paragraph{Example: $4$-dimensional Chern--Simons theory.}
		For $d=4$, we have $\frL=\frL_{-1}\oplus \frL_0$. Consequently, $a\eqqcolon(A,B)\in\Omega^\bullet_1(M,\frL)=\Omega^1(M)\otimes\frL_0\oplus\Omega^2(M)\otimes\frL_{-1}$. In this case, the higher Chern--Simons action~\eqref{eq:higherCSAction} becomes
		\begin{equation}\label{eq:4dCSAction}
			S_{d=4}\ =\ \int_M\big\{\inner{B}{\rmd A+\tfrac12\mu_2(A,A)+\tfrac12\mu_1(B)}_\frL-\tfrac1{4!}\inner{A}{\mu_3(A,A,A)}\big\}\,.
		\end{equation}
		Here and in the following, our notation is such that the higher products $\mu_i$ act on the $L_\infty$-part of a differential form. For example,
		\begin{equation}
			\mu_3(A,A,A)\ =\ -\mu_3^{\Omega^\bullet(M,\frL)}(A,A,A)~,
		\end{equation}
		because the 3-form part of $\mu_3^{\Omega^\bullet(M,\frL)}(A,A,A)$ has been moved to the left of the higher product $\mu_3$ according to~\eqref{eq:tensored_higher_products}.
		
		The gauge transformations~\eqref{eq:gaugeTransformationsHigherCSTheory} read as
		\begin{equation}\label{eq:4dGaugeTransformations}
			\begin{aligned}
				\delta_{(\alpha,\lambda)}A\ &=\ \rmd \alpha+\mu_2(A,\alpha)-\mu_1(\lambda)~,
				\\
				\delta_{(\alpha,\lambda)}B\ &=\ \mu_2(B,\alpha)+\rmd\lambda+\mu_2(A,\lambda)+\tfrac12\mu_3(A,A,\alpha)
			\end{aligned}
		\end{equation}
		with $c_0\eqqcolon(\alpha,\lambda)\in\Omega^\bullet_0(M,\frL)=\Omega^0(M)\otimes\frL_0\oplus\Omega^1(M)\otimes\frL_{-1}$. We now also have higher gauge transformations
		\begin{equation}\label{eq:4dHigherGaugeTransformation}
			\begin{aligned}
				\delta_\sigma \alpha\ &=\ \mu_1(\sigma)~,
				\\
				\delta_\sigma\lambda\ &=\ \rmd\sigma+\mu_2(A,\sigma)
			\end{aligned}
		\end{equation}
		with $c_{-1}\eqqcolon\sigma\in\Omega^\bullet_{-1}(M,\frL)=\Omega^0(M)\otimes\frL_{-1}$. Furthermore, the equation of motion~\eqref{eq:eomHigherCSTheory} decomposes as
		\begin{equation}\label{eq:4d_curvatures}
			\begin{aligned}
				F\ \coloneqq\ \rmd A+\tfrac12\mu_2(A,A)+\mu_1(B)\ &=\ 0~,
				\\
				H\ \coloneqq\ \rmd B+\mu_2(A,B)-\tfrac1{3!}\mu_3(A,A,A)\ &=\ 0
			\end{aligned}
		\end{equation}
		with $f\eqqcolon(F,H)\in\Omega^\bullet_2(M,\frL)=\Omega^2(M)\otimes\frL_0\oplus\Omega^3(M)\otimes\frL_{-1}$. The gauge transformations of these curvature forms are given by a reduction of~\eqref{eq:gaugeTransformationCurvatureHigherCSTheory} to
		\begin{equation}
			\begin{aligned}
				\delta_{(\alpha,\lambda)}F\ &=\ \mu_2(F,\alpha)~,
				\\
				\delta_{(\alpha,\lambda)}H\ &=\ \mu_2(H,\alpha)+\mu_2(F,\lambda)+\mu_3(A,F,\alpha)~,
			\end{aligned}
		\end{equation}
		and the Bianchi identities read as
		\begin{equation}\label{eq:4d_Bianchi}
			\begin{aligned}
				\rmd F+\mu_2(A,F)-\mu_1(H)\ &=\ 0~,
				\\
				\rmd H+\mu_2(A,H)-\mu_2(F,B)+\tfrac12\mu_3(A,A,F)\ &=\ 0~.
			\end{aligned}
		\end{equation}
		
		It is now not too difficult to see that
		\begin{subequations}
			\begin{equation}\label{eq:gauge_trafo_not_closing}
				\begin{aligned}
					[\delta_{(\alpha,\lambda)},\delta_{(\alpha',\lambda')}]A\ &=\ \delta_{(\alpha'',\lambda'')}A~,
					\\
					[\delta_{(\alpha,\lambda)},\delta_{(\alpha',\lambda')}]B\ &=\ \delta_{(\alpha'',\lambda'')}B+\mu_3(F,\alpha,\alpha')
				\end{aligned}
			\end{equation}
			for any two gauge transformations where
			\begin{equation}
				\begin{aligned}
					\alpha''\ &\coloneqq\ \mu_2(\alpha',\alpha)~,
					\\
					\lambda''\ &\coloneqq\ \mu_2(\alpha',\lambda)-\mu_2(\alpha,\lambda')-\mu_3(A,\alpha',\alpha)~.
				\end{aligned}
			\end{equation}
			We also note that
			\begin{equation}\label{eq:higher_gauge_trafo_not_closing}
				\delta_\sigma(\delta_{\alpha,\lambda}B)\ =\ \mu_2(F,\sigma)
			\end{equation}
			for $\sigma\in\Omega^0(M)\otimes\frL_{-1}$; see~\cite[Appendix C]{Jurco:2018sby} for the explicit computation in the general case.
		\end{subequations}
		
		\subsection{Equivalences of gauge \texorpdfstring{$L_\infty$}{L infinity}-algebras lead to semi-classical equivalence}\label{ssec:equivalences}
		
		Homotopy algebras come with a notion of equivalence, called \uline{quasi-isomorphism}, which replaces the usual isomorphism of Lie algebras. Contrary to expectations, quasi-isomorphic gauge $L_\infty$-algebras may give rise to physically inequivalent field theories, see e.g.~the discussion of the higher Stueckelberg model in~\cite{Borsten:2024gox}\footnote{Another, physically relevant case is that of the gauge $L_\infty$-algebra underlying the six-dimensional $\caN=(1,0)$ model of~\cite{Saemann:2017zpd,Saemann:2019dsl,Rist:2020uaa}, which is quasi-isomorphic to an ordinary Lie algebra.}. For higher Chern--Simons theories, however, the quasi-isomorphism of gauge $L_\infty$-algebras indeed induces an equivalence of field theories, which we may use to simplify our discussion. We briefly develop the relevant background in the following.
		
		\paragraph{Chevalley--Eilenberg algebra of an $L_\infty$-algebra.}
		Morphisms of $L_\infty$-algebras are most easily understood in a dual formulation in terms of their Chevalley--Eilenberg algebras. Recall that the \uline{Chevalley--Eilenberg algebra} of an $L_\infty$-algebra $\frL$, denoted by $\sfCE(\frL)$, is the symmetric tensor product\footnote{Technically, this produces a curved $L_\infty$-algebra, and one should restrict to the reduced symmetric tensor product, see e.g.~\cite{Jurco:2018sby} for a detailed explanation.} algebra $\bigodot^\bullet(\frL[1])^*$ together with a differential $\sfd_{\sfCE}$, which is a derivation of $\odot$ and encodes the higher products~\cite{Sati:2008eg}.\footnote{From \cref{fn:degreeShift} on page \cpageref{fn:degreeShift}, we obtain for $(\frL[1])^*=\bigoplus_k((\frL[1])^*)_k$ that $((\frL[1])^*)_k=((\frL[1])_{-k})^*=(\frL_{1-k})^*$.}
		
		We briefly illustrate the construction for a 2-term $L_\infty$-algebra $\frL=\frL_{-1}\oplus\frL_0$. Here, $(\frL[1])^*$ is concentrated in degrees $k=1,2$ with
		\begin{subequations}
			\begin{equation}
				((\frL[1])^*)_k\ =\
				\begin{cases}
					(\frL_0)^* & \efor k\ =\ 1
					\\
					(\frL_{-1})^* & \efor k\ =\ 2
				\end{cases}~.
			\end{equation}
			Consequently, the Chevalley--Eilenberg algebra $\sfCE(\frL)$ is generated by elements $\tte^\alpha\in\sfCE(\frL)$ of degree $1$ and elements $\tte^a\in\sfCE(\frL)$ of degree $2$ with $\alpha,\beta,\ldots=1,\ldots,\dim(\frL_0)$ and $a,b,\ldots=1,\ldots,\dim(\frL_{-1})$. The differential is defined by its action on these generators, which is necessarily of the form
			\begin{equation}\label{eq:2TermCEDifferential}
				\begin{aligned}
					\sfd_\sfCE\tte^\alpha\ &=\ -\tfrac12f_{\beta\gamma}{}^\alpha\tte^\beta\tte^\gamma-f_a{}^\alpha\tte^a~,
					\\
					\sfd_\sfCE\tte^a\ &=\ -f_{\alpha b}{}^a\tte^\alpha\tte^b+\tfrac1{3!}f_{\alpha\beta\gamma}{}^a\tte^\alpha\tte^\beta\tte^\gamma
				\end{aligned}
			\end{equation}
		\end{subequations}
		with $f_{\beta\gamma}{}^\alpha$, $f_a{}^\alpha$, $f_{\alpha b}{}^a$, and, $f_{\alpha\beta\gamma}{}^a$ the \uline{structure constants}. The latter define the higher products
		\begin{equation}\label{eq:2TermHigherProductsFromCE}
			\begin{gathered}
				\mu_1(\ttE_a)\ =\ f_a{}^\alpha\ttE_\alpha~,
				\\
				\mu_2(\ttE_\alpha,\ttE_\beta)\ =\ f_{\alpha\beta}{}^\gamma\ttE_\gamma~,
				\quad
				\mu_2(\ttE_\alpha,\ttE_a)\ =\ f_{\alpha a}{}^b\ttE_b~,
				\\
				\mu_3(\ttE_\alpha,\ttE_\beta,\ttE_\gamma)\ =\ f_{\alpha\beta\gamma}{}^a\ttE_a~,
			\end{gathered}
		\end{equation}
		where $\ttE_\alpha$ and $\ttE_a$ are basis vectors of $\frL_0$ and $\frL_{-1}$, dual to $\tte^\alpha$ and $\tte^a$, respectively. For more details, we refer to~\cite{Jurco:2018sby}.
		
		\paragraph{Morphisms.}
		A \uline{morphism of $L_\infty$-algebras} $\phi:\frL\rightarrow\tilde\frL$ is dually a morphism of differential graded commutative algebras between the corresponding Chevalley--Eilenberg algebras $\phi^*:\sfCE(\tilde\frL)\rightarrow\sfCE(\frL)$. Unpacking this definition, we note that a morphism $\phi$ consists of graded anti-symmetric, $i$-ary multilinear maps $\phi_i:\frL\times\cdots\times\frL\rightarrow\tilde\frL$ of degree $1-i$ that intertwine between the higher products:
		\begin{equation}\label{eq:intertwining_relations}
			\begin{aligned}
				&\tilde\mu_1(\phi_1(V_1))\ =\ \phi_1\big(\mu_1(V_1)\big)~,
				\\
				&\tilde\mu_2(\phi_1(V_1),\phi_1(V_2))\ =\ \phi_1\big(\mu_2(V_1,V_2)\big)-\phi_2\big(\mu_1(V_1),V_2\big)+(-1)^{|V_1|\,|V_2|}\phi_2\big(\mu_1(V_2),V_1\big)
				\\
				&\hspace{1cm}-\tilde\mu_1(\phi_2(V_1,V_2))~,
				\\
				&\tilde\mu_3(\phi_1(V_1),\phi_1(V_2),\phi_1(V_3))\ =\ (-1)^{|V_2|\,|V_3|}\phi_2\big(\mu_2(V_1,V_3),V_2\big)
				\\
				&\hspace{1cm}+(-1)^{|V_1|(|V_2|+|V_3|)+1}\phi_2\big(\mu_2(V_2,V_3),V_1)+(-1)^{|V_1|\,|V_2|+1}\phi_3\big(\mu_1(V_2),V_1,V_3\big)
				\\
				&\hspace{1cm}+(-1)^{(|V_1|+|V_2|)|V_3|} \phi_3\big(\mu_1(V_3),V_1,V_2\big)+(-1)^{|V_1|}\tilde\mu_2(\phi_1(V_1),\phi_2(V_2,V_3))
				\\
				&\hspace{1cm}
				-(-1)^{(|V_1|+1)|V_2|}\tilde\mu_2(\phi_1(V_2),\phi_2(V_1,V_3))+(-1)^{(|V_1|+|V_2|+1)|V_3|}\tilde\mu_2(\phi_1(V_3),\phi_2(V_1,V_2))
				\\
				&\hspace{1cm}
				+\phi_1\big(\mu_3(V_1,V_2,V_3)\big)-\phi_2\big(\mu_2(V_1,V_2),V_3\big)+\phi_3\big(\mu_1(V_1),V_2,V_3\big)
				\\
				&\hspace{1cm}-\tilde\mu_1(\phi_3(V_1,V_2,V_3))
				\\
				&\hspace{4cm}\vdots
			\end{aligned}
		\end{equation}
		for all homogeneous $V_{1,2,3}\in\frL$. If $\phi_{i>1}=0$, we call the morphism $\phi$ \uline{strict}.
		
		\paragraph{Minimal model theorem.}
		We note that the component $\phi_1$ in a morphism of $L_\infty$-algebras $\phi:\frL\rightarrow\tilde\frL$ is a cochain map between the cochain complexes contained in $\frL$ and $\tilde\frL$, and hence descends to the corresponding cohomologies. If $\phi_1$ induces an isomorphism between these cohomologies, we call $\phi$ a \uline{quasi-isomorphism}, extending the corresponding notion from cochain complexes.
		
		There are specialisations to cyclic morphisms, which we suppress here; for more details, see again the review in~\cite{Jurco:2018sby,Jurco:2019bvp}.
		
		By the minimal model theorem~\cite{kadeishvili1982algebraic,Kajiura:2003ax}, any $L_\infty$-algebra $\frL$ induces an $L_\infty$-algebra structure $\frL^\circ$ on its cohomology such that $\frL$ and $\frL^\circ$ are quasi-isomorphic. The $L_\infty$-algebra $\frL^\circ$ is called a \uline{minimal model} for $\frL$, and it is unique up to $L_\infty$-algebra isomorphisms, i.e.~a morphism of $L_\infty$-algebras with $\phi_1$ strictly invertible.
		
		\paragraph{Semi-classical equivalence of field theories.}
		As remarked above, the Batalin--Vilkovisky complex of a field theory is the Chevalley--Eilenberg algebra of an $L_\infty$-algebra, and a minimal model of this $L_\infty$-algebra describes the tree-level S-matrix of the field theory in a particular basis\footnote{Moreover, the computation of this minimal model via homological perturbation theory corresponds precisely to the tree-level Feynman diagram expansion.}, see e.g.~\cite{Jurco:2018sby}.
		
		Two field theories whose Batalin--Vilkovisky complexes correspond to quasi-isomorphic $L_\infty$-algebras have isomorphic minimal models, and hence equivalent S-matrices. Such theories are called \uline{semi-classically equivalent}, and this is the appropriate notion of equivalence of classical field theories.
		
		\paragraph{Equivalence of higher Chern--Simons theories.}
		Consider a $d$-dimensional compact oriented manifold $M$ without boundary together with two quasi-isomorphic $(d-2)$-term $L_\infty$-algebras $\frL$ and $\tilde\frL$ with $\phi:\frL\rightarrow\tilde\frL$ a quasi-isomorphism. We note that $\phi$ induces a quasi-isomorphism $\hat\phi$ between $\Omega^\bullet(M,\frL)$ and $\Omega^\bullet(M,\tilde\frL)$ by
		\begin{equation}
			\hat\phi_i(\omega_1\otimes V_1,\ldots,\omega_i\otimes V_i)\ \coloneqq\ (-1)^{(i+1)\sum_{j=1}^{i}|\omega_j|}\omega_1\wedge\ldots\wedge\omega_i\otimes\phi_i(V_1,\ldots,V_i)
		\end{equation}
		for all homogeneous $\omega_{1,\ldots,i}\in\Omega^\bullet(M)$ and $V_{1,\ldots,i}\in\frL$. It is easily seen that if the $\phi_i$ satisfy the appropriate intertwining relations~\eqref{eq:intertwining_relations} between $\frL$ and $\tilde\frL$, then the $\hat\phi_i$ satisfy the appropriate relations between $\Omega^\bullet(M,\frL)$ and $\Omega^\bullet(M,\tilde\frL)$.\footnote{Much more abstractly, tensor products generically respect chain homotopy equivalences.} We thus obtain the following theorem.
		
		\begin{theorem}\label{thm:equivalence}
			Quasi-isomorphic $(d-2)$-term $L_\infty$-algebras yield semi-classically equivalent $d$-dimensional higher Chern--Simons theories.
		\end{theorem}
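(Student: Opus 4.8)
The plan is to reduce the assertion to the general principle, recalled above, that field theories with quasi-isomorphic Batalin--Vilkovisky $L_\infty$-algebras are semi-classically equivalent, together with the quasi-isomorphism $\hat\phi$ already constructed. The first step is to identify the Batalin--Vilkovisky $L_\infty$-algebra of $d$-dimensional higher Chern--Simons theory with gauge $L_\infty$-algebra $\frL$. By the general Batalin--Vilkovisky/$L_\infty$ dictionary, the BV complex of the homotopy Maurer--Cartan theory of a cyclic $L_\infty$-algebra is the Chevalley--Eilenberg algebra of that very cyclic $L_\infty$-algebra, the fields, ghosts, higher ghosts, antifields and antighosts being exactly its homogeneous components and the BRST operator being the dual of the higher products. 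Since $M$ is compact, oriented and without boundary, Poincaré duality combines the fibrewise pairing $\inner{-}{-}_\frL$ with integration over $M$ into the pairing $\inner{-}{-}^{\Omega^\bullet(M,\frL)}$ of the correct degree, and~\eqref{eq:higherCSAction} is, by construction, precisely the homotopy Maurer--Cartan action of the cyclic $L_\infty$-algebra $\big(\Omega^\bullet(M,\frL),\mu^{\Omega^\bullet(M,\frL)}_i,\inner{-}{-}^{\Omega^\bullet(M,\frL)}\big)$, with~\eqref{eq:gaugeTransformationsHigherCSTheory} and~\eqref{eq:higherGaugeTransformationsHigherCSTheory} the associated BRST transformations. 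Hence the BV $L_\infty$-algebra of the theory with gauge $L_\infty$-algebra $\frL$ is $\Omega^\bullet(M,\frL)$, and likewise that of the theory with gauge $L_\infty$-algebra $\tilde\frL$ is $\Omega^\bullet(M,\tilde\frL)$.

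The second step is the transfer along $\hat\phi$. The morphism $\hat\phi:\Omega^\bullet(M,\frL)\rightarrow\Omega^\bullet(M,\tilde\frL)$ constructed above satisfies the intertwining relations~\eqref{eq:intertwining_relations}, and its linear part acts as the identity on $\Omega^\bullet(M)$ tensored with $\phi_1$; by the Künneth theorem this induces an isomorphism on cohomology since $\phi_1$ does. Thus $\hat\phi$ is a quasi-isomorphism between the two BV $L_\infty$-algebras. If one additionally wishes to keep track of the cyclic, i.e.\ symplectic, structure, one first replaces $\phi$ by a cyclic quasi-isomorphism---possible because cyclic $L_\infty$-algebras admit cyclic minimal models and quasi-isomorphisms between them may be chosen cyclic---and observes that tensoring a cyclic morphism of the gauge $L_\infty$-algebras with the de~Rham pairing on $\Omega^\bullet(M)$ again yields a cyclic morphism, so that $\hat\phi$ may be taken cyclic; the equivalence then respects the action functional and not merely the underlying complex.

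The third and final step is to invoke the conclusion of this subsection: two field theories whose BV complexes are quasi-isomorphic as $L_\infty$-algebras have, by the minimal model theorem, isomorphic minimal models and hence equivalent tree-level S-matrices, i.e.\ are semi-classically equivalent. Applying this to $\Omega^\bullet(M,\frL)$ and $\Omega^\bullet(M,\tilde\frL)$ yields the claim. The step I expect to require the most care is the first one, namely the sign- and degree-bookkeeping ensuring that the BV data of~\eqref{eq:higherCSAction} genuinely reassemble into $\Omega^\bullet(M,\frL)$ rather than a twist thereof, and---in the cyclic refinement---checking that tensoring with the Poincaré pairing preserves cyclicity of morphisms; everything after that is a direct appeal to the results recalled above.
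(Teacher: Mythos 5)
Your proposal is correct and follows essentially the same route as the paper: identify the Batalin--Vilkovisky $L_\infty$-algebra of the theory with $\Omega^\bullet(M,\frL)$ (since the action is the homotopy Maurer--Cartan action of that cyclic $L_\infty$-algebra), note that $\phi$ induces the quasi-isomorphism $\hat\phi$ between $\Omega^\bullet(M,\frL)$ and $\Omega^\bullet(M,\tilde\frL)$, and then invoke the minimal-model/S-matrix characterisation of semi-classical equivalence. Your added remarks (the K\"unneth argument for $\hat\phi_1$ and the cyclic refinement of the quasi-isomorphism) only spell out details the paper leaves implicit or explicitly suppresses.
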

		
		This theorem has an important corollary.
		
		\begin{corollary}\label{cor:triviality}
			Consider a $(d-2)$-term gauge $L_\infty$-algebra which is quasi-isomorphic to the trivial $L_\infty$-algebra\footnote{Here, trivial means that its cohomology is trivial}. Then, the corresponding $d$-dimensional higher Chern--Simons theory is semi-classically equivalent to the trivial theory.
		\end{corollary}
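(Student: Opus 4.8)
The plan is to reduce this immediately to \cref{thm:equivalence}, the only extra input being the identification of the higher Chern--Simons theory associated with the zero $L_\infty$-algebra with the trivial theory. Write $\mathbf 0$ for the $L_\infty$-algebra on the zero graded vector space. It is trivially a cyclic $(d-2)$-term $L_\infty$-algebra --- it carries the unique (empty) non-degenerate pairing, of any degree we please --- so it is an admissible gauge $L_\infty$-algebra for the construction of \ref{ssec:higher_CS}, and the associated data is vacuous: $\Omega^\bullet(M,\mathbf 0)=\mathbf 0$, the gauge potential $a$ lies in the zero space, the action \eqref{eq:higherCSAction} is the zero functional, and there are neither curvatures nor (higher) gauge transformations. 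By definition this is the trivial theory, whose Batalin--Vilkovisky complex is $\mathsf{CE}(\mathbf 0)$ and whose minimal model is $\mathbf 0$.

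Next I would observe that the hypothesis forces $\frL$ to be quasi-isomorphic to $\mathbf 0$: by assumption $\frL$ is quasi-isomorphic to some $(d-2)$-term $L_\infty$-algebra with vanishing cohomology, and by the minimal model theorem any such $L_\infty$-algebra has $\mathbf 0$ as its minimal model (unique up to isomorphism); since every $L_\infty$-algebra is quasi-isomorphic to its minimal model, we obtain $\frL\simeq\mathbf 0$ as $(d-2)$-term $L_\infty$-algebras. Applying \cref{thm:equivalence} to the pair $(\frL,\mathbf 0)$ then yields a semi-classical equivalence between the $d$-dimensional higher Chern--Simons theory of $\frL$ and that of $\mathbf 0$, which by the first step is the trivial theory. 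This completes the argument.

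The only point requiring a modicum of care --- and the main (very minor) obstacle --- is to check that \cref{thm:equivalence} genuinely covers the degenerate case where one of the two $L_\infty$-algebras is $\mathbf 0$, i.e.\ that $\mathbf 0$ qualifies as a cyclic $(d-2)$-term gauge $L_\infty$-algebra and that the quasi-isomorphism we invoke is of the kind used in its proof. Since the semi-classical equivalence there is produced from the \emph{induced} quasi-isomorphism $\hat\phi$ of the Batalin--Vilkovisky $L_\infty$-algebras $\Omega^\bullet(M,\frL)$ and $\Omega^\bullet(M,\tilde\frL)$, and does not require the quasi-isomorphism to respect the cyclic structure, no real difficulty arises. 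Alternatively, one can bypass the theorem entirely: by the K\"unneth theorem the $\mu_1^{\Omega^\bullet(M,\frL)}$-cohomology of $\Omega^\bullet(M,\frL)$ is $H^\bullet(M)\otimes H^\bullet_{\mu_1}(\frL)=0$, so $\Omega^\bullet(M,\frL)$ is itself acyclic and hence quasi-isomorphic to $\mathbf 0=\Omega^\bullet(M,\mathbf 0)$, the Batalin--Vilkovisky $L_\infty$-algebra of the trivial theory; semi-classical equivalence, being quasi-isomorphism of Batalin--Vilkovisky complexes, then follows at once.
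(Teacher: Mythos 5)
Your argument is correct and is essentially the paper's: the corollary is stated there as a direct consequence of \cref{thm:equivalence}, obtained exactly as you do by noting that trivial cohomology forces the minimal model (hence a quasi-isomorphic representative) to be the zero $L_\infty$-algebra, whose higher Chern--Simons theory is vacuous. Your closing Künneth/acyclicity remark is a harmless alternative route, but the main deduction matches the paper's intended proof.
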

		
		In other words, whilst there are non-trivial field theories with gauge $L_\infty$-algebras with trivial cohomology (the higher Stueckelberg model discussed in~\cite{Borsten:2024gox} is a very simple such an example), higher Chern--Simons theories with such a gauge $L_\infty$-algebras are always semi-classically trivialisable.
		
		\subsection{Problems with this formulation}\label{ssec:problems}
		
		In the example of four-dimensional Chern--Simons theory from \cref{ssec:higher_CS}, we can see two features of the kinematical data that are undesirable:
		\begin{enumerate}[(i)]
			\itemsep-1pt
			\item Equation~\eqref{eq:gauge_trafo_not_closing} shows that the commutator of two gauge transformations is not a gauge transformation unless $\mu_3(F,\alpha',\alpha)=0$ for all $\alpha,\alpha'\in \Omega^0(M)\otimes\frL_0$.
			\item Equation~\eqref{eq:higher_gauge_trafo_not_closing} shows that gauge transformations related by a higher gauge transformations have different image unless $\mu_2(F,\sigma)=0$ for all $\sigma\in\Omega^0(M)\otimes\frL_{-1}$.
		\end{enumerate}
		Both problems signal that the corresponding BRST-complex, encoding the kinematic data together with gauge and higher gauge transformations is open and for generic 2-term $L_\infty$-algebras only closes for $F=0$. This condition is the so-called \uline{fake-flatness constraint}, which is ubiquitous in the literature on higher principal bundles. It has, however, significant problems in physical applications, see the discussion in~\cite{Borsten:2024gox}. As explained in the introduction, the argument that in the case of higher Chern--Simons theory, $F=0$ is anyway an equation of motion, so one could restrict the kinematical data to those $a=A+B$ with $F=0$ is unsatisfactory.
		
		These problems rather evidently persist in higher-dimensional Chern--Simons theories. In particular, the analogue of the fake curvature condition is that all higher curvature forms have to vanish except for the top component.
		
		The above problems were encountered and discussed before in other forms of higher gauge theories, and there is a common solution to these issues: to modify the definition of curvature, which goes hand-in-hand with a modification of the definition of (higher) gauge transformations. Particular types of such modifications were called \uline{Chern--Simons terms} in~\cite{Sati:2008eg}, and the general suitable modifications were called an \uline{adjustment} in~\cite{Saemann:2019dsl}. We discuss adjustments extensively in the following section.
		
		\section{Obstruction to fully adjusted higher Chern--Simons theory}
		
		Adjustments of $L_\infty$-algebras address the issues with the higher connections and their gauge symmetries introduced in \cref{ssec:higher_CS}. Essentially, they ensure that the $L_\infty$-algebra corresponding to a Batalin--Vilkovisky-action can be consistently truncated to an $L_\infty$-algebra in degrees $\leq 1$, describing the kinematic data (degree $1$) and the higher Lie algebra of higher gauge transformations (degrees $\leq 0$) together with its action on the kinematic data.
		
		As we show in the following, however, there are no adjusted $L_\infty$-algebras that would be suitable for a construction of non-trivial higher Chern--Simons theory.
		
		\subsection{Unadjusted and adjusted connections}\label{ssec:un_adj_connections}
		
		Let $M$ be a $d$-dimensional compact oriented manifold without boundary with de~Rham complex $\Omega^\bullet(M)$.
		
		\paragraph{Connections as morphisms.}
		Flat connections taking values in an $L_\infty$-algebra $\frL$ can be understood as morphisms of differential graded commutative algebras $\sfCE(\frL)\rightarrow\Omega^\bullet(M)$. In order to describe general connections, we introduce the \uline{Weil algebra} $\sfW(\frL)$ of an $L_\infty$-algebra $\frL$, which is the Chevalley--Eilenberg algebra $\sfCE(T[1]\frL)$ of the \uline{inner derivation $L_\infty$-algebra} $T[1]\frL=\frL\oplus\frL[1]$. Concretely,
		\begin{subequations}
			\begin{equation}
				\sfW(\frL)\ \coloneqq\ \big(\bigodot{}^\bullet((\frL[1])^*\oplus(\frL[2])^*),\sfd_\sfW\big)~,
			\end{equation}
			and the differential $\sfd_\sfW$ is defined by
			\begin{equation}\label{eq:WeilDifferential}
				\sfd_\sfW|_{(\frL[1])^*}\ \coloneqq\ \sfd_\sfCE+\sigma
				\eand
				\sfd_\sfW\circ\sigma+\sigma\circ\sfd_\sfCE\ =\ 0
			\end{equation}
		\end{subequations}
		with $\sfd_\sfCE$ the differential in $\sfCE(\frL)$, encoding the higher products of $\frL$, and with $\sigma:(\frL[1])^*\rightarrow(\frL[2])^*$ the evident \uline{shift isomorphism}~\cite{Sati:2008eg}. Moreover, there is a canonical projection
		\begin{equation}\label{eq:projection}
			\sfW(\frL)\ \rightarrow\ \sfCE(\frL)~.
		\end{equation}
		General connections are then morphisms of differential graded commutative algebras
		\begin{equation}\label{eq:morphism_Weil}
			\caA\,:\,\sfW(\frL)\ \rightarrow\ \Omega^\bullet(M)~.
		\end{equation}
		
		\paragraph{Example: 2-term $L_\infty$-algebras.}
		As an instructive example, consider a 2-term $L_\infty$-algebra $\frL=\frL_{-1}\oplus\frL_0$. Then, $(\frL[1])^*\oplus(\frL[2])^*$ is concentrated in degrees $k=1,2,3$ with
		\begin{subequations}\label{eq:Weil_2_term}
			\begin{equation}
				((\frL[1])^*\oplus(\frL[2])^*)_k\ =\
				\begin{cases}
					(\frL_0)^* & \efor k\ =\ 1
					\\
					(\frL_{-1})^*\oplus(\frL_0)^* & \efor k\ =\ 2
					\\
					(\frL_{-1})^* & \efor k\ =\ 3
				\end{cases}~.
			\end{equation}
			Consequently, the Weil algebra $\sfW(\frL)$ is generated by elements $\tte^\alpha\in\sfW(\frL)$ of degree $1$, elements $\tte^a\in\sfW(\frL)$ and $\hat\tte^\alpha\coloneqq\sigma(\tte^\alpha)\in\sfW(\frL)$ of degree $2$, and elements $\hat\tte^a\coloneqq\sigma(\tte^a)$ of degree $3$ with $\alpha,\beta,\ldots=1,\ldots,\dim(\frL_0)$ and $a,b,\ldots=1,\ldots,\dim(\frL_{-1})$. The action of the differential on these generators is then
			\begin{equation}\label{eq:2TermWeilDifferential}
				\begin{aligned}
					\sfd_\sfW\tte^\alpha\ &\ =-\tfrac12f_{\beta\gamma}{}^\alpha\tte^\beta\tte^\gamma-f_a{}^\alpha\tte^a+\hat\tte^\alpha~,
					\\
					\sfd_\sfW \tte^a\ &=\ -f_{\alpha b}{}^a\tte^\alpha\tte^b+\tfrac1{3!}f_{\alpha\beta\gamma}{}^a\tte^\alpha\tte^\beta\tte^\gamma+\hat\tte^a~,
					\\
					\sfd_\sfW\hat\tte^\alpha\ &=\ -f_{\beta\gamma}{}^\alpha\tte^\beta\hat\tte^\gamma+f_a{}^\alpha\hat\tte^a~,
					\\
					\sfd_\sfW\hat\tte^a\ &=\ -\tfrac12f_{\alpha\beta\gamma}{}^a\tte^\alpha\tte^\beta\hat\tte^\gamma+f_{\alpha b}{}^a\hat\tte^\alpha \tte^b-f_{\alpha b}{}^a\tte^\alpha\hat\tte^b~.
				\end{aligned}
			\end{equation}
		\end{subequations}
		with $f_{\beta\gamma}{}^\alpha$, $f_a{}^\alpha$, $f_{\alpha b}{}^a$, and, $f_{\alpha\beta\gamma}{}^a$ the structure constants of $\sfd_\sfCE$ given in~\eqref{eq:2TermCEDifferential}. A morphism from $\sfW(\frL)$ to $\Omega^\bullet(M)$ is then given by an assignment
		\begin{equation}
			\begin{gathered}
				\tte^\alpha\ \mapsto\ A^\alpha\ \in\ \Omega^1(M)~,~~~
				\tte^a\ \mapsto\ B^a\ \in\ \Omega^2(M)~,
				\\
				\hat\tte^\alpha\ \mapsto\ F^\alpha\ \in\ \Omega^2(M)
				\ewith
				F^\alpha\ =\ \rmd A^\alpha+\tfrac12f_{\beta\gamma}{}^\alpha A^\beta\wedge A+f_a{}^\alpha B^a\ \in\ \Omega^2(M)~,
				\\
				\hat\tte^a\ \mapsto\ H^a\ \in\ \Omega^3(M)
				\ewith
				H^a\ =\ \rmd B^a+f_{\alpha b}{}^aA^\alpha\wedge B^b-\tfrac1{3!}f_{\alpha\beta\gamma}{}^aA^\alpha\wedge A^\beta\wedge A^\gamma~.
			\end{gathered}
		\end{equation}
		This is~\eqref{eq:4d_curvatures} when spelled out using the basis vectors $\ttE_\alpha\in\frL_0$ and $\ttE_a\in\frL_{-1}$, that is, $A=A^\alpha\ttE_\alpha$ and $B=B^a\ttE_a$ as well as~\eqref{eq:2TermHigherProductsFromCE}. Note that the expressions of the degree $2$ and $3$ elements $F^\alpha$ and $H^a$ follow because of the first two equations in~\eqref{eq:2TermWeilDifferential}. Likewise, the Bianchi identities~\eqref{eq:4d_Bianchi} can be recovered from the last two equations of~\eqref{eq:2TermWeilDifferential}.
		
		\paragraph{Adjustments.}
		As observed in~\cite{Saemann:2019dsl}, the problems with the gauge structure can be resolved by applying an automorphism on the Weil algebra\footnote{In string-like $L_\infty$-algebras, this is the same as a Chern--Simons term, as introduced in~\cite{Sati:2008eg,Fiorenza:2011jr}. See also~\cite{Borsten:2024gox} and references therein for further details.} which preserves the image of the projection~\eqref{eq:projection}. The functions parametrising suitable such automorphisms are called \uline{adjustments}~\cite{Saemann:2019dsl,Kim:2019owc,Borsten:2021ljb,Rist:2022hci}.
		
		Concretely, the automorphism changes the generators $\hat\tte^A$ of $\sfW(\frL)$ which span $(\frL[2])^*$ and which are mapped to the curvature form components by adding particular monomials in the generators of $\sfW(\frL)$ given in terms of certain adjustment structure constants. This changes the definition of curvature, which in turn modifies the action of gauge and higher gauge transformations. One can then explicitly compute the requirements for commutators of gauge transformations to close and higher gauge transformations to preserve the images of gauge transformations, which amount to the conditions on the adjustment structure constants, the \uline{adjustment conditions}. An $L_\infty$-algebra, together with adjustment structure constants is called an \uline{adjusted $L_\infty$-algebra}, and we have the following proposition.
		
		\begin{proposition}(\cite[Proposition 2.1]{Gagliardo:2025oio})\label{prop:adjustment_L_infty}
			Given an $L_\infty$-algebra $\frL$, an automorphism $\phi:\sfW(\frL)\rightarrow\sfW(\frL)$ on its Weil algebra that covers the identity on the Chevalley--Eilenberg algebra (i.e.~$\phi(\tte^A)=\tte^A$) turns $\frL$ into an adjusted $L_\infty$-algebra if and only if
			\begin{equation}\label{eq:diff_form}
				\sfd_\sfW(\phi(\hat\tte^A))\ =\ \sum_{i\geq1}\frac1{i!}\Big(f_{B_1\cdots B_i}{}^A\phi(\hat\tte^{B_1})\cdots\phi(\hat\tte^{B_i})+g_{B_0B_1\cdots B_i}{}^A\tte^{B_0}\phi(\hat\tte^{B_1})\cdots\phi(\hat\tte^{B_i})\Big)
			\end{equation}
			for all shifted generators $\hat\tte^A=\sigma(\tte^A)$, where $f_{B_1\cdots B_i}{}^A$ and $g_{B_0B_1\ldots B_i}{}^A$ are structure constants with $g_{B_0B_1\cdots B_i}{}^A$ identically vanishing unless $|\tte^{B_0}|=1$.
		\end{proposition}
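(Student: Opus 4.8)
The plan is to prove the equivalence by moving between three descriptions of the same data: (a) the generator-level identity displayed in the statement; (b) the Bianchi identities obeyed by the \emph{adjusted} curvature forms $\caA(\phi(\hat\tte^A))$ associated with an arbitrary connection $\caA\colon(\sfW(\frL),\sfd_\sfW)\to(\Omega^\bullet(M),\rmd)$; and (c) the closure of the (higher) gauge transformations, which is what it means for $\frL$ together with this automorphism to be an adjusted $L_\infty$-algebra. I would start from the observation that, since $\phi$ fixes every generator $\tte^A$, the collection $\{\tte^A\}\cup\{\phi(\hat\tte^A)\}$ is again a free generating set of the graded-commutative algebra $\sfW(\frL)$; so $\phi$ is precisely the degree-preserving redefinition of curvature generators $\hat\tte^A\mapsto\phi(\hat\tte^A)$, and for any connection $\caA$ the adjusted curvatures are $\caA(\phi(\hat\tte^A))$, whose Bianchi identities are $\rmd\,\caA(\phi(\hat\tte^A))=\caA(\sfd_\sfW\phi(\hat\tte^A))$ because $\caA$ is a morphism of differential graded commutative algebras.

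For the ``if'' direction I would assume the displayed equation. Then the right-hand side of the Bianchi identity is a universal polynomial in the adjusted curvatures alone, carrying at most a single factor of a connection one-form $\caA(\tte^{B_0})$ with $|\tte^{B_0}|=1$; in particular no uncurved generator of degree $\geq 2$ (no $2$-form potential $B$, no higher potentials) and no higher power of the one-form occurs. This is exactly the shape of Bianchi identity under which the standard computation --- the mechanism recalled in the paragraph preceding the proposition, and diagnosed in its unadjusted guise in \cref{ssec:problems} --- shows that commutators of gauge transformations close and that higher gauge transformations preserve the images of lower gauge transformations. Hence $\frL$ becomes an adjusted $L_\infty$-algebra.

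For the ``only if'' direction I would assume $\frL$ is adjusted via $\phi$ and run that same closure computation in reverse: each obstruction to closure is matched with one of the ``forbidden'' monomials in $\sfd_\sfW\phi(\hat\tte^A)$ --- a monomial containing an uncurved generator of degree $\geq 2$, or two or more connection one-forms --- so all such monomials must vanish, and $\rmd\,\caA(\phi(\hat\tte^A))$ is of the universal form for every connection $\caA$. Since $\sfW(\frL)$ is free and the components of a general connection can be taken to be generic differential forms, an identity among forms that holds for all connections lifts to the corresponding identity among the generators of $\sfW(\frL)$. Reading off the coefficients then yields the displayed equation, with the asserted degree restriction on $g_{B_0B_1\cdots B_i}{}^A$ being exactly the exclusion of the forbidden monomials.

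The hard part will be establishing the equivalence ``universal Bianchi identities $\Leftrightarrow$ closure of the gauge structure'' cleanly and for an $L_\infty$-algebra with arbitrarily many terms, rather than only in the illustrative $2$-term case: this requires deriving the adjusted (higher) gauge transformation rules from the Weil-algebra data through the inner-derivation (flat-homotopy) description, keeping track of all Koszul signs, and checking that each potential failure of closure is in bijection with exactly one forbidden term of $\sfd_\sfW\phi(\hat\tte^A)$, so that the conditions obtained are neither too strong nor too weak. A subsidiary technical point is to verify that the hypothesis ``$\phi$ is an automorphism'' forces the monomials added to $\hat\tte^A$ to be of sufficiently low order for $\phi$ to be invertible, so that $\{\tte^A\}\cup\{\phi(\hat\tte^A)\}$ genuinely is a free generating set and the coefficient comparison in the ``only if'' step is legitimate.
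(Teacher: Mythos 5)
There is a genuine gap: the entire content of the proposition is the equivalence between the stated form of $\sfd_\sfW(\phi(\hat\tte^A))$ and the closure properties defining an adjusted $L_\infty$-algebra, and your proposal does not establish this equivalence --- it appeals to ``the standard computation'' for closure and then explicitly flags exactly this step as the hard part still to be done. Moreover, your framing of that step through the Bianchi identities $\rmd\,\caA(\phi(\hat\tte^A))=\caA(\sfd_\sfW\phi(\hat\tte^A))$ for an ordinary connection $\caA$ only probes the ghost-number-zero content of the identity: closure of gauge transformations and the compatibility of higher gauge transformations live at ghost number two and higher, and a shape constraint on the curvature Bianchi identities alone does not, by itself, encode them. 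The level-by-level homotopy computation you sketch (derive all adjusted higher gauge transformations from partially flat homotopies, match each failure of closure bijectively with a forbidden monomial, for an $L_\infty$-algebra with arbitrarily many terms) is precisely the unbounded computation one wants to avoid, and your proposal gives no mechanism for carrying it out or organising it.

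The paper's proof (\cref{app:proof}) sidesteps this by packaging all closure conditions at once into the adjusted BRST complex: one extends $\caA$ to a superfield-valued morphism carrying ghosts $c_i$ for the connection generators and $d_i$ for the curvature generators, quotients by the ideal generated by the $d_i$, and observes that consistency with $Q_{\rm BRST}^2=0$ is equivalent to the single condition $\sfp(Q_{\rm BRST}\,\phi(\hat\tte^A))=0$, where $\sfp$ projects onto ghost number greater than one and sets the $d_i$ to zero. A short counting argument --- a monomial maps to ghost number greater than one exactly when it contains an uncurved generator of degree at least two or at least two degree-one generators --- then shows that this condition holds if and only if $\sfd_\sfW(\phi(\hat\tte^A))$ has the form~\eqref{eq:diff_form}, with the restriction $|\tte^{B_0}|=1$ on the $g$-terms. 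If you want to complete your route, you would either have to reproduce this ghost-number bookkeeping (at which point you have rediscovered the paper's argument) or genuinely carry out the all-orders closure computation you deferred; as written, the proposal asserts the key equivalence rather than proving it. Your subsidiary point about $\{\tte^A\}\cup\{\phi(\hat\tte^A)\}$ being a free generating set is fine and unproblematic, but it does not touch the main difficulty.
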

		
		The proof of this proposition is summarised in \cref{app:proof}. The resulting connections with modified curvatures and (higher) gauge transformations are called \uline{adjusted connections}.
		
		\paragraph{Example: 2-term $L_\infty$-algebras.}
		Consider a 2-term $L_\infty$-algebra $\frL=\frL_{-1}\oplus \frL_0$ with corresponding Weil algebra~\eqref{eq:Weil_2_term}. Besides a trivial rescaling of the generators $\hat\tte^\alpha$, the most general deformation is given by
		\begin{subequations}
			\begin{equation}
				\tilde{\hat\tte}^a\ \coloneqq\ \phi(\hat\tte^a)\ \coloneqq\ \hat\tte^a+\kappa_{\alpha\beta}{}^a\tte^\alpha\hat\tte^\beta~,
			\end{equation}
			where the structure constants $\kappa_{\alpha\beta}{}^a$ define a bilinear map
			\begin{equation}
				\kappa\,:\,\frL_0\times\frL_0\ \rightarrow\ \frL_{-1}~,
			\end{equation}
		\end{subequations}
		called the \uline{adjustment}. Whilst the 2-form curvature $F$ and the gauge transformation of the 1-form gauge potential $A$ remain unchanged, see \cref{ssec:higher_CS}, the curvature 3-form $H$ and the gauge transformation for 2-form gauge potential $B$ become~\cite{Saemann:2019dsl,Rist:2022hci}
		\begin{subequations}\label{eq:kinematicChangesDueToAdjustment}
			\begin{equation}
				\begin{aligned}
					H\ &=\ \rmd B+\mu_2(A,B)-\tfrac1{3!}\mu_3(A,A,A)-\kappa(A,F)~,
					\\
					\delta B\ &=\ \rmd\lambda+\mu_2(A,\lambda)+\mu_2(B,\alpha)+\tfrac12\mu_3(A,A,\alpha)+\kappa(\alpha,F)~.
				\end{aligned}
			\end{equation}
			Furthermore, higher gauge transformations are not modified; this is easily understood as any modification has to be proportional to $F$ and hence contain a 2-form. The Bianchi identities become
			\begin{equation}
				\begin{aligned}
					\rmd F+\mu_2(A,F)-\mu_1(\kappa(A,F))-\mu_1(H)\ &=\ 0~,
					\\
					\rmd H+\mu_2(A,H)-\kappa(A,\mu_1(H))+\kappa(F,F)\ &=\ 0~.
				\end{aligned}
			\end{equation}
		\end{subequations}
		
		The adjustment conditions arising from demanding that the commutator of two gauge transformations closes and that higher gauge transformations preserve the image of gauge transformations read as
		\begin{equation}\label{eq:adjustment_conditions_Lie_2}
			\begin{aligned}
				\kappa(\mu_1(Y),X)\ &=\ -\mu_2(X,Y)~,
				\\
				\kappa(\mu_2(X_1,X_2),X_3)\ &=\ \kappa(X_1,\mu_2(X_2,X_3)-\mu_1(\kappa(X_2,X_3)))
				\\
				&\hspace{1cm}-\kappa(X_2,\mu_2(X_1,X_3)-\mu_1(\kappa(X_1,X_3)))
				\\
				&\hspace{1cm}+\mu_2(X_1,\kappa(X_2,X_3))-\mu_2(X_2,\kappa(X_1,X_3))+\mu_3(X_1,X_2,X_3)
			\end{aligned}
		\end{equation}
		for all $Y\in\frL_{-1}$ and $X,X_{1,2,3}\in\frL_0$. This is a truncation of the adjustment for 3-term $L_\infty$-algebras discussed in~\cite{Gagliardo:2025oio}; see there for details (as well as~\cite{Saemann:2019dsl,Rist:2022hci} for the strict case).
		
		\paragraph{Example: adjusted string Lie 2-algebra.}
		For example, in the case of the string Lie 2-algebra for a metric Lie algebra $(\frg,[-,-],\inner{-}{-})$ discussed in \cref{ssec:cyclic_L_infty}, the $\inner{-}{-}$ provides an adjustment
		\begin{equation}\label{eq:adjustment_skeletal_string}
			\kappa(X_1,X_2)\ \coloneqq\ \inner{X_1}{X_2}
		\end{equation}
		for all $X_{1,2}\in\frL_0=\frg$. In this case, the adjustment changes the 3-form curvature according to
		\begin{subequations}\label{eq:string_field_strength}
			\begin{equation}
				H\ =\ \rmd B-\tfrac1{3!}\inner{A}{[A,A]}
				\quad\rightarrow\quad
				H\ =\ \rmd B\underbrace{-\tfrac1{3!}\inner{A}{[A,A]}+\kappa(A,F)}_{=\,2{\rm cs}(A)}~,
			\end{equation}
			where
			\begin{equation}
				{\rm cs}(A)\ \coloneqq\ \tfrac12\inner{A}{\rmd A}+\tfrac1{3!}\inner{A}{[A,A]}
			\end{equation}
		\end{subequations}
		is the usual Chern--Simons 3-form, which is the familiar form from heterotic supergravity~\cite{Bergshoeff:1981um,Chapline:1982ww}. The Bianchi identity becomes the well-known \uline{Green--Schwarz anomaly cancellation condition},
		\begin{equation}
			\rmd H\ =\ \inner{F}{F}~.
		\end{equation}
		
		Further physically relevant examples can be found in~\cite{Borsten:2021ljb}, where the adjustments appearing in the tensor hierarchy of gauged supergravity theories are discussed.
		
		\subsection{No adjustments for cyclic, minimal, non-Abelian \texorpdfstring{$n$}{n}-term \texorpdfstring{$L_\infty$}{L infinity}-algebras}\label{ssec:no_adjusted_cyclic_L-infty}
		
		The construction of higher Chern--Simons theories now seems to be straightforward. In particular, consider a cyclic $n$-term $L_\infty$-algebra, add an adjustment, tensor it with the de~Rham complex on an $(n+2)$-dimensional manifold, and write down the homotopy Maurer--Cartan action~\eqref{eq:higherCSAction}, modifying the action appropriately to account for the deformed curvatures. Unfortunately, there is a problem with this construction. By \cref{thm:equivalence}, we can reduce the higher gauge algebra of a higher Chern--Simons theories to its minimal model, obtaining an equivalent higher Chern--Simons theory. As we will show in the following, there are no cyclic minimal models with adjustment, except for Abelian ones with $\mu_{i>1}=0$.
		
		\paragraph{Example: Skeletal 2-term $L_\infty$-algebras.}
		Before proving the general statement, we start with the more transparent example of a skeletal cyclic 2-term $L_\infty$-algebra $\frL=\frL_{-1}\oplus\frL_0$. The first adjustment condition in~\eqref{eq:adjustment_conditions_Lie_2} reduces to $\mu_2(X,Y)=0$ for all $Y\in\frL_{-1}$ and $X\in\frL_0$. Cyclicity amounts to
		\begin{equation}
			\inner{X_1}{\mu_2(X_2,Y)}\ =\ \inner{\mu_2(X_1,X_2)}{Y}
		\end{equation}
		for all $Y\in\frL_{-1}$ and $X_{1,2}\in\frL_0$, and non-degeneracy of the inner product then implies that $\mu_2$ is trivial. The second adjustment condition in~\eqref{eq:adjustment_conditions_Lie_2} then reduces to
		\begin{equation}
			\mu_3(X_1,X_2,X_3)\ =\ 0~,
		\end{equation}
		so that all higher products $\mu_i$ on $\frL$ necessary vanish, and $\frL$ is Abelian.
		
		\paragraph{General case.} We obtain the same result when we consider the adjustment conditions of a 3-term $L_\infty$-algebra as derived in~\cite{Gagliardo:2025oio}: cyclic adjusted skeletal 3-term $L_\infty$-algebras are Abelian. Generally, we have the following result.
		
		\begin{theorem}\label{thm:only_abelian_cyclic_adjustments}
			Any cyclic adjusted skeletal $n$-term $L_\infty$-algebra with $n>1$ is Abelian, that is, all the higher products $\mu_{i>1}$ are trivial.
		\end{theorem}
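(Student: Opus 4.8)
The plan is to unpack the adjustment conditions supplied by \cref{prop:adjustment_L_infty}, specialise them to the skeletal case, and then eliminate the higher products one arity at a time by combining these conditions with the cyclicity condition and non-degeneracy of the inner product; the cases $n=2$ (carried out in \cref{ssec:no_adjusted_cyclic_L-infty}) and $n=3$ (\cite{Gagliardo:2025oio}) are the template to generalise.

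First, since $\frL$ is skeletal we have $\mu_1=0$, so neither the Chevalley--Eilenberg differential nor the Weil differential has a linear term on the shifted generators. I would expand \eqref{eq:diff_form} order by order, exactly as \eqref{eq:adjustment_conditions_Lie_2} does for $n=2$, obtaining the $n$-term adjustment conditions, and then set $\mu_1=0$ in them. I expect the lowest-order ones to force every component of $\mu_2$ that has at least one argument of negative degree to vanish --- this is the $n$-term analogue of the collapse of the first line of \eqref{eq:adjustment_conditions_Lie_2} to $\mu_2(X,Y)=0$. Heuristically: with $\mu_1=0$ there is no monomial in \eqref{eq:diff_form} capable of absorbing the corresponding ``(lower curvature) $\wedge$ (bare potential)'' obstruction in the adjusted Bianchi identities, so its coefficient --- precisely such a component of $\mu_2$ --- must be zero.

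The one component of $\mu_2$ not reached this way is $\mu_2\colon\frL_0\times\frL_0\to\frL_0$, and here I would use cyclicity exactly as in \cref{ssec:no_adjusted_cyclic_L-infty}: the cyclicity condition gives $\inner{X_1}{\mu_2(X_2,V)}=\pm\inner{\mu_2(X_1,X_2)}{V}$ for $X_1,X_2\in\frL_0$ and $V\in\frL_{-(n-1)}$, which is a non-trivial pairing since $n>1$; the left-hand side vanishes by the previous step, so non-degeneracy forces $\mu_2(X_1,X_2)=0$. Hence $\mu_2\equiv0$. Then I would induct on $i\ge3$: assuming $\mu_2=\cdots=\mu_{i-1}=0$, the order-$i$ adjustment conditions collapse --- every term containing a lower product drops out --- to relations that directly annihilate all components of $\mu_i$ with a negative-degree argument (the $i=3$, $n=2$ instance being the collapse of the second line of \eqref{eq:adjustment_conditions_Lie_2} to $\mu_3=0$); the all-degree-zero component $\mu_i\colon\frL_0\times\cdots\times\frL_0\to\frL_{2-i}$ is, for $2\le i\le n$, identified by the cyclicity condition for $\mu_i$ with a component already shown to vanish (cyclically rotating one of the $i+1$ slots into the domain produces a component with an argument in $\frL_{-(n+1-i)}\ne0$), for $i=n+1$ is annihilated directly by the collapsed top-level condition, and for $i>n+1$ simply does not exist. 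This exhausts all higher products, so $\frL$ is Abelian.

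The hard part is the bookkeeping concealed in ``the order-$i$ adjustment conditions collapse to $\dots$'': one has to write these conditions out uniformly in $n$ (the literature records them only for $n=2,3$) and then check that ``annihilated directly by an adjustment condition'' together with ``propagated by the cyclicity condition'' really does cover every component of every $\mu_i$ --- the delicate components being the degree-zero-rich ones such as $\mu_i\colon\frL_0^{\times i}\to\frL_{2-i}$ and $\mu_{n+1}\colon\frL_0^{\times(n+1)}\to\frL_{-(n-1)}$, which cyclicity links only to components with few negative-degree legs and for which one must be certain the direct conditions suffice. Keeping track of the Koszul signs throughout is the routine but unavoidable nuisance.
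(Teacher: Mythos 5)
Your outline is correct and is, at heart, the paper's own strategy: start from \cref{prop:adjustment_L_infty}, use skeletality to remove the terms that could otherwise cancel, conclude that the components of $\mu_2$ with a negative-degree entry vanish, finish $\mu_2$ with cyclicity plus non-degeneracy, and then climb up in arity. Where the two differ is precisely in the part you flag as the hard bookkeeping, and here the paper's route lets you skip it: rather than unpacking \eqref{eq:diff_form} into component adjustment conditions uniform in $n$ (the analogues of \eqref{eq:adjustment_conditions_Lie_2}), the paper stays at the level of the Weil algebra. Writing $\tilde{\hat\tte}^A=\hat\tte^A+\kappa_{IJ}{}^A\tte^I\hat\tte^J$ and computing $\sfd_\sfW\tilde{\hat\tte}^A$ as in \eqref{eq:Weil_Bianchi}, one organises the result by the number of unshifted generators as in \eqref{eq:from_Bianchi}: at each order there is the single term $f_{B_0B_1\cdots B_i}{}^A\,\tilde{\hat\tte}^{B_0}\tte^{B_1}\cdots\tte^{B_i}$ plus a remainder $K^A_i$, every term of which carries a structure constant of strictly lower arity (this is the fact your phrase ``every term containing a lower product drops out'' implicitly assumes and which does require the computation \eqref{eq:Weil_Bianchi}). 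Skeletality kills $K^A_2$, and the induction hypothesis kills the higher $K^A_i$. Since every monomial allowed on the right-hand side of \eqref{eq:diff_form} contains at most one unshifted generator, and that one must have degree one, the surviving term must vanish whenever it has two or more unshifted generators or a single one of degree greater than one. This kills \emph{all} components of $\mu_i$ for $i\geq 3$ outright --- including the degree-zero-rich components $\mu_i\colon\frL_0^{\times i}\to\frL_{2-i}$ and $\mu_{n+1}\colon\frL_0^{\times(n+1)}\to\frL_{-(n-1)}$ you were uneasy about --- and for $\mu_2$ leaves only the component $\frL_0\times\frL_0\to\frL_0$, which cyclicity and non-degeneracy remove exactly as you propose. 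So your additional cyclicity steps at arities $3\leq i\leq n$ are valid but unnecessary, and the uniform-in-$n$ component conditions never need to be written down; the only input beyond \cref{prop:adjustment_L_infty} is the monomial counting above.
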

		
		\begin{proof}
			We start from \cref{prop:adjustment_L_infty}. Up to an irrelevant isomorphism on the subalgebra generated by the $\hat\tte^A$, we can write
			\begin{equation}
				\tilde{\hat\tte}^A\ \coloneqq\ \phi(\hat\tte^A)\ \coloneqq\ \hat\tte^A+\kappa_{IJ}{}^A\tte^I{\hat\tte}^J~,
			\end{equation}
			where $I$ and $J$ are multi-indices of positive length, e.g.~$I=(A_1\ldots A_i)$ with $i>0$, and we sum over multi-indices of arbitrary length.
			
			Using this notation and~\eqref{eq:WeilDifferential}, we have
			\begin{equation}
				\sfd_\sfW\hat\tte^A\ =\ -\sigma(\sfd_\sfCE\tte^A)\ =\ -\sigma(f_I{}^A\tte^I)\ =\ f_{B\underline{I}}{}^A\hat\tte^B\tte^{\underline{I}}\ =\ f_{B\underline{I}}{}^A\tilde{\hat\tte}^B\tte^{\underline{I}}-f_{B\underline{I}}{}^A\kappa_{JK}{}^B\tte^J\tilde{\hat\tte}^K\tte^{\underline{I}}~,
			\end{equation}
			where an underline as in ${\underline{I}}$ denotes a multi-index that can be of length zero. We further compute
			\begin{equation}\label{eq:Weil_Bianchi}
				\begin{aligned}
					\sfd_\sfW\tilde{\hat\tte}^A\ &=\ \sfd_\sfW\hat\tte^A+\kappa_{IJ}{}^A\sfd_\sfW(\tte^I\tilde{\hat\tte}^J)
					\\
					&=\ \sfd_\sfW\hat\tte^A+\kappa_{B\underline{I}J}{}^Af^B_K\tte^K\tte^{\underline{I}}\tilde{\hat\tte}^J+(-1)^{|I|}\kappa_{IB\underline{J}}{}^A\tte^I(\sfd_\sfW\tilde{\hat\tte}^B)\tilde{\hat\tte}^{\underline{J}}
					\\
					&=\ f_{B\underline{I}}{}^A\tilde{\hat\tte}^B\tte^{\underline{I}}-f_{B\underline{I}}{}^A\kappa_{JK}{}^B\tte^J\tilde{\hat\tte}^K\tte^{\underline{I}}+\kappa_{B\underline{I}J}{}^Af^B_K\tte^K\tte^{\underline{I}}\tilde{\hat\tte}^J+(-1)^{|I|}\kappa_{IB\underline{J}}{}^A\tte^I(\sfd_\sfW\tilde{\hat\tte}^B)\tilde{\hat\tte}^{\underline{J}}~,
				\end{aligned}
			\end{equation}
			where $|I|$ denotes the degree of the product of generators with index contained in the multi-index $I$.
			
			We note that~\eqref{eq:Weil_Bianchi} is of the form
			\begin{equation}\label{eq:from_Bianchi}
				\sfd_\sfW\tilde{\hat\tte}^A\ =\ \sum_{i\geq1}\frac1{i!}\Big(f_{B_0B_1\cdots B_i}{}^A\tilde{\hat\tte}^{B_0}\tte^{B_1}\cdots\tte^{B_i}+K^A_i\Big)\,,
			\end{equation}
			where $K^A_i$ is an expression involving the $\kappa_{IJ}{}^A$ as well as structure constants $f_{C_1\cdots C_j}{}^B$ with $j<i$. Consider the case $i=2$. Since $\frL$ is skeletal, $f_B{}^A=0$ and hence $K^A_2=0$. Condition~\eqref{eq:diff_form} then implies that $f_{B_0B_1}{}^A=0$ for $|\tte^{B_1}|>1$, and hence $\mu_2:\frL\times\frL_i\rightarrow\frL$ vanishes for $i<0$. Together with cyclicity, it follows that $\mu_2$ vanishes altogether. Considering~\eqref{eq:from_Bianchi} for $i>3$ then shows similarly that all the higher products vanish, and, thus, $\frL$ is Abelian.
		\end{proof}
		
		\section{Half-adjusted higher Chern--Simons theories}
		
		In this section, we present a definition of higher Chern--Simons theories that circumvents the above problems. Its starting point is an interesting adjusted $L_\infty$-algebra, for example the string Lie 2-algebra, which we complete to a cyclic $L_\infty$-algebra by the shifted cotangent construction~\eqref{eq:shifted_cotangent_construction}. Not all higher gauge transformations will close, as predicted by \ref{thm:only_abelian_cyclic_adjustments}. Surprisingly, all ordinary gauge transformations do, in fact, close, and the non-closing transformations are merely the higher gauge transformations in the cotangent direction. These can be put to zero manually, and we can develop the differential cohomology describing higher principal bundles that carry such connections.
		
		\subsection{Four-dimensional case: Infinitesimal considerations}\label{ssec:four_dimensional_case_infinitesimal_considerations}
		
		First, let us focus on the four-dimensional case, as this is most instructive. Here, the gauge $L_\infty$-algebra is a general 2-term $L_\infty$-algebra.
		
		\paragraph{Half-adjusted $L_\infty$-algebra.}
		Consider a general adjusted 2-term $L_\infty$-algebra\footnote{The most instructive example to have in mind is the string Lie 2-algebra introduced in \cref{ssec:cyclic_L_infty} with $\kappa$ the Cartan--Killing form.}
		\begin{equation}
			\frL\ =\ \frL_{-1}\oplus\frL_0
			\eand
			\kappa\,:\,\frL_0\times\frL_0\ \rightarrow\ \frL_{-1}~.
		\end{equation}
		Let $\kappa_+$ (respectively, $\kappa_-$) be the symmetric (respectively, anti-symmetric) part of $\kappa$. In the following, we shall find useful the maps
		\begin{equation}\label{eq:infinitesimalUsefulMaps}
			\begin{aligned}
				\nu_2(X,Y)\ &\coloneqq\ \mu_2(X,Y)-\kappa(X,\mu_1(Y))\ \in\ \frL_{-1}~,
				\\
				\nu_{2\pm}(X,Y)\ &\coloneqq\ \mu_2(X,Y)-\kappa_\pm(X,\mu_1(Y))\ \in\ \frL_{-1}~,
				\\
				\tilde\nu_2(X_1,X_2)\ &\coloneqq\ \mu_2(X_1,X_2)-\mu_1(\kappa(X_1,X_2))\ \in\ \frL_0~,
				\\
				\tilde\nu_{2\pm}(X_1,X_2)\ &\coloneqq\ \mu_2(X_1,X_2)-\mu_1(\kappa_\pm(X_1,X_2))\ \in\ \frL_0
			\end{aligned}
		\end{equation}
		for all $Y\in\frL_{-1}$ and $X,X_{1,2}\in\frL_0$. The various identities these maps satisfy are collated in \cref{app:half-adjusted-gauge}.
		
		Next, we consider the associated cotangent $L_\infty$-algebra $T^*[1]\frL$ as explained in \cref{ssec:cyclic_L_infty}. We have,
		\begin{equation}
			(T^*[1]\frL)_k\ =\
			\begin{cases}
				\frL_{-1}\oplus(\frL_0)^* &\efor k\ =\ -1
				\\
				\frL_0\oplus(\frL_{-1})^* &\efor k\ =\ 0
			\end{cases}~.
		\end{equation}
		Furthermore, we have the non-vanishing dual maps 
		\begin{equation}\label{eq:dualMapsHalfAdjusted2Term}
			\begin{aligned}
				\mu^*_1(X^*)(Y)\ &=\ X^*(\mu_1(Y))~,
				\\
				\mu^*_2(X_1,X^*)(X_2)\ &=\ -X^*(\mu_2(X_1,X_2))~,
				\\
				\mu^*_2(X,Y^*_1)(Y_2)\ &=\ Y^*_1(\mu_2(X,Y_2))~,
				\\
				\mu^*_3(X_1,X_2,Y^*)(X_3)\ &=\ -Y^*(\mu_3(X_1,X_2,X_3))~,
				\\
				\nu^*_2(X,Y^*_1)(Y_2)\ &=\ - Y^*_1(\nu_2(X,Y_2))~,
				\\
				\tilde\nu^*_2(X_1,X^*_2)(X_3)\ &=\ -X^*_2(\tilde\nu_2(X_1,X_3))~,
				\\
				\kappa^*(X_1,Y^*)(X_2)\ &=\ -Y^*(\kappa(X_1,X_2))
			\end{aligned}
		\end{equation}
		for all $Y,Y_{1,2}\in\frL_{-1}$, $X,X_{1,2,3}\in\frL_0$, $Y^*\in(\frL_{-1})^*$, and $X^*\in(\frL_0)^*$.
		
		We call $\hat\frL$ a \uline{half-adjusted $L_\infty$-algebra} because whilst $\hat \frL$ is generically not an adjusted $L_\infty$-algebra, the sub-$L_\infty$-algebra $\frL\subseteq\hat\frL$ is.
		
		\paragraph{Field content and action.}
		In the following, let $M$ be a four-dimensional compact oriented manifold without boundary. The field content consists of $1$- and $2$-form components\footnote{Here, we slightly abuse notation by denoting forms taking values in the cotangent spaces by an asterisk, and there is no relation, for example, between $A$ and $A^*$.}
		\begin{equation}
			\begin{aligned}
				\hat A\ &=\ (A,A^*)\ \in\ \Omega^1(M)\otimes\frL_0\oplus\Omega^1(M)\otimes(\frL_{-1})^*~,
				\\
				\hat B\ &=\ (B,B^*)\ \in\ \Omega^2(M)\otimes\frL_{-1}\oplus\Omega^2(M)\otimes\frL_0^*~.
			\end{aligned}
		\end{equation}
		Furthermore, we have the curvatures
		\begin{subequations}
			\begin{equation}
				\begin{aligned}
					F\ &\coloneqq\ \rmd A+\tfrac12\mu_2(A,A)+\mu_1(B)~,
					\\
					H\ &\coloneqq\ \rmd B+\mu_2(A,B)-\tfrac1{3!}\mu_3(A,A,A)-\kappa(A,F)
				\end{aligned}
			\end{equation}
			and Bianchi identities
			\begin{equation}
				\begin{aligned}
					\rmd F+\mu_2(A,F)-\mu_1(\kappa(A,F))-\mu_1(H)\ &=\ 0~,
					\\
					\rmd H+\mu_2(A,H)+\kappa(F,F)-\kappa(A,\mu_1(H))\ &=\ 0~,
				\end{aligned}
			\end{equation}
		\end{subequations}
		which are the usual adjusted curvatures for a $\frL$-valued connection form $(A,B)$ and their Bianchi identities, see~\eqref{eq:kinematicChangesDueToAdjustment}.
		
		We combine these fields into the action
		\begin{equation}\label{eq:action_CS_4d}
			S\ \coloneqq\ \int_M\big\{A^*(H)+B^*(F)\big\}\,,
		\end{equation}
		where wedge products are implied.
		
		\paragraph{Additional curvature forms and equations of motion.}
		In order to identify the additional curvature forms $F^*$ and $H^*$ taking values in the cotangent directions of $T^*[1]\frL$, we use the evident higher generalisation of the relation that identifies the differential of the Chern--Simons 3-form with the 4-form which integrates to the first Pontryagin class. That is, we regard $M$ as (part of) the boundary of a five-dimensional manifold $N$, extend all fields to $N$, and demand that\footnote{We shall not make a notational distinction for fields on $M$ and their extensions to $N$.}
		\begin{equation}\label{eq:cond_higher_pontryagin}
			\rmd\big(A^*(H)+B^*(F)\big)\ \overset{!}{=}\ F^*(H)+H^*(F)
		\end{equation}
		for the extended fields. We compute
		\begin{equation}\label{eq:compute_cotangent_curvatures}
			\begin{aligned}
				\rmd\big(A^*(H)+B^*(F)\big)\ &=\ (\rmd A^*)(H)-A^*(\rmd H)+(\rmd B^*)(F)+B^*(\rmd F)
				\\
				&=\ (\rmd A^*)(H)+A^*\big(\mu_2(A,H)-\kappa(A,\mu_1(H))\big)+B^*(\mu_1(H))
				\\
				&\kern1cm+(\rmd B^*)(F)-B^*\big(\mu_2(A,F)-\mu_1(\kappa(A,F))\big)+A^*(\kappa_+(F,F))~,
			\end{aligned}
		\end{equation}
		where, as defined above, $\kappa_+$ denotes the symmetric part of $\kappa$.
		
		The condition~\eqref{eq:cond_higher_pontryagin} also ensures that the action~\eqref{eq:action_CS_4d} is gauge invariant. Indeed, this directly follows from the interpretation of gauge transformations as partially flat homotopies as given in~\cite{Sati:2008eg}. Concretely, the infinitesimal gauge transformation of the gauge potentials $a\in\{A,A^*,B,B^*\}$ on $N$ can be computed by formally extending $a$ to $N\times[0,1]$ with the new homotopy direction parametrised by $t\in[0,1]$ and with the infinitesimal gauge transformation $\delta_ca$ of $a$ given by $\delta_ca=\parder{t}\big|_{t=0}a$ under the assumption $\parder{t}\intprod f=0$ for the curvatures $f\in\{F,F^*,H,H^*\}$ and with the gauge parameters given $c\coloneqq\parder{t}\intprod a\big|_{t=0}$. Because $\parder{t}\intprod f$ contains all terms $\parder{t}a$, the condition $\parder{t}\intprod f=0$ allows us to determine all $\delta_ca$.
		
		From this perspective, it is evident that the differential of the Lagrangian is gauge invariant on $N\times[0,1]$. Considering the integral of this form over $N\times [0,1]$, we can use Stokes' theorem to see that the difference of the two boundary contributions, which is a difference of terms of the form~\eqref{eq:action_CS_4d}, is also gauge invariant. Since we can choose the gauge parameters such that they become trivial on one of the boundaries, it follows that each boundary is gauge invariant by itself.
		
		From the computation~\eqref{eq:compute_cotangent_curvatures}, it now follows that the additional curvature forms are uniquely fixed to be 
		\begin{subequations}
			\begin{equation}
				\begin{aligned}
					F^*\ &=\ \rmd A^*+\nu^*_2(A,A^*)+\mu^*_1(B^*)~,
					\\
					H^*\ &=\ \rmd B^*+\tilde\nu^*_2(A,B^*)-\kappa^*_+(F,A^*)
				\end{aligned}
			\end{equation}
			with Bianchi identities
			\begin{equation}
				\begin{aligned}
					\rmd F^*+\nu^*_2(A,F^*)-\nu^*_{2-}(F,A^*)-\mu^*_1(H^*)\ &=\ 0~,
					\\
					\rmd H^*+\tilde\nu^*_2(A,H^*)-\tilde\nu^*_{2-}(F,B^*)+\kappa^*_+(\mu_1(H),A^*)+\kappa^*_+(F,F^*)\ &=\ 0~,
				\end{aligned}
			\end{equation}
		\end{subequations}
		where we have used the maps introduced in~\eqref{eq:dualMapsHalfAdjusted2Term}.
		
		Finally, upon varying the action~\eqref{eq:action_CS_4d}, we find the equations of motion
		\begin{equation}
			F\ =\ 0~,
			\quad
			F^*\ =\ 0~,
			\quad
			H\ =\ 0~,
			\eand
			H^*\ =\ 0~.
		\end{equation}
		
		\paragraph{Gauge transformations.}
		As explained above, the gauge transformations are fully induced by the choice of curvatures. With the choices made above, infinitesimal gauge transformations are parametrised by
		\begin{subequations}\label{eq:inf_gauge_trafos}
			\begin{equation}
				\begin{aligned}
					\hat\alpha\ =\ (\alpha,\alpha^*)\ \in\ \Omega^0(M)\otimes\frL_0\oplus\Omega^0(M)\otimes(\frL_{-1})^*~,
					\\
					\hat\lambda\ =\ (\lambda,\lambda^*)\ \in\ \Omega^1(M)\otimes\frL_{-1}\oplus\Omega^1(M)\otimes(\frL_0)^*~,
				\end{aligned}
			\end{equation}
			and act according to
			\begin{equation}
				\begin{aligned}
					\delta_{(\alpha,\alpha^*,\lambda,\lambda^*)}A\ &=\ \rmd\alpha+\mu_2(A,\alpha)-\mu_1(\lambda)~,
					\\
					\delta_{(\alpha,\alpha^*,\lambda,\lambda^*)}A^*\ &=\ \rmd\alpha^*+\nu^*_2(A,\alpha^*)-\nu^*_2(\alpha,A^*)-\mu^*_1(\lambda^*)~,
					\\
					\delta_{(\alpha,\alpha^*,\lambda,\lambda^*)}B\ &=\ \rmd\lambda+\mu_2(A,\lambda)-\mu_2(\alpha,B)+\tfrac12\mu_3(A,A,\alpha)+\kappa(\alpha,F)~,
					\\
					\delta_{(\alpha,\alpha^*,\lambda,\lambda^*)}B^*\ &=\ \rmd\lambda^*+\tilde\nu^*_2(A,\lambda^*)-\tilde\nu^*_2(\alpha, B^*)+\kappa^*_+(F, \alpha^*)~.
				\end{aligned}
			\end{equation}
			The curvature forms transform correspondingly as
			\begin{equation}
				\begin{aligned}
					\delta_{(\alpha,\alpha^*,\lambda,\lambda^*)}F\ &=\ -\tilde\nu_2(\alpha,F)~,
					\\
					\delta_{(\alpha,\alpha^*,\lambda,\lambda^*)}F^*\ &=\ -\nu^*_2(\alpha,F^*)+\nu^*_{2-}(F,\alpha^*)~,
					\\
					\delta_{(\alpha,\alpha^*,\lambda,\lambda^*)}H\ &=\ -\nu_2(\alpha,H)~,
					\\
					\delta_{(\alpha,\alpha^*,\lambda,\lambda^*)}H^*\ &=\ -\tilde\nu^*_2(\alpha,H^*)+\nu^*_{2-}(F,\lambda^*)+\kappa^*_+(\mu_1(H),\alpha^*)~.
				\end{aligned}
			\end{equation}
		\end{subequations}
		
		We can now compute the commutator of two gauge transformations acting on the gauge potentials $a\in\{A,A^*,B,B^*\}$. We obtain
		\begin{subequations}\label{eq:closure_of_gauge}
			\begin{equation}
				\big[\delta_{(\alpha_1,\alpha^*_1,\lambda_1,\lambda^*_1)},\delta_{(\alpha_2,\alpha^*_2,\lambda_2,\lambda^*_2)}\big]\ =\ \delta_{(\alpha_3,\alpha^*_3,\lambda_3,\lambda^*_3)}
			\end{equation}
			with 
			\begin{equation}
				\begin{aligned}
					\alpha_3\ &\coloneqq\ \mu_2(\alpha_2,\alpha_1)~,
					\\
					\alpha^*_3\ &\coloneqq\ \nu^*_2(\alpha_2,\alpha^*_1)-\nu^*_2(\alpha_1,\alpha^*_2)~,
					\\
					\lambda_3\ &\coloneqq\ \mu_2(\alpha_2,\lambda_1)-\mu_2(\alpha_1,\lambda_2)-\mu_3(A,\alpha_2,\alpha_1)~,
					\\
					\lambda^*_3\ &\coloneqq\ \tilde\nu^*_2(\alpha_2,\lambda^*_1)-\tilde\nu^*_2(\alpha_1,\lambda^*_2)~.
				\end{aligned}
			\end{equation}
		\end{subequations}
		The details of the computation are found in \cref{app:half-adjusted-gauge}.
		
		\paragraph{Higher gauge transformations.}
		Higher gauge transformations are derived either via higher homotopies, as described e.g.~in~\cite{Jurco:2018sby} or via the adjusted BRST complex described in \cref{ssec:no_adjusted_cyclic_L-infty}. These are parametrised by $\hat\theta=(\theta,\theta^*)\in\Omega^0(M)\otimes\frL_{-1}\oplus\Omega^0(M)\otimes(\frL_0)^*$ and act as
		\begin{equation}\label{eq:inf_higher_gauge_trafos}
			\begin{aligned}
				\delta_{(\theta,\theta^*)}\alpha\ &=\ \mu_1(\theta)~,
				\\
				\delta_{(\theta,\theta^*)}\alpha^*\ &=\ \mu^*_1(\theta^*)~,
				\\
				\delta_{(\theta,\theta^*)}\lambda\ &=\ \rmd\theta+\mu_2(A,\theta)~,
				\\
				\delta_{(\theta,\theta^*)}\lambda^*\ &=\ \rmd\theta^*+\tilde\nu^*_2(A,\theta^*)~.
			\end{aligned}
		\end{equation}
		From the discussion in \cref{ssec:problems}, we now expect that two gauge transformations that are higher gauge equivalent, i.e.~linked by a higher gauge transformation, will have different images. Defining
		\begin{subequations}
			\begin{equation}
				\delta\ \coloneqq\ \delta_{(\alpha,\alpha^*,\lambda,\lambda^*)}-\delta_{(\alpha,\alpha^*,\lambda,\lambda^*)+\delta_{(\theta,\theta^*)}(\alpha,\alpha^*,\lambda,\lambda^*)}~,
			\end{equation}
			we indeed obtain
			\begin{equation}
				\begin{aligned}
					\delta A\ &=\ 0~,
					\\
					\delta A^*\ &=\ 0~,
					\\
					\delta B\ &=\ \mu_2(F,\theta)+\kappa(\mu_1(\theta),F)\ =\ 0~,
					\\
					\delta B^*\ &=\ \mu^*_2(F,\theta^*)-\kappa^*_-(F,\mu_1(\theta^*))\big)\ \neq\ 0~.
				\end{aligned}
			\end{equation}
		\end{subequations}
		Therefore, we have to restrict higher gauge transformations to those that are parametrised by $\hat\theta=(\theta,0)$.\footnote{Alternatively, we could implement corresponding trivial symmetries as explained in \ref{ssec:trivial_symmetries}, which has the same effect.}
		
		\paragraph{Example.} The archetypal example to consider in the infinitesimal case is the half-adjustment constructed from the adjusted string Lie 2-algebra~\eqref{eq:adjustment_skeletal_string}. Note that here, 
		\begin{equation}
			\mu_1(Y)\ =\ 0\eand \mu_2(X,Y)\ =\ 0
		\end{equation} 
		which implies that 
		\begin{equation}
			\nu_2(X,Y)\ =\ \nu_{2\pm}(X,Y)\ =\ 0\eand \tilde\nu_2(X_1,X_2)\ =\ \tilde\nu_{2\pm}(X_1,X_2)\ =\ \mu_2(X_1,X_2)
		\end{equation} 
		for all $X,X_{1,2}\in \frL_0=\frg$ and $Y\in \frL_{-1}=\IR$. As a result, the action~\eqref{eq:action_CS_4d} becomes
		\begin{equation}
			S\ =\ \int_{M} \left\{A^*(\rmd B+\inner{A}{\rmd A}+\tfrac13\inner{A}{[A,A]})+B^*(\rmd A+\tfrac12[A,A])\right\}
		\end{equation}
		for $M$ some four-dimensional manifold. Let us stress that although we only half-adjusted the gauge algebra, the gauge transformations still all close. In line with our theorem, however, higher gauge transformations still link gauge transformations with different images and therefore have to be excluded.
		
		\subsection{Four-dimensional case: Finite considerations}
		
		Let us also consider the global description of half-adjusted higher Chern--Simons theory, by developing the explicit form of finite gauge transformations for the kinematic data of half-adjusted four-dimensional Chern--Simons theory. We note that, as usual for most discussions on Chern--Simons theory, our action functional is only suitable for kinematic data given by connections on topologically trivial principal 2-bundles, i.e.~principal 2-bundles with trivial higher transition functions. Nevertheless, and for future use, we also compute the differential cohomology describing general higher principal 2-bundles with half-adjusted connections.
		
		As usual, we restrict ourselves to a \uline{strict} gauge 2-group to keep the computations manageable. This is a mild restriction, as general theorems allows us to turn any 2-group into a weakly equivalent strict gauge 2-group~\cite{Baez:0307200}.
		
		\paragraph{Cotangent crossed module of Lie algebras.}
		The starting point of our discussion is the cotangent 2-term $L_\infty$-algebra $T^*[1]\frL$ introduced in \cref{ssec:cyclic_L_infty}. We shall now assume that the underlying 2-term $L_\infty$-algebra $\frL$ is strict, that is, $\mu_3=0$. Consequently, $T^*[1]\frL$ is also strict, that is, $\hat\mu_3=0$. In \cref{ssec:cyclic_L_infty}, we have also seen that any strict 2-term $L_\infty$-algebra is equivalent to a crossed module of Lie algebras, and we will use this perspective in the following, as it is more convenient.
		
		Firstly, we denote the crossed module of Lie algebras equivalent to $\frL$ by $(\frh\overset{\sft}{\rightarrow}\frg,\acton)$. In particular, $\frh\coloneqq\frL_{-1}$, $\frg\coloneqq\frL_0$, and $\sft\coloneqq\mu_1$. Furthermore, $[X_1,X_2]\coloneqq\mu_2(X_1,X_2)$, $[Y_1,Y_2]\coloneqq\mu_2(\mu_1(Y_1),Y_2)$, and $X\acton Y\coloneqq\mu_2(X,Y)$. Next, we set
		\begin{subequations}\label{eq:cotangentCrossedModule}
			\begin{equation}\label{eq:cotangentCrossedModule:a}
				\hat\frh\ \coloneqq\ (T^*[1]\frL)_{-1}\ =\ \frh\oplus\frg^*
				\eand
				\hat\frg\ \coloneqq\ (T^*[1]\frL)_0\ =\ \frg\oplus\frh^*
			\end{equation}
			together with
			\begin{equation}\label{eq:cotangentCrossedModule:b}
				\begin{aligned}
					\hat\sft\,:\,\hat\frh\ &\rightarrow\ \hat\frg~,
					\\
					(Y,X^*)\ &\mapsto\ \big(\sft(Y),\sft^*(X^*)\big)
				\end{aligned}
			\end{equation}
			and
			\begin{equation}\label{eq:cotangentCrossedModule:c}
				(X_1,Y_1^*)\,\hat\acton\,(Y_2,X_2^*)\ \coloneqq\ \big(X_1\acton Y_2,[X_1,X^*_2]^*-[Y_2,Y^*_1]^*\big)
			\end{equation}
			as well as
			\begin{equation}\label{eq:cotangentCrossedModule:d}
				\big[(Y_1,X^*_1),(Y_2,X^*_2)\big]\ \coloneqq\ \big([Y_1,Y_2],[\sft(Y_1),X^*_2]^*-[\sft(Y_2),X^*_1]^*\big)
			\end{equation}
			and
			\begin{equation}\label{eq:cotangentCrossedModule:e}
				\big[(X_1,Y^*_1),(X_2,Y^*_2)\big]\ \coloneqq\ \big([X_1,X_2],[X_1,Y^*_2]^*-[X_2,Y^*_1]^*\big)
			\end{equation}
		\end{subequations}
		for all $(Y_{1,2},X_{1,2}^*)\in\hat\frh$ and $(X_{1,2},Y_{1,2}^*)\in\hat\frg$. Here, $[-,-]^*$ denotes the evident dual of the adjoint action. It is then straightforward to check that all the axioms of a crossed module of Lie algebras are satisfied, that is, the brackets~\eqref{eq:cotangentCrossedModule:d} and~\eqref{eq:cotangentCrossedModule:e} are Lie brackets on $\hat\frh$ and $\hat\frg$, the map~\eqref{eq:cotangentCrossedModule:b} is a morphism of Lie algebras, and the map~\eqref{eq:cotangentCrossedModule:b} and the action~\eqref{eq:cotangentCrossedModule:c} satisfy the equivariance and Peiffer conditions. We shall refer to the crossed module of Lie algebras $(\hat\frh\overset{\hat\sft}{\rightarrow}\hat\frg,\hat\acton)$ as the \uline{cotangent crossed module of Lie algebras} associated with $(\frh\overset{\sft}{\rightarrow}\frg,\acton)$, and we shall denote it more succinctly by $T^*[1](\frh\rightarrow\frg)$.
		
		\paragraph{Cotangent crossed module of Lie groups.}
		Let $(\sfH\overset{\sft}{\rightarrow}\sfG,\acton)$ be a crossed module of Lie groups integrating $(\frh\overset{\sft}{\rightarrow}\frg,\acton)$. Then, there is a \uline{cotangent crossed module of Lie groups}, denoted by $T^*[1](\sfH\rightarrow\sfG)$, integrating $T^*[1](\frh\rightarrow\frg)$. Recall that $T\sfH$ (respectively, $T\sfG$) is isomorphic to $\sfH\times\frh$ (respectively, $\sfG\times\frg$) so that Lie groups associated with the Lie algebras $\hat\frh$ and $\hat\frg$ in~\eqref{eq:cotangentCrossedModule:a} are 
		\begin{subequations}
			\begin{equation}
				\hat\sfH\ \coloneqq\ \sfH\times\frg^*
				\eand
				\hat\sfG\ \coloneqq\ \sfG\times\frh^*~,
			\end{equation}
			respectively, with group products given by 
			\begin{equation}
				(h_1,X^*_1)(h_2,X^*_2)\ \coloneqq\ \big(h_1h_2,X_1^*+\sft(h)X_2^*(\sft(h))^{-1}\big)
			\end{equation}
			and
			\begin{equation}
				(g_1,Y^*_1)(g_2,Y^*_2)\ \coloneqq\ \big(g_1g_2,Y_1^*+g_1\acton Y_2^*\big)
			\end{equation}
			for all $Y_{1,2}^*\in\frh^*$, $h_{1,2}\in\sfH$, $X_{1,2}^*\in\frg^*$, and $g_{1,2}\in\sfG$. In addition,
			\begin{equation}
				\begin{aligned}
					\hat\sft\,:\,\hat\sfH\ &\rightarrow\ \hat\sfG~,
					\\
					(h,X^*)\ &\mapsto\ \big(\sft(h),\sft^*(X^*)\big)
				\end{aligned}
			\end{equation}
			and
			\begin{equation}
				(g,Y^*)\,\hat\acton\,(h,X^*)\ \coloneqq\ \big(g\acton h,gX^*g^{-1}-(g\acton h)Y^*(g\acton h)^{-1}\big)
			\end{equation}
			for all $(h,X^*)\in\hat\sfH$ and $(g,Y^*)\in\hat\sfG$. Here,
			\begin{equation}
				\begin{gathered}
					(gX^*g^{-1})(X_1)\ \coloneqq X^*(gX_1g^{-1})~,
					\quad
					(hY^*h^{-1})(X)\ \coloneqq\ Y^*(h(X\acton h^{-1}))~,
					\\
					(g\acton Y^*)(Y_1)\ \coloneqq\ Y^*(g\acton Y_1)
				\end{gathered}
			\end{equation}
		\end{subequations}
		for all $Y_1\in\frh$, $Y^*\in\frh^*$, $h\in\sfH$, $X,X_1\in\frg$, $X^*\in\frg^*$, and $g\in\sfG$. Using the identities
		\begin{equation}
			\begin{aligned}
				\sft^*(hY^*h^{-1})\ &=\ Y^*-\sft(h)\acton Y^*~,
				\\
				h(\sft^*(X^*)\acton h^{-1})\ &=\ X^*-\sft(h)X^*(\sft(h))^{-1}
			\end{aligned}
		\end{equation}
		for all $Y^*\in\frh^*$, $h\in\sfH$, and $X^*\in\sfG^*$, it is not too difficult to see that $T^*[1](\sfH\rightarrow\sfG)$ is indeed a crossed module of Lie groups. Similarly, one shows that this crossed module of Lie groups differentiates to $T^*[1](\frh\rightarrow\frg)$.
		
		\paragraph{Half adjustments.}
		So far, we have only discussed the cotangent 2-term $L_\infty$-algebra $T^*[1]\frL$, translated it to the crossed module language to obtain the crossed module of Lie algebras $T^*[1](\frh\rightarrow\frg)$, and integrated the latter to the crossed module of Lie groups $T^*[1](\sfH\rightarrow\sfG)$. We can now also adjust $\frL$ with an adjustment datum $\kappa$, or, equivalently, the associated crossed module of Lie algebras to obtain $(\frh\overset{\sft}{\rightarrow}\frg,\acton,\kappa)$, cf.~\cite{Rist:2022hci}. This integrates to an adjusted crossed module of Lie groups $(\sfH\overset{\sft}{\rightarrow}\sfG,\acton,\kappa)$, and the map $\kappa$ satisfies
		\begin{subequations}
			\begin{align}
				\kappa(\sft(h),Y)\ &=\ h(Y\acton h^{-1})~,\label{eq:alternativeAdjustmentCondition_a}
				\\
				\kappa(g_2g_1,X)\ &=\ g_2\acton\kappa(g_1,X)+\kappa\big(g_2,g_1Xg^{-1}_1-\sft(\kappa(g_1,X))\big)\label{eq:alternativeAdjustmentCondition_b}
			\end{align}
		\end{subequations}
		for all $X\in \frg$, $Y\in \frh$ and $g_{1,2}\in \sfG$. Consequently, $T^*[1](\sfH\rightarrow\sfG)$ becomes half-adjusted following our discussion in \cref{ssec:four_dimensional_case_infinitesimal_considerations}. We call the resulting crossed modules of Lie groups the \uline{half-adjusted cotangent crossed module of Lie groups} and denote it by $T^*_\kappa[1](\sfH\rightarrow\sfG)$.
		
		\paragraph{Finite gauge transformations.}
		Consider $T^*_\kappa[1](\sfH\rightarrow\sfG)$. The next step in the derivation of the half-adjusted differential cocycles is the construction of the higher gauge groupoid. In particular, we need to integrate the infinitesimal gauge transformations~\eqref{eq:inf_gauge_trafos} as well as the infinitesimal higher gauge transformations~\eqref{eq:inf_higher_gauge_trafos}, which form a Lie 2-algebroid, into a 2-groupoid of finite gauge transformations, where we disregard 2-transformations in the cotangent direction. These computations are lengthy, but in principle straightforward. We therefore merely display the result.
		
		The maps~\eqref{eq:infinitesimalUsefulMaps} integrate to
		\begin{equation}\label{eq:nu_maps}
			\begin{aligned}
				\nu_2(g,Y)\ &=\ g\acton Y-\kappa(g,\sft(Y))\ \in\ \frh~,
				\\
				\tilde\nu_2(g,X)\ &=\ gXg^{-1}-\sft(\kappa(g,X))\ \in\ \frg
			\end{aligned}
		\end{equation}
		for all $Y\in\frh$, $X\in\frg$, and $g\in\sfG$. Likewise, the corresponding dual maps~\eqref{eq:dualMapsHalfAdjusted2Term} integrate to
		\begin{equation}
			\begin{aligned}
				\nu^*_2(g,Y^*)(Y_1)\ &=\ Y^*(\nu_2(g^{-1},Y_1))~,
				\\
				\tilde\nu^*_2(g,X^*)(X_1)\ &=\ X^*(\tilde\nu_2(g^{-1},X_1))
			\end{aligned}
		\end{equation}
		for all $Y_1\in\frh$, $Y^*\in\frh^*$, $X_1\in\frg$, $X^*\in\frg^*$, and $g\in\sfG$. These maps describe the deformation of the action induced by the adjustment datum, and they satisfy a number of identities listed in \cref{app:half-adjusted-gauge}.
		
		Finite gauge transformations now are parametrised by elements
		\begin{equation}
			\begin{aligned}
				\hat g\ &=\ (g,\Gamma^*)\ \in\ \scC^\infty(M,\sfG)\times\Omega^0(M)\otimes\frh^*~,
				\\
				\hat\Lambda\ &=\ (\Lambda,\Lambda^*)\ \in\ \Omega^1(M)\otimes\frh\oplus\Omega^1(M)\otimes\frg^*~,
			\end{aligned}
		\end{equation}
		and they transform gauge potentials according to
		\begin{equation}
			\begin{aligned}
				\tilde A\ &\coloneqq\ g^{-1}Ag+g^{-1}\rmd g-\sft(\Lambda)~,
				\\
				\tilde A^*\ &\coloneqq\ \nu^*_2\big(g^{-1},A^*+\rmd\Gamma^*+\nu^*_2(A,\Gamma^*)\big)-\sft^*(\Lambda^*)~,
				\\
				\tilde B\ &\coloneqq\ g^{-1}\acton B+\rmd\Lambda+\tilde{A}^0\acton\Lambda+\tfrac12[\Lambda,\Lambda]-\kappa\big(g^{-1},F\big)~,
				\\
				\tilde B^*\ &\coloneqq\ \tilde\nu^*_2\big(g^{-1},B^*+\kappa^*_+(F,\Gamma^*)\big)+\rmd\Lambda^*+\tilde\nu^*_2(\tilde A,\Lambda^*)~.
			\end{aligned}
		\end{equation}
		In turn, the curvature forms transform as
		\begin{equation}
			\begin{aligned}
				\tilde F\ &\coloneqq\ \tilde\nu_2(g^{-1},F)~,
				\\
				\tilde F^*\ &\coloneqq\ \nu^*_2(g^{-1},F^*)+\nu^*_2(F,\Gamma^*)-\sft^*\big((\tilde\nu_2g^{-1},\kappa^*_+(F,\Gamma^*)\big)-\kappa^*_+(F,\Gamma^*))~,
				\\
				\tilde H\ &\coloneqq\ \nu_2(g^{-1},H)~,
				\\
				\tilde H^*\ &\coloneqq\ \tilde\nu^*_2\big(g^{-1},H^*+\kappa^*_+(\sft(H),\Gamma^*)\big)+\kappa^*_+\big(\tilde\nu_2(g^{-1},F),\sft^*(\Lambda^*)\big)+\tilde\nu^*_2(g^{-1}F_0g,\Lambda^*)~.
			\end{aligned}
		\end{equation}
		
		The composition of two successive gauge transformations parametrised by $(g_{1,2},\Gamma^*_{1,2},\Lambda_{1,2},\Lambda^*_{1,2})$ is given by a third gauge transformation parametrised by $(g_3,\Gamma^*_3,\Lambda_3,\Lambda^*_3)$ with
		\begin{equation}
			\begin{aligned}
				g_3\ &\coloneqq\ g_1g_2~,
				\\
				\Gamma^*_3 &\coloneqq\ \Gamma^*_1+\nu^*_2(g_1,\Gamma^*_2)~,
				\\
				\Lambda_3\ &\coloneqq\ \Lambda_2+g_2^{-1}\acton\Lambda_1~,
				\\
				\Lambda^*_3\ &\coloneqq\ \Lambda^*_2+\tilde\nu^*_2(g_2^{-1},\Lambda^*_1)~.
			\end{aligned}
		\end{equation}
		This composition is associative, and the gauge transformation $(g,\Gamma^*,\Lambda,\Lambda^*)$ has the inverse gauge transformation parametrised by $(g_{\rm inv},\Gamma^*_{\rm inv},\Lambda_{\rm inv},\Lambda^*_{\rm inv})$ with
		\begin{equation}
			\begin{aligned}
				g_{\rm inv}\ &\coloneqq\ g^{-1}~,
				\\
				\Gamma^*_{\rm inv}\ &\coloneqq\ -\nu^*_2(g^{-1},\Gamma^*)~,
				\\
				\Lambda_{\rm inv}\ &\coloneqq\ -g\acton\Lambda~,
				\\
				\Lambda^*_{\rm inv}\ &\coloneqq\ -\tilde\nu^*_2(g,\Lambda^*)~.
			\end{aligned}
		\end{equation}
		
		\paragraph{Finite higher gauge transformations.}
		Higher gauge transformations in the base directions are parametrised by $h\in\scC^\infty(M,\sfH)$ and transform gauge transformations of $(A,A^*,B,B^*)$ parametrised by $(g,\Gamma^*,\Lambda,\Lambda^*)$ to new gauge transformations parametrised by
		\begin{equation}
			\begin{aligned}
				\tilde g\ &\coloneqq\ \sft(h)g~,
				\\
				\tilde\Gamma^*\ &\coloneqq\ \Gamma^*~,
				\\
				\tilde\Lambda\ &\coloneqq\ \Lambda+g^{-1}\acton(h^{-1}\rmd h+h^{-1}Ah)~,
				\\
				\tilde\Lambda^*\ &\coloneqq\ \Lambda^*~.
			\end{aligned}
		\end{equation}
		This transformation again composes associatively and has inverses.
		
		Mathematically, we have defined a higher action groupoid, describing connections, finite gauge transformations and finite higher gauge transformations, together with their compositions, inverses and actions.
		
		\paragraph{Half-adjusted differential cohomology.}
		Having identified the higher gauge action groupoid, we can use it to construct the complete half-adjusted differential cocycles and coboundaries. Concretely, consider a base manifold $M$ together with a surjective submersion $Y\rightarrow M$, e.g.~an atlas consisting of patches $Y=\bigsqcup_iU_i$ of open contractible subsets $U_i\subseteq M$. Let $Y^{[p]}$ denote the $p$-fold fibre product
		\begin{equation}
			Y^{[p]}\ \coloneqq\ \underbrace{Y\times_M Y\times_M\cdots\times_M Y}_{p{\rm-times}}~,
		\end{equation}
		i.e.~double, triple, etc.~overlaps of patches for $Y=\bigsqcup_iU_i$.
		
		We now use gauge transformations to glue gauge potentials on overlaps $Y^{[2]}$, and consistency of these yields cocycle relations for these gauge parameters. Gauge transformations on $Y$ that respect these consistency relations are called coboundaries; higher cocycles and coboundaries are derived analogously.
		
		The required computations are mostly straightforward, and the cocycle and coboundary relations in the base directions are already known from~\cite{Rist:2022hci}. We therefore merely list again the results.
		
		Half-adjusted differential cocycles with values in the half-adjusted cotangent crossed module of Lie groups $T^*[1]_\kappa(\sfH\rightarrow\sfG)$ consist of\footnote{For clarity, we add $p$ indices to the cocycle components to indicate that they are functions with domain $Y^{[p]}$ and to distinguish them from the coboundary components denoted by the same letters later.}
		\begin{subequations}\label{eq:adjustedCocycleConditions}
			\begin{equation}
				\begin{aligned}
					\hat h_{ijk}\ =\ h_{ijk}\ &\in\ \scC^\infty(Y^{[3]},\sfH)
					\\
					\hat g_{ij}\ =\ (g_{ij},\Gamma^*_{ij})\ &\in\ \scC^\infty(Y^{[2]},\sfG)\times\Omega^0(M)\otimes\frh^*
					\\
					\hat\Lambda_{ij}\ =\ (\Lambda_{ij},\Lambda^*_{ij})\ &\in\ \Omega^1(Y^{[2]})\otimes\frh\oplus\Omega^1(Y^{[2]})\otimes\frg^*~,
					\\
					\hat A_i\ =\ (A_i,A^*_i)\ &\in\ \Omega^1(Y^{[1]})\otimes\frg\oplus\Omega^1(Y^{[1]})\otimes\frg^*~,
					\\
					\hat B_i\ =\ (B_i,B^*_i)\ &\in\ \Omega^2(Y^{[1]})\otimes\frh\oplus\Omega^2(Y^{[1]})\otimes\frg^*~.
				\end{aligned}
			\end{equation}
			On appropriate double overlaps, these connection forms glue together as follows
			\begin{equation}
				\begin{aligned}
					A_j\ &=\ g_{ij}^{-1}A_ig_{ij}+g_{ij}^{-1}\rmd g_{ij}-\sft(\Lambda_{ij})~,
					\\
					A^*_j\ &=\ \nu^*_2(g_{ij}^{-1},A^*_i+\rmd\Gamma^*_{ij}+\nu^*_2(A_i,\Gamma^*_{ij}))-\sft^*(\Lambda^*_{ij})~,
					\\
					B_j\ &=\ g_{ij}^{-1}\acton B_i+\rmd\Lambda_{ij}+A_j^0\acton\Lambda_{ij}+\tfrac12[\Lambda_{ij},\Lambda_{ij}]-\kappa\big(g_{ij}^{-1},F_i\big)\,,
					\\
					B^*_j\ &=\ \tilde\nu^*_2\big(g_{ij}^{-1},B^*_i+\kappa^*_+(F_i,\Gamma^*_{ij})\big)+\rmd \Lambda^*_{ij}+\tilde\nu^*_2(A_j,\Lambda^*_{ij})~,
				\end{aligned}
			\end{equation}
			and consistency of these transformations on appropriate triple overlaps implies that
			\begin{equation}
				\begin{aligned}
					g_{ik}\ &=\ \sft(h_{ijk})g_{ij}g_{jk}~,
					\\
					\Lambda_{ik}\ &=\ \Lambda_{jk}+g_{jk}^{-1}\acton\Lambda_{ij}-g_{ik}^{-1}\acton(h_{ijk}\nabla_ih_{ijk}^{-1})~,
					\\
					\Gamma^*_{ik}\ &=\ \Gamma^*_{ij}+\nu^*_2(g_{ij},\Gamma^*_{jk})~,
					\\
					\Lambda^*_{ik}\ &=\ \Lambda^*_{jk}+\tilde\nu^*_2\big(g_{jk}^{-1},\Lambda^*_{ij}\big)~,
				\end{aligned}
			\end{equation}
			where we allowed for an additional 2-transformation in the gluing of the $g_{ij}$, and $\nabla_i\coloneqq\rmd+A_i\acton$. This 2-transformation then needs to satisfy the following consistency relation on appropriate quadruple overlaps
			\begin{equation}
				h_{ikl}h_{ijk}\ =\ h_{ijl}(g_{ij}\acton h_{jkl})~.
			\end{equation}
		\end{subequations}
		
		Coboundaries between two such cocycles $(\hat h_{ijk},\hat g_{ij},\hat\Lambda_{ij},\hat A_i,\hat B_i)$ and $(\tilde{\hat h}_{ijk},\tilde{\hat g}_{ij},\tilde{\hat\Lambda}_{ij},\tilde{\hat A}_i,\tilde{\hat B}_i)$ are parametrised by
		\begin{subequations}\label{eq:adjustedCoboundaryRelations}
			\begin{equation}
				\begin{aligned}
					\hat h_{ij}\ =\ h_{ij}\ &\in\ \scC^\infty(Y^{[2]},\sfH)~,
					\\
					\hat g_i\ =\ (g_i,\Gamma^*_i)\ &\in\ \scC^\infty(Y,\sfG)\times\Omega^0(Y)\otimes\frh^*~,
					\\
					\hat\Lambda_i\ =\ (\Lambda_i,\Lambda^*_i)\ &\in\ \Omega^1(Y)\otimes\frh\oplus\Omega^1(Y)\otimes\frg^*~,
				\end{aligned}
			\end{equation}
			and link the cocycles according to
			\begin{equation}\label{eq:adjustedCoboundaryConditionsB}
				\begin{aligned}
					\tilde h_{ijk}\ &=\ g_i^{-1}\acton(h_{ik}h_{ijk}(g_{ij}\acton h_{jk}^{-1})h_{ij}^{-1})~,
					\\
					\tilde g_{ij}\ &=\ g_i^{-1}\sft(h_{ij})g_{ij}g_j~,
					\\
					\tilde\Gamma^*_{ij}\ &=\ -\Gamma^*_i+\Gamma^*_{ij}+\nu^*_2(g_{ij},\Gamma^*_j)~,
					\\
					\tilde\Lambda_{ij}\ &=\ g_j^{-1}\acton\Lambda_{ij}+\Lambda_j-\tilde g_{ij}^{-1}\acton\Lambda_i+(g_j^{-1}g_{ij}^{-1})\acton(h_{ij}^{-1}\nabla_ih_{ij})~,
					\\
					\tilde\Lambda^*_{ij}\ &=\ \tilde\nu^*_2\big(g_j^{-1},\Lambda^*_{ij}\big)+\Lambda^*_j-\tilde\nu^*_2\big(\tilde g_{ij}^{-1},\Lambda^*_i\big)\,,
					\\
					\tilde A_i\ &=\ g_i^{-1}A_ig_i+g_i^{-1}\rmd g_i-\sft(\Lambda_i)~,
					\\
					\tilde A^*_i\ &=\ \nu^*_2\big(g_i^{-1},A^*_i+\rmd\Gamma^*_i+\nu^*_2(A_i,\Gamma^*_i)\big)-\sft^*(\Lambda^*_i)~,
					\\
					\tilde B_i\ &=\ g_i^{-1}\acton B_i+\rmd\Lambda_i+\tilde A_i\acton\Lambda_i+\tfrac12[\Lambda_i,\Lambda_i]-\kappa\big((g_i)^{-1},F_i\big)\,,
					\\
					B^*_j\ &=\ \tilde\nu^*_2\big(g_i^{-1},B^*_i+\kappa^*_+(F_i,\Gamma^*_i)\big)+\rmd \Lambda^*_i+\tilde\nu^*_2(A_j,\Lambda^*_i)~.
				\end{aligned}
			\end{equation}
		\end{subequations}
		The higher coboundary relations can be derived analogously.
		
		Altogether, the above cocycles and coboundaries describe general principal 2-bundles with half-adjusted connections as well as their bundle isomorphisms.
		
		\paragraph{Example: strict string Lie 2-group.} The most relevant example in the finite case is, as expected, the integrated string Lie 2-algebra. For simplicity however\footnote{The technicalities when using the weak Lie 2-group are substantial.}, we consider the strict but infinite-dimensional model~\cite{Baez:2005sn}. Consider a Lie group $\sfG$ with Lie algebra $\frg$, together with its based (parametrised) path and loop spaces $P_0\sfG$ and $L_0\sfG$. There is a central extension $\widehat{L_0\sfG}$ of $L_0\sfG$, which forms a principal circle bundle over $L_0\sfG$. This data can be combined into the crossed module of Lie groups
		\begin{equation}
			\caG\ \coloneqq\ \big( \widehat{L_0\sfG}~~\xrightarrow{~\sft~} P_0\sfG\big)~.
		\end{equation}
		
		As shown in~\cite{Rist:2022hci}, this crossed module can be adjusted with adjustment datum
		\begin{equation}\label{eq:adjustmentStringGroup}
			\begin{aligned}
				\hat\kappa\,:\,P_0\sfG\times P_0\frg\ &\rightarrow\ L_0\frg\oplus\fru(1)~,
				\\
				(g,X)\ &\mapsto\ \left((\id-\wp\circ \flat)(gXg^{-1}-X)
				\,,\,
				\frac{\rmi}{2\pi}\int_0^1\rmd r\,\innerLarge{g^{-1}\parder[g]{r}}{X}\right)
			\end{aligned}
		\end{equation}
		for all $g\in P\sfG$ and for all $V\in P_0\frg$, where $\flat:P_0\frg\rightarrow \frg$ is the end-point evaluation of the based path, and $\wp:[0,1]\rightarrow \IR$ is a function with $\wp(0)=0$ and $\wp(1)=1$.
		
		One can now construct both four-dimensional higher Chern--Simons theories with the corresponding half-adjustment as well as principal 2-bundles with half-adjusted connections. The latter are interesting, as they contain a generalisation of the string structures, which are in turn higher simultaneous analogues of instanton and monopole bundles, see~\cite{Rist:2022hci} for more details.
		
		Another potentially interesting adjusted crossed module to consider as a starting point for either half-adjusted Chern--Simons theory or principal 2-bundles with half-adjusted connections is the adjusted T-duality configuration 2-group $\sfT\sfD_n$ introduced in~\cite{Kim:2022opr}.
		
		\subsection{Higher-dimensional generalisation}
		
		Let us close with a discussion of the generalisation of half-adjusted higher Chern--Simons theory to higher dimensions. Our considerations are necessarily restricted to local, infinitesimal descriptions, because explicit forms of higher adjusted principal fibre bundles are complicated, and only available for principal 2- and 3-bundles~\cite{Rist:2022hci,Gagliardo:2025oio}.
		
		\paragraph{Half-adjusted $L_\infty$-algebra and action functional.} The definition of a general half-adjusted $L_\infty$-algebra is straightforward. Consider an adjusted $(d-2)$-term $L_\infty$-algebra $\frL$ and construct the cotangent completion $\hat\frL=T^*[d-3]\frL$ to a cyclic $L_\infty$-algebra, as given in~\eqref{eq:shifted_cotangent_construction}.
		
		In order to establish the results in the following, it will be useful to consider the Weil algebra of $\hat\frL$ in some detail. As before, let us denote the generators of $\sfW(\frL)$ by $\tte^A$ and $\hat\tte^A$. The generators of $\sfW(\hat\frL)$ are then those of $\sfW(\frL)$ together with a dual set $\ttE_A$ and $\hat\ttE_A$ with
		\begin{equation}
			|\ttE_A|\ =\ n-(|\tte^A|-1)
			\eand
			|\hat\ttE_A|\ =\ |\ttE_A|+1~.
		\end{equation}
		The adjustment of $\frL$ is given by a Weil algebra automorphism\footnote{See the proof of \cref{thm:only_abelian_cyclic_adjustments} for our notation.}
		\begin{subequations}
			\begin{equation}
				\tilde{\hat\tte}^A\ =\ \hat\tte^A+\kappa_{IJ}{}^A\tte^I{\hat\tte}^J~,
			\end{equation}
			and we will allow for a further automorphism
			\begin{equation}\label{eq:cotangent_automorphism}
				\tilde{\hat\ttE}_A\ =\ \hat\ttE_A+\lambda_{IA\underline{J}}{}^B\tte^I\tilde{\hat\tte}^{\underline{J}}\,{\hat\ttE}_B+\rho_{\underline{I}AJ}{}^B\tte^{\underline{I}}\,\tilde{\hat\tte}^{J}\ttE_B
			\end{equation}
		\end{subequations}
		with $\lambda_{IA\underline{J}}{}^B$ and $\rho_{\underline{I}AJ}{}^B$ some structure constants that we will specify later. By~\cref{thm:only_abelian_cyclic_adjustments}, it is clear that this does not yield an adjustment of $\hat \frL$ in general.
		
		\paragraph{Action functional.}
		Let $M$ be a $d$-dimensional compact oriented manifold without boundary. The field content consists of $\hat \frL$-valued forms given by a morphism $\caA:\sfW(\hat\frL)\rightarrow\Omega^\bullet(M)$. Concretely
		\begin{equation}
			\caA^A\ \coloneqq\ \caA(\tte^A)
			\eand
			\caA^*_A\ \coloneqq\ \caA(\ttE_A)
		\end{equation}
		with corresponding curvature forms
		\begin{equation}
			\caF^A\ \coloneqq\ \caA(\tilde{\hat\tte}^A)
			\eand
			\caF^*_A\ \coloneqq\ \caA(\tilde{\hat\ttE}_A)~.
		\end{equation}
		These combine into the action functional
		\begin{equation}\label{eq:action}
			S\ \coloneqq\ \int_M\caA^*_A\caF^A~.
		\end{equation}
		Upon varying this action with respect to the $\caA^*_A$, we recover half the desired equations of motion, $\caF^A=0$.
		
		\paragraph{Definition of curvature forms.}
		To define the remaining curvatures $\caF^*_A$, which amounts to specifying the deformation parameters $\lambda$ and $\rho$ in~\eqref{eq:cotangent_automorphism}, we proceed as before in \cref{ssec:four_dimensional_case_infinitesimal_considerations}. That is, we consider the extension of the above fields and the Lagrangian $d$-form to an $(d+1)$-dimensional manifold $N$ which has $M$ as its boundary and extend all the fields from $M$ to $N$. We then demand that
		\begin{equation}\label{eq:curV_condition}
			\rmd(\caA^*_A\caF^A)\ \overset{!}{=}\ \caF^*_A\caF^A~,
		\end{equation}
		which is the evident generalisation of~\eqref{eq:cond_higher_pontryagin}. Note that
		\begin{equation}
			\rmd (\caA^*_A\caF^A)\ =\ (\rmd\caA^*_A)\caF^A+(-1)^{|A|}\caA^*_A\rmd\caF^A~,
		\end{equation}
		and the form of the Weil algebra~\eqref{eq:diff_form} ensures that
		\begin{equation}
			\rmd\caF^A\ =\ (-1)^{|B|}\caF^Bp_B{}^A(\caA,\caF)
		\end{equation}
		for $p_B{}^A(\caA,\caF)$ some monomial in the connections and curvature forms $\caA$ and $\caF$ so that~\eqref{eq:curV_condition} always has a solution.
		
		Whilst generically~\eqref{eq:curV_condition} does not uniquely determine $\caF^*_A$, it fixes it up to a field redefinition involving lower-dimensional curvatures. Therefore the choice of $\caF^*_A$ satisfying~\eqref{eq:curV_condition} translates into a choice of the deformation parameters $\lambda$ and $\rho$ in~\eqref{eq:cotangent_automorphism}. Different choices lead to equivalent descriptions of gauge configurations, as familiar from the tensor hierarchies, cf.~e.g.~\cite{Borsten:2021ljb}.
		
		By the same arguments as in the four-dimensional case, relation~\eqref{eq:curV_condition} ensures that the action is gauge invariant under infinitesimal gauge transformations.
		
		\paragraph{Equations of motion.}
		It remains to show that the equations of motion reproduce total flatness, i.e.
		\begin{equation}\label{eq:higher_eom}
			\delta\int_M\caA_A^*\caF^A\ =\ \int_M\Big\{(\delta\caA_A^*)\caF^A+(\caF^*_A+R^*_A)(\delta\caA^A)\Big\}
		\end{equation}
		with $\caF^*_A$ as solution to~\eqref{eq:curV_condition} and $R^*_A$ is some polynomial in the fields containing at least one curvature form of lower degree than $\caF^*_A$.
		
		Firstly, we define
		\begin{equation}
			\caI^A\ \coloneqq\ \caF^A-\rmd\caA^A
			\eand
			\caI^*_A\ \coloneqq\ \caF^*_A-\rmd\caA^*_A~.
		\end{equation}
		Then, the condition~\eqref{eq:curV_condition} is equivalent to
		\begin{equation}
			(-1)^{|A|}\caA^*_A\rmd\caF^A\ =\ (-1)^{|A|}\caA^*_A\rmd\caI^A\ =\ \caI_A^*\caF^A
		\end{equation}
		on $N$. We now specialise to $N=M\times[0,1]$ with the interval $[0,1]$ coordinatised by $t$ and a variation as an infinitesimal homotopy. That is, we consider $\caA^A$ and $\caA^*_A$ on $N$ with
		\begin{equation}
			\delta(\caA^A|_M,\caA^*_A|_M)\ \coloneqq\ \left.\parder{t}\right|_{t=0}(\caA^A,\caA^*_A)
			\eand
			\parder{t}\intprod(\caA^A,\caA^*_A)\ =\ (0,0)~.
		\end{equation}
		It then follows that the Lie derivative along $\parder{t}$ at $t=0$ describes variations,
		\begin{equation}
			\begin{aligned}
				\left.\caL_{\parder{t}}\right|_{t=0}(\caA^A,\caA^*_A)\ &=\ \left.\parder{t}\right|_{t=0}\intprod\rmd(\caA^A,\caA^*_A)\ =\ \left.\parder{t}\right|_{t=0}(\caA^A,\caA^*_A)\ =\ \delta(\caA^A|_M,\caA^*_A|_M)~,
				\\
				\left.\caL_{\parder{t}}\right|_{t=0}\rmd(\caA^A,\caA^*_A)\ &=\ \rmd\left.\parder{t}\right|_{t=0}\intprod\rmd(\caA^A,\caA^*_A)\ =\ \rmd\left.\parder{t}\right|_{t=0}(\caA^A,\caA^*_A)\ =\ \rmd\delta(\caA^A|_M,\caA^*_A|_M)~,
			\end{aligned}
		\end{equation}
		and hence,
		\begin{equation}
			\delta(\caF^A,\caF^*_A)\ =\ \left.\caL_{\parder{t}}\right|_{t=0}(\caF^A,\caF^*_A)~.
		\end{equation}
		Consequently,
		\begin{equation}\label{eq:variation}
			\begin{aligned}
				\delta(\caA_A^*\caF^A)\ &=\ \left.\caL_{\parder{t}}\right|_{t=0}(\caA_A^*\caF^A)
				\\
				&=\ \left(\left.\parder{t}\intprod\rmd\right|_{t=0}+\left.\rmd \parder{t}\intprod\right|_{t=0}\right)(\caA_A^*\caF^A)
				\\
				&=\parder{t}\intprod (\caF_A^* \caF^A)+\rmd \parder{t}\intprod (\caA_A^*\caF^A)
				\\
				&=(\delta\caA_A^*)\caF^A\pm\caF_A^*\delta\caA^A+(\parder{t}\intprod \caI_A^*)\caF^A\pm\caF_A^*(\parder{t}\intprod \caI^A)\pm\rmd (\caA_A^* \parder{t}\intprod \caF^A)~,
			\end{aligned}
		\end{equation}
		where we used~\eqref{eq:curV_condition} from the second to the third line. The last term in the last line is a total derivative, and the other two terms are expressions proportional to curvature forms of lower degree. Moreover, these two terms are indeed of the form $R_A^*\delta \caA^A$, since evidently
		\begin{equation}
			\delta(\caA_A^*\caF^A)\ =\ (\delta\caA_A^*)\caF^A+(...)\delta \caA^A~,
		\end{equation}
		and the first term on the right hand side is already present in the final line of~\eqref{eq:variation}. We hence conclude that~\eqref{eq:higher_eom} is indeed true.
		
		\paragraph{Closure of gauge transformations.}
		In order to study gauge transformations, recall from~\cref{prop:adjustment_L_infty} that we have to consider $\sfd_\sfW \tilde{\hat\tte}_A$. From~\eqref{eq:curV_condition}, we have that
		\begin{equation}
			(\sfd_\sfW\tilde{\hat\ttE}_A)\tilde{\hat\tte}^A\pm\tilde{\hat\ttE}_A(\sfd_\sfW\tilde{\hat\tte}^A)\ =\ 0~.
		\end{equation}
		This defines $\sfd_\sfW\tilde{\hat\ttE}_A$ up to terms of the form
		\begin{equation}
			R^*_A\ =\ q_{[AB]\cdots}\tilde{\hat\tte}^B\cdots~,
		\end{equation}
		which are graded-antisymmetric in $AB$. Translating this back into potentially non-vanishing contributions to $Q_{\rm BRST}\,\phi(\hat\tte^A)$ with $\phi$ the adjustment automorphism as explained in \cref{app:proof}, we see that the unambiguous terms do not contribute to this expression, but the terms proportional to $R^*_A$ may contain an arbitrary power of $\tte^A$, producing non-vanishing contributions to $Q_{\rm BRST}\,\phi(\hat\tte^A)$.
		
		Therefore, it is not possible to make any definite statement about closure of even ordinary gauge transformations in the higher case. However, it is clear that choosing adjusted $L_\infty$-algebras $\frL$ with structure constants of short length (i.e.~strict $L_\infty$-algebras with adjustment automorphisms that are at most binary) increases the chances of gauge transformations to close. It may therefore not be a coincidence that the largest class of examples of adjusted $L_\infty$-algebras known from the physical literature namely the ones appearing in the tensor hierarchies of gauged supergravity~\cite{Borsten:2021ljb} are exactly of this type.
		
		\section{Further constructions}
		
		In this final section, we shall comment on further ways of addressing the lack of adjusted, cyclic $L_\infty$-algebras identified in \cref{ssec:no_adjusted_cyclic_L-infty}, both in higher Chern--Simons theory as well as in general higher gauge theories.
		
		\subsection{Trivial symmetries}\label{ssec:trivial_symmetries}
		
		\paragraph{Generalities.}
		One reason for introducing adjustments is the failure of the off-shell closure of arbitrary higher gauge symmetries. Another possibility is to simply introduce new higher gauge symmetries that are suitable for compensating for the lack of closure. These are called \uline{trivial symmetries}~\cite[Section 3.1.5]{Henneaux:1992} as they vanish identically on solutions to the equation of motion~\eqref{eq:4d_curvatures} and therefore do not lead to conserved quantities. In the following, we discuss these trivial symmetries in detail. The upshot is that introducing trivial symmetries can simply be seen as excluding higher gauge symmetries.
		
		\paragraph{(Higher) trivial symmetries.}
		In~\eqref{eq:gauge_trafo_not_closing}, we have seen that the commutator of two gauge transformations~\eqref{eq:4dGaugeTransformations} is no longer a gauge transformation because of the extra terms
		\begin{equation}\label{eq:4dTrivialSymmetry}
			\Delta_{\alpha,\lambda;\alpha',\lambda'}A\ \coloneqq\ 0
			\eand
			\Delta_{\alpha,\lambda;\alpha',\lambda'}B\ \coloneqq\ \mu_3(F,\alpha,\alpha')~.
		\end{equation}
		One can regard these shifts as additional gauge transformations, and it is not difficult to see that the higher Chern--Simons action~\eqref{eq:4dCSAction} is invariant under these.
		
		More generally, we encounter a similar issue for higher gauge transformations. Indeed, upon performing gauge transformations~\eqref{eq:4dGaugeTransformations} with gauge parameters $(\alpha,\lambda)$ and with the ones $(\alpha+\mu_1(\sigma),\rmd\sigma+\mu_2(A,\sigma))$ obtained from higher gauge transformations~\eqref{eq:4dHigherGaugeTransformation}, respectively, we obtain different gauge-transformed gauge potentials with the difference being
		\begin{equation}\label{eq:4dDiscrepancyGaugeTransformedGaugeParameters}
			\Delta_\sigma A\ \coloneqq\ 0
			\eand
			\Delta_\sigma B\ \coloneqq\ \mu_2(F,\sigma)~,
		\end{equation}
		see~\eqref{eq:higher_gauge_trafo_not_closing}.
		
		Both types of transformations~\eqref{eq:4dTrivialSymmetry} and~\eqref{eq:higher_gauge_trafo_not_closing} are again trivial symmetries, as they vanish on the solutions to $F=0$.
		
		Generally, the commutator of two gauge transformations~\eqref{eq:gaugeTransformationsHigherCSTheory} is again a gauge transformation up to the term\footnote{See~\cite[Appendix C]{Jurco:2018sby} for the explicit computation in the general case.}
		\begin{equation}\label{eq:commutatorGaugeTransformationsA}
			\Delta_{c_0;c'_0}a\ \coloneqq\ \sum_{i\geq0}\frac1{i!}\mu^{\Omega^\bullet(M,\frL)}_{i+3}(a,\ldots,a,f,c_0,c'_0)~,
		\end{equation}
		and it is easily seen that the higher Chern--Simons action~\eqref{eq:higherCSAction} is invariant under these transformations as this follows directly from the cyclicity of the inner product and the fact that $\mu_{i+2}(f,f,\ldots)=0$ for all $i\in\IN_0$. We can generalise~\eqref{eq:commutatorGaugeTransformationsA} to very general trivial symmetries of the higher Chern--Simons action given by
		\begin{equation}\label{eq:trivialSymmetries}
			\Delta_{V_1;\ldots;V_i}a\ \coloneqq\ \sum_{j\geq0}\frac1{j!}\mu^{\Omega^\bullet(M,\frL)}_{i+j+1}(a,\ldots,a,f,V_1,\ldots,V_i)
		\end{equation}
		for all homogeneous $V_1,\ldots,V_i\in\Omega^\bullet(M,\frL)$ with $\sum_{j=1}^i|V_j|=i-2$. Likewise, we have \uline{higher trivial symmetries} which are given by
		\begin{equation}\label{eq:higherTrivialSymmetries}
			\Delta_{V_1,\ldots,V_i}c_{-k}\ \coloneqq\ \sum_{j\geq0}\frac1{j!}\mu^{\Omega^\bullet(M,\frL)}_{i+j+1}(a,\ldots,a,f,V_1,\ldots,V_i)
		\end{equation}
		for all homogeneous $V_1,\ldots,V_i\in\Omega^\bullet(M,\frL)$ with $\sum_{j=1}^i|V_j|=i-k-3$.
		
		\paragraph{Example: Strict $2$-term $L_\infty$-algebras.}
		In order to understand the implication of introducing these additional trivial symmetries, let us consider the special case of a strict 2-term $L_\infty$-algebra. Evidently, in this case, the trivial symmetries~\eqref{eq:4dTrivialSymmetry} are vacuous, and the gauge transformations do close. The discrepancy~\eqref{eq:4dDiscrepancyGaugeTransformedGaugeParameters} between performing gauge transformations with gauge parameters and higher-gauge-transformed gauge parameters however remains. We compensate by introducing an additional gauge transformation as follows.
		
		Consider again the $d=4$ higher Chern--Simons theory but with now a strict $2$-term $L_\infty$-algebra. We enlarge $\Omega^\bullet(M,\frL)$ to
		\begin{equation}
			\begin{aligned}
				\tilde\Omega^\bullet_1(M,\frL)\ &\coloneqq\ \Omega^1(M)\otimes\frL_0\oplus\Omega^2(M)\otimes\frL_{-1}~,
				\\
				\tilde\Omega^\bullet_0(M,\frL)\ &\coloneqq\ \Omega^0(M)\otimes(\frL_{-1}\oplus\frL_0)\oplus\Omega^1(M)\otimes\frL_{-1}~,
				\\
				\tilde\Omega^\bullet_{-1}(M,\frL)\ &\coloneqq\ \Omega^0(M)\otimes\frL_{-1}
			\end{aligned}
		\end{equation}
		and with $\mu_i^{\tilde\Omega^\bullet(M,\frL)}=\mu_i^{\Omega^\bullet(M,\frL)}$ for $i=1,2$. Evidently, as the vector space of elements of degree one does not change, the higher Chern--Simons action does not change. Hence, the curvatures~\eqref{eq:4d_curvatures} are not altered either. What changes, however, are the gauge transformations and the higher gauge transformations.\footnote{Note that we are clearly losing the interpretation of gauge transformations as partially flat homotopies.} In particular, the space of gauge parameters has been enlarged, and we enlarge the gauge transformations~\eqref{eq:4dGaugeTransformations} by a trivial symmetry, that is,
		\begin{equation}
			\begin{aligned}
				\delta_{\alpha,\gamma,\lambda}A\ &=\ \rmd \alpha+\mu_2(A,\alpha)-\mu_1(\lambda)~,
				\\
				\delta_{\alpha,\gamma,\lambda}B\ &=\ \mu_2(B,\alpha)+\rmd\lambda+\mu_2(A,\lambda)+\mu_2(F,\gamma)~.
			\end{aligned}
		\end{equation}
		It is easy to see that this can be rewritten as
		\begin{subequations}
			\begin{equation}
				\begin{aligned}
					\delta_{\alpha,\gamma,\lambda}A\ &=\ \rmd \alpha'+\mu_2(A,\alpha')-\mu_1(\lambda')~,
					\\
					\delta_{\alpha,\gamma,\lambda}B\ &=\ \mu_2(B,\alpha')+\rmd\lambda+\mu_2(A,\lambda')
				\end{aligned}
			\end{equation}
			with
			\begin{equation}\label{eq:twistedCombinationGaugeParameters}
				\alpha'\ \coloneqq\ \alpha+\mu_1(\gamma)
				\eand
				\lambda'\ \coloneqq\ \lambda+\rmd\gamma+\mu_2(A,\gamma)~.
			\end{equation}
		\end{subequations}
		Hence, the commutator of two such enlarged gauge transformations is again an enlarged gauge transformation. Furthermore, the higher gauge transformations~\eqref{eq:4dHigherGaugeTransformation} change accordingly to
		\begin{equation}
			\begin{aligned}
				\delta_\sigma \alpha\ &\coloneqq\ \mu_1(\sigma)~,
				\\
				\delta_\sigma\gamma\ &\coloneqq\ -\sigma~,
				\\
				\delta_\sigma\lambda\ &\coloneqq\ \rmd\sigma+\mu_2(A,\sigma)~.
			\end{aligned}
		\end{equation}
		With these higher gauge transformation, it is easy to see that the combination of gauge parameters~\eqref{eq:twistedCombinationGaugeParameters} remains invariant. Consequently, it does not matter if one performs gauge transformations with gauge parameters $(\alpha,\gamma,\lambda)$ or $(\alpha+\mu_1(\sigma),\gamma-\sigma,\lambda+\rmd\sigma+\mu_2(A,\sigma))$ and the discrepancy~\eqref{eq:4dDiscrepancyGaugeTransformedGaugeParameters} has disappeared.
		
		Considering the underlying $L_\infty$-algebra $\tilde\Omega^\bullet(M,\frL)$, however, we see that the higher gauge parameter and the additional gauge parameter form a trivial pair and drop out of the associated cohomology. Hence, there is a quasi-isomorphism from $\tilde\Omega^\bullet(M,\frL)$ to another $L_\infty$-algebra which does not contain the trivial symmetries or the higher gauge symmetries. Altogether, introducing trivial symmetries as additional gauge symmetries simply removed the higher gauge transformations, turning a higher gauge theory into an ordinary gauge theory with higher-form gauge potentials.
		
		\paragraph{General case.} It is now rather clear how to introduce trivial symmetries in a higher gauge theory: for each failure of closure of the higher gauge algebra, introduce an additional gauge freedom to compensate. In doing so, however, one would want to ensure that quasi-isomorphic gauge $L_\infty$-algebra then lead to semi-classically equivalent higher Chern--Simons theories\footnote{i.e~theories with quasi-isomorphic field theory $L_\infty$-algebra describing the BV theory}. Currently, it is not clear to us if this can always be achieved. Another concern is that gauge transformations can no longer be seen as partially flat homotopies, a key concept within gauge theory.
		
		Beyond this, we note that whilst adding trivial symmetries ensures that (higher) gauge symmetries close, other problems remain. For example, the self-duality equation for higher $(d/2)$-form curvatures $F_{d/2}=\star F_{d/2}$ in $d$ dimensions is generically not gauge covariant. This equation naturally arises in many contexts within string theory and supergravity, e.g.~in six-dimensional superconformal field theories, which involve the tensor multiplet.
		
		Altogether, we are left with the impression that trivial symmetries are less natural than half-adjustments. Adding to this impression is the observation that the higher gauge theories known from physics all make use of adjustments.
		
		\subsection{General higher gauge theories}\label{ssec:general_higher_gauge}
		
		It may seem strange that the lack of cyclic adjusted $L_\infty$-algebras has not shown up in other context in physics. After all, higher gauge theories are ubiquitous, e.g.~in the context of the tensor hierarchies of gauged supergravity or higher-dimensional superconformal field theories. Let us briefly comment on how these theories avoid using cyclic adjusted higher gauge Lie algebras.
		
		\paragraph{Theories not respecting quasi-isomorphisms.}
		There are a number of theories that are constructed in a way that a quasi-isomorphism of the gauge $L_\infty$-algebra would not automatically induce a quasi-isomorphism of the $L_\infty$-algebra describing the perturbative field theory. Hence, quasi-isomorphic gauge $L_\infty$-algebras do not lead to semi-classically equivalent field theories, and our argument in \cref{ssec:no_adjusted_cyclic_L-infty} cannot be generally applied.
		
		An example of such a field theory is the higher Stueckelberg model discussed in~\cite{Borsten:2024gox}, where the gauge $L_\infty$-algebra is quasi-isomorphic to a trivial $L_\infty$-algebra. Here, an adjustment was not necessary, since the theory is Abelian.
		
		Another example of such a field theory is the six-dimensional (1,0)-theory discussed in~\cite{Saemann:2017zpd,Saemann:2019dsl,Rist:2020uaa} (and based on earlier work~\cite{Samtleben:2011fj,Samtleben:2012fb}), where the higher gauge algebra is quasi-isomorphic to an ordinary Lie algebra. Here, an adjustment exists, but it does not have to respect cyclicity. In fact, the Lagrangian does not make use of a cyclic higher product; in particular, the pairing appearing in the Lagrangian is not graded symmetric. This brings us to the next class of examples.
		
		\paragraph{Theories without cyclic structure.}
		We mentioned before the adjusted string Lie 2-algebra naturally appears in the context of heterotic supergravity~\cite{Bergshoeff:1981um,Chapline:1982ww}, see also~\cite{Bergshoeff:1989de}. This action contains curvature terms of the form $\int H\wedge \star H$ and $\int F\wedge \star F$ with $F=\rmd A+\tfrac12[A,A]$ and $H=\rmd B+{\rm cs}(A)$ the field strength introduced in~\eqref{eq:string_field_strength}. The pairing of two elements of degree $-1$ in the string Lie 2-algebra clearly indicates that the cyclic structure is not of fixed degree and hence not a cyclic structure in our sense.
		
		The same is true for gauged supergravities, cf.~e.g.~\cite{Samtleben:2008pe}, for which the Lagrangian involves terms of the form $\int F_k\wedge\star F_k$ with $F_k$ a $k$-form on space-time, again requiring pairings of indefinite $L_\infty$-algebra degree. As mentioned above, the underlying $L_\infty$-algebras arise from a shifted truncation of a differential graded Lie algebra, and for this class of $L_\infty$-algebras there is a homogeneous construction of adjustments~\cite{Borsten:2021ljb}. Again, we recognise a loophole to our argument in \cref{ssec:no_adjusted_cyclic_L-infty}.
		
		Another, closely related theory is the BF-type theory considered in~\cite[Section 6.6.1]{Sati:2008eg}. Again, no cyclicity was used in the construction of the Lagrangian. More generally, higher Chern--Simons theories in the sense of~\cite{Sati:2008eg,Fiorenza:2011jr,Fiorenza:2011jr} would be integrals over Chern--Simons terms, i.e.~Lagrangians that, once lifted to a higher-dimensional space, differentiate to a polynomial in curvature forms. An example of such a theory is the seven-dimensional Chern--Simons theory discussed in~\cite{Fiorenza:2012tb}. Note that this theory's equation of motion does not amount to full flatness of the involved connections.
		
		\paragraph{Theories ignoring higher gauge transformations.}
		Some literature simply ignores higher gauge transformations and their closure, see e.g.~the BF-type theory discussed in~\cite[Section 3.9]{Girelli:2003ev}. As we saw above, we can model such theories by introducing additional trivial symmetries. Also, such theories are essentially 	ordinary gauge theories of BF-type.
		
		\paragraph{Conclusion.}
		We summarise that the higher gauge theories arising in the context of string theory usually do not make use of cyclic $L_\infty$-algebras as their higher gauge algebras. This, however, is a necessity for higher Chern--Simons theories, at least for the theories arising in a straightforward manner from homotopy Maurer--Cartan theory. We therefore believe that half-adjusted higher Chern--Simons theory provides the most promising form of higher Chern--Simons theory to study further.
		
		\appendix
		\addappheadtotoc
		\appendixpage
		
		\appendices
		
		\section{Derivation of \texorpdfstring{\cref{prop:adjustment_L_infty}}{Proposition 3.1}}\label{app:proof}
		
		For the reader's convenience, let us briefly summarise the derivation of \cref{prop:adjustment_L_infty} also found in~\cite{Gagliardo:2025oio}.
		
		We start from the construction of the adjusted BRST complex~\cite{Saemann:2019dsl,Fischer:2024vak,Gagliardo:2025oio}. We extend the morphism~\eqref{eq:morphism_Weil} to an inner homomorphism in the category of N$Q$-manifolds\footnote{i.e.~differential graded manifolds concentrated in non-positive degrees}. Essentially, this means that we allow for maps
		\begin{equation}\label{eq:augmented_morphism}
			\caA\,:\,\sfW(\frL)\ \rightarrow\ \Omega^\bullet(M)
		\end{equation}
		of non-negative degree, called the \uline{ghost number}, and there is a differential on these maps of ghost degree one, the \uline{BRST differential}
		\begin{equation}
			Q_{\rm BRST}\caA\ \coloneqq\ \rmd\circ\caA-\caA\circ\sfd_\sfW~.
		\end{equation}
		The images of the generators $\tte^A$ and $\phi(\hat\tte^A)$ in the adjusted Weil algebra are of the form
		\begin{equation}
			\begin{aligned}
				\caA(\tte^A)\ &=\ a^A+c_0^A+c_{-1}^a+\cdots~,
				\\
				\caA(\phi(\hat\tte^A))\ &=\ f^A+d_0^A+d_{-1}^A+\cdots~,
			\end{aligned}
		\end{equation}
		where $a$ and $f$ are the sums of the connection and curvature forms, respectively, and $c_i$ and $d_i$ are elements of ghost number $1-i$. Whilst we expect the ghosts $c_i$, keeping the ghosts $d_i$ would lead to too many gauge symmetries: the configuration space is obtained by quotienting connections by the gauge symmetries and would thus be too small. Hence, we need to quotient the image of $\caA$ by the ideal generated by the $d_i$ and its derivatives. To be consistent with $Q_{\rm BRST}^2=0$, however, we need to require that the BRST differential on these ghosts vanishes. Explicitly, this means that
		\begin{equation}
			\sfp(Q_{\rm BRST}\phi(\hat\tte^A))=0~,
		\end{equation}
		where $\sfp$ denotes the projection onto components of ghost number larger than one that are free of $d_i$.
		
		Monomials in $\sfW(\frL)$ which are mapped under $\caA$ to field monomials of ghost number larger than one either contain at least one generator $\tte^A$ of degree $\geq 2$ or at least two generators $\tte^A$ of degree $1$. This means that the monomials that are in the kernel of $\sfp\circ\caA$ are of the form
		\begin{equation}
			f_{B_1\cdots B_i}{}^A\phi(\hat\tte^{B_1})\cdots\phi(\hat\tte^{B_i})
			\eor
			g_{B_0B_1\cdots B_i}{}^A\tte^{B_0}\phi(\hat\tte^{B_1})\cdots\phi(\hat\tte^{B_i})
		\end{equation}
		with $|\tte^{B_0}|=1$. Thus, the automorphism $\phi$ is an adjustment if and only if the differential on the adjusted Weil algebra is of the form~\eqref{eq:diff_form}.
		
		\section{Details on the half-adjusted gauge structure}
		\label{app:half-adjusted-gauge}
		
		\paragraph{Properties of deformed actions.}
		At the finite level, we note that the maps $\nu_2$ and $\tilde\nu_2$ introduced in~\eqref{eq:nu_maps} satisfy the following relations:
		\begin{equation}
			\begin{aligned}
				\sft(\nu_2(g,Y))\ &=\ \tilde\nu_2(g,\sft(Y))~,
				\\
				\nu_2(\sft(h),Y)\ &=\ Y~,
				\\
				\tilde\nu_2(\sft(h),X)\ &=\ X~,
				\\
				\nu_2(g_1,\nu_2(g_2,Y))\ &=\ \nu_2(g_1g_2,Y)~,
				\\
				\tilde\nu_2(g_1,\tilde\nu_2(g_2,X))\ &=\ \tilde\nu_2(g_1g_2,X)
			\end{aligned}
		\end{equation}
		for all $Y\in\frh$, $h\in\sfH$, $X\in\frg$, and $g,g_{1,2}\in\sfG$. In particular, we note that $\nu_2$ and $\tilde \nu_2$ behave like left-actions on $\frh$ and $\frg$, respectively.
		
		We have further the identity
		\begin{equation}
			\begin{aligned}
				\kappa^*(g_1g_2,Y^*)(X)\ &=\ Y^*(\kappa(g_2^{-1}g_1^{-1},X))
				\\
				&=\ Y^*\big(g_2^{-1}\acton\kappa(g_1^{-1},X)+\kappa(g_2^{-1},\tilde\nu_2(g_1^{-1},X))\big)
				\\
				&=\ \kappa^*(g_1,g_2\acton Y^*)(X)+\tilde\nu^*_2(g_1,\kappa^*(g_2,Y^*))(X)~,
			\end{aligned}
		\end{equation}
		and similarly
		\begin{equation}
			\begin{aligned}
				\nu^*_2(g_1g_2,Y^*)\ &=\ \nu^*_2(g_1,g_2\acton Y^*)-\sft^*(\tilde\nu^*_2(g_1,\kappa^*(g_2,Y^*)))~,
				\\
				\tilde\nu^*_2(g_1g_2,X^*)\ &=\ g_1\tilde\nu^*_2(g_2,X^*)g_1^{-1}-\kappa^*(g_1,\nu^*_2(g_2,\mu^*_1(X^*)))
			\end{aligned}
		\end{equation}
		for all $Y^*\in\frh^*$, $X\in\frg$, $X^*\in\frg^*$, and $g_{1,2}\in\sfG$. These imply the relations
		\begin{equation}
			\begin{aligned}
				\rmd \kappa^*(g,Y^*)\ &=\ \kappa^*(g,g^{-1}\rmd g\acton Y^*)+\nu^*_2(g,\kappa^*(g^{-1}\rmd g,Y^*))+\kappa^*(g,\rmd Y^*)~,
				\\
				\rmd \nu^*_2(g,Y^*)\ &=\ \nu^*_2(g,\nu^*_2(g^{-1}\rmd g,Y^*))+\nu^*_2(g,\rmd Y^*)~,
				\\
				\rmd \tilde\nu^*_2(g,X^*)\ &=\ \tilde\nu^*_2(g,\tilde\nu^*_2(g^{-1}\rmd g,X^*))+\tilde\nu^*_2(g,\rmd X^*)
			\end{aligned}
		\end{equation}
		for the action of the de~Rham differential for all $Y^*\in\frh^*$, $X^*\in\frg^*$, and $g\in\sfG$. In addition, we also have
		\begin{equation}
			\begin{aligned}
				\sft^*(\kappa^*_+(X,Y^*))\ =\ -\tfrac12\nu^*_2(X,Y^*)
			\end{aligned}
		\end{equation}
		for all $X\in\frg$ and $Y^*\in\frh^*$.
		
		A particularly useful identity to show that finite gauge transformations close is the following one:
		\begin{equation}
			\nu_2(g,\kappa_+(X_1,X_2))\ =\ \kappa_+(\tilde\nu_2(g,X_1),\tilde \nu_2(g,X_2))
		\end{equation} 
		for $X_{1,2}\in \frg$ and $g\in \sfG$. To demonstrate this identity, we replace on both sides $\nu_2$ and $\kappa_+$ with their definitions in terms of $\kappa$. We then apply the two adjustment identities
		\begin{equation}
			\kappa(\sft(Y),X_1)\ =\ -X_1\acton Y
		\end{equation} 
		and
		\begin{equation}
			\begin{aligned}
				\kappa(X_1,\sft(\kappa(g,X_2))&\ =\ X_1\acton \kappa(g,X_2)+\kappa(X_1,gX_2g^{-1})-g\acton \kappa(g^{-1}X_1g,X_2)
				\\
				&\hspace{1cm}-\kappa\big(g,[g^{-1}X_1g,X_2]-\sft(\kappa([g^{-1}X_1g,X_2]))\big)
			\end{aligned}
		\end{equation}
		with $X_{1,2}\in \frg$, $Y\in \frh$, $g\in \sfG$ to the rewritten right-hand side. Note that the second identity arises from the adjustment condition~\eqref{eq:alternativeAdjustmentCondition_b} by linearising the argument $g_2$ to $\unit+X_1$.
		
		It remains to use the crossed module identity
		\begin{equation}
			\sft(Y_1)\acton Y_2\ =\ [Y_1,Y_2]
		\end{equation} 
		for $Y_{1,2}\in \frh$ to show equality of both sides.
		
		\paragraph{Composition of infinitesimal gauge transformations.}
		Let us briefly present the details underlying the relation~\eqref{eq:closure_of_gauge}. Consider again the vector of gauge potentials $a\in\{A,A^*,B,B^*\}$ as well as gauge transformations $\delta_i$ parametrised by $(\alpha_i,\alpha^*_i,\lambda_i,\lambda^*_i)$. We compute 
		\begin{equation}
			\begin{aligned}
				a+\delta_2a\ =\
				\begin{pmatrix}
					A+\rmd\alpha_2+\mu_2(A,\alpha_2)-\mu_1(\lambda_2)
					\\
					A^*+\rmd\alpha^*_2+\nu^*_2(A,\alpha^*_2)-\nu^*_2(\alpha_2,A^*)-\mu^*_1(\lambda^*_2)
					\\
					B+\rmd\lambda_2+\mu_2(A,\lambda_2)-\mu_2(\alpha_2,B)+\tfrac12\mu_3(A,A,\alpha_2)+\kappa(\alpha_2,F)
					\\
					B^*+\rmd\lambda^*_2+\tilde\nu^*_2(A,\lambda^*_2)-\tilde\nu^*_2(\alpha_2,B^*)+\kappa^*_+(F,\alpha^*_2)
				\end{pmatrix}~,
			\end{aligned}
		\end{equation}
		and furthermore,
		\begin{equation}
			\begin{aligned}
				&\delta_1(a+\delta_2a)-(a+\delta_1a)
				\\
				&=\begin{pmatrix}
					\mu_2(\rmd\alpha_2+\mu_2(A,\alpha_2)-\mu_1(\lambda_2),\alpha_1)
					\\[5pt]
					\nu^*_2(\rmd\alpha_2+\mu_2(A,\alpha^*_2)-\mu_1(\lambda_2),\alpha^*_1)
					\\
					-\nu^*_2(\alpha_1,\rmd\alpha^*_2+\nu^*_2(A,\alpha^*_2)-\nu^*_2(\alpha_2^0,A^*)-\mu^*_1(\lambda^*_2))
					\\[5pt]
					\mu_2(\rmd\alpha_2+\mu_2(A,\alpha_2)-\mu_1(\lambda_2),\lambda_1)
					\\
					-\mu_2(\alpha_1,\rmd\lambda_2+\mu_2(A,\lambda_2)-\mu_2(\alpha_2,B)+\tfrac12\mu_3(A,A,\alpha_2)+\kappa(\alpha_2,F))
					\\
					+\mu_3(\rmd\alpha_2+\mu_2(A,\alpha_2)-\mu_1(\lambda_2),A,\alpha_1)-\kappa(\alpha_1,\tilde\nu_2(\alpha_2,F))
					\\[5pt]
					\tilde\nu^*_2(\rmd\alpha_2+\mu(A,\alpha_2)-\mu_1(\lambda_2),\lambda^*_1)
					\\
					-\tilde\nu^*_2(\alpha_1,\rmd\lambda^*_2+\tilde\nu^*_2(A,\lambda^*_2)-\tilde\nu^*_2(\alpha_2,B^*)+\kappa^*_+(F,\alpha^*_2))
					\\
					+\kappa^*_+(-\tilde\nu_2(\alpha_2,F),\alpha^*_1)
				\end{pmatrix}.
			\end{aligned}
		\end{equation}
		The commutator is then given by
		\begin{equation}
			\begin{aligned}
				[\delta_1,\delta_2]a\ &=\
				\begin{pmatrix}
					\rmd\alpha_3+\mu_2(A,\alpha_3)-\mu_1(\lambda_3)
					\\
					\rmd\alpha^*_3+\nu^*_2(A,\alpha^*_3)-\nu^*_2(\alpha_3,A^*)-\mu^*_1(\lambda^*_3)
					\\
					\rmd\lambda_3+\mu_2(A,\lambda_3)-\mu_2(\alpha_3,B)+\tfrac12\mu_3(A,A,\alpha_3)+\kappa(\alpha_3,F)
					\\
					\rmd\lambda^*_3+\tilde\nu^*_2(A,\lambda^*_3)-\tilde\nu^*_2(\alpha_3,B^*)+\kappa^*_+(F,\alpha^*_3)
				\end{pmatrix}
			\end{aligned}
		\end{equation}
		and so, we obtain~\eqref{eq:closure_of_gauge}.
		
	\end{body}
	

\begin{thebibliography}{10}

\bibitem{Chern:1974ft}
S.-S.~Chern and J.~Simons,
{\em {Characteristic forms and geometric invariants},}
Ann. Math. {\bf 99}  (1974)~48.

\bibitem{Witten:1988hf}
E.~Witten,
{\em {Quantum field theory and the Jones polynomial},}
\href{https://dx.doi.org/10.1007/BF01217730}{Commun. Math. Phys. {\bf 121}
  (1989) 351}.

\bibitem{Bagger:2007jr}
J.~Bagger and N.~D.~Lambert,
{\em Gauge symmetry and supersymmetry of multiple M2-branes,}
\href{https://dx.doi.org/10.1103/PhysRevD.77.065008}{Phys. Rev. D {\bf 77}
  (2008) 065008} [{\tt \href{https://www.arxiv.org/abs/0711.0955}{0711.0955
  [hep-th]}}].

\bibitem{Gustavsson:2007vu}
A.~Gustavsson,
{\em Algebraic structures on parallel M2-branes,}
\href{https://dx.doi.org/10.1016/j.nuclphysb.2008.11.014}{Nucl. Phys. B {\bf
  811}  (2009)~66} [{\tt \href{https://www.arxiv.org/abs/0709.1260}{0709.1260
  [hep-th]}}].

\bibitem{Ben-Shahar:2021zww}
M.~Ben-Shahar and H.~Johansson,
{\em Off-shell color--kinematics duality for {C}hern--{S}imons,}
\href{https://dx.doi.org/10.1007/JHEP08(2022)035}{JHEP {\bf 2208}  (2022) 035}
  [{\tt \href{https://www.arxiv.org/abs/2112.11452}{2112.11452 [hep-th]}}].

\bibitem{Borsten:2022vtg}
L.~Borsten, B.~Jur\v{c}o, H.~Kim, T.~Macrelli, C.~Saemann, and M.~Wolf,
{\em Kinematic Lie algebras from twistor spaces,}
\href{https://dx.doi.org/10.1103/PhysRevLett.131.041603}{Phys. Rev. Lett. {\bf
  131}  (2023) 041603} [{\tt
  \href{https://www.arxiv.org/abs/2211.13261}{2211.13261 [hep-th]}}].

\bibitem{Witten:9207094}
E.~Witten,
{\em Chern-Simons gauge theory as a string theory,}
\href{https://dx.doi.org/10.1007/978-3-0348-9217-9_28}{Prog. Math. {\bf 133}
  (1995) 637} [{\tt
  \href{https://www.arxiv.org/abs/hep-th/9207094}{hep-th/9207094}}].

\bibitem{Witten:2003nn}
E.~Witten,
{\em {Perturbative gauge theory as a string theory in twistor space},}
\href{https://dx.doi.org/10.1007/s00220-004-1187-3}{Commun. Math. Phys. {\bf
  252}  (2004) 189} [{\tt
  \href{https://www.arxiv.org/abs/hep-th/0312171}{hep-th/0312171}}].

\bibitem{Cederwall:2013vba}
M.~Cederwall,
{\em Pure spinor superfields -- an overview,}
\href{https://dx.doi.org/10.1007/978-3-319-03774-5_4}{Springer Proc. Phys. {\bf
  153}  (2014)~61} [{\tt \href{https://www.arxiv.org/abs/1307.1762}{1307.1762
  [hep-th]}}].

\bibitem{Costello:2013zra}
K.~Costello,
{\em Supersymmetric gauge theory and the {Y}angian,}
{\tt \href{https://www.arxiv.org/abs/1303.2632}{1303.2632 [hep-th]}}.

\bibitem{Costello:2017dso}
K.~Costello, E.~Witten, and M.~Yamazaki,
{\em {Gauge theory and integrability, I},}
\href{https://dx.doi.org/10.4310/ICCM.2018.v6.n1.a6}{ICCM Not. {\bf 6}
  (2018)~46} [{\tt \href{https://www.arxiv.org/abs/1709.09993}{1709.09993
  [hep-th]}}].

\bibitem{Costello:2018gyb}
K.~J.~Costello, E.~Witten, and M.~Yamazaki,
{\em Gauge theory and integrability, II,}
\href{https://dx.doi.org/10.4310/ICCM.2018.v6.n1.a7}{ICCM Not. {\bf 6}  (2018)
  120} [{\tt \href{https://www.arxiv.org/abs/1802.01579}{1802.01579
  [hep-th]}}].

\bibitem{Costello:2019tri}
K.~Costello and M.~Yamazaki,
{\em {Gauge theory and integrability, III},}
{\tt \href{https://www.arxiv.org/abs/1908.02289}{1908.02289 [hep-th]}}.

\bibitem{Alexandrov:1995kv}
M.~Alexandrov, M.~Kontsevich, A.~Schwarz, and O.~Zaboronsky,
{\em {The geometry of the master equation and topological quantum field
  theory},}
\href{https://dx.doi.org/10.1142/S0217751X97001031}{Int. J. Mod. Phys. A {\bf
  12}  (1997) 1405} [{\tt
  \href{https://www.arxiv.org/abs/hep-th/9502010}{hep-th/9502010}}].

\bibitem{Girelli:2003ev}
F.~Girelli and H.~Pfeiffer,
{\em {Higher gauge theory -- differential versus integral formulation},}
\href{https://dx.doi.org/10.1063/1.1790048}{J. Math. Phys. {\bf 45}  (2004)
  3949} [{\tt
  \href{https://www.arxiv.org/abs/hep-th/0309173}{hep-th/0309173}}].

\bibitem{Girelli:0708.3051}
F.~Girelli, H.~Pfeiffer, and E.~M.~Popescu,
{\em Topological higher gauge theory - from BF to BFCG theory,}
\href{https://dx.doi.org/10.1063/1.2888764}{J. Math. Phys. {\bf 49}  (2008)
  032503} [{\tt \href{https://www.arxiv.org/abs/0708.3051}{0708.3051
  [hep-th]}}].

\bibitem{Martins:2010ry}
J.~F.~Martins and A.~Mikovic,
{\em {Lie crossed modules and gauge-invariant actions for 2-BF theories},}
\href{https://dx.doi.org/10.4310/ATMP.2011.v15.n4.a4}{Adv. Theor. Math. Phys.
  {\bf 15}  (2011) 1059} [{\tt
  \href{https://www.arxiv.org/abs/1006.0903}{1006.0903 [hep-th]}}].

\bibitem{Radenkovic:2019qme}
T.~Radenkovic and M.~Vojinovic,
{\em {Higher gauge theories based on 3-groups},}
\href{https://dx.doi.org/10.1007/JHEP10(2019)222}{JHEP {\bf 1910}  (2019) 222}
  [{\tt \href{https://www.arxiv.org/abs/1904.07566}{1904.07566 [hep-th]}}].

\bibitem{Stipsic:2025rhk}
P.~Stipsic and M.~Vojinovic,
{\em {Correspondence between 3BF and Einstein-Cartan formulations of quantum
  gravity},}
{\tt \href{https://www.arxiv.org/abs/2506.17722}{2506.17722 [gr-qc]}}.

\bibitem{Fiorenza:2011jr}
D.~Fiorenza, C.~L.~Rogers, and U.~Schreiber,
{\em {A higher Chern--Weil derivation of AKSZ $\sigma$-models},}
\href{https://dx.doi.org/10.1142/S0219887812500788}{Int. J. Geom. Meth. Mod.
  Phys. {\bf 10}  (2013) 1250078} [{\tt
  \href{https://www.arxiv.org/abs/1108.4378}{1108.4378 [math-ph]}}].

\bibitem{Antoniadis:2013jja}
I.~Antoniadis and G.~Savvidy,
{\em Extension of {C}hern--{S}imons forms and new gauge anomalies,}
\href{https://dx.doi.org/10.1142/S0217751X14500274}{Int. J. Mod. Phys. A {\bf
  29}  (2014) 1450027} [{\tt
  \href{https://www.arxiv.org/abs/1304.4398}{1304.4398 [hep-th]}}].

\bibitem{Song:2022ftp}
D.~Song, M.~Wu, K.~Wu, and J.~Yang,
{\em {Higher {C}hern--{S}imons based on (2-)crossed modules},}
\href{https://dx.doi.org/10.1007/JHEP07(2023)207}{JHEP {\bf 2307}  (2023) 207}
  [{\tt \href{https://www.arxiv.org/abs/2212.04667}{2212.04667 [math-ph]}}].

\bibitem{Song:2023rra}
D.~H.~Song, K.~Wu, and J.~Yang,
{\em Higher {C}hern--{S}imons-Antoniadis-Savvidy forms based on crossed
  modules,}
\href{https://dx.doi.org/10.1016/j.physletb.2023.138374}{Phys. Lett. B {\bf
  848}  (2024) 138374} [{\tt
  \href{https://www.arxiv.org/abs/2306.08930}{2306.08930 [math-ph]}}].

\bibitem{Soncini:2014ara}
E.~Soncini and R.~Zucchini,
{\em {4-d semistrict higher Chern--Simons theory I},}
\href{https://dx.doi.org/10.1007/JHEP10(2014)079}{JHEP {\bf 1410}  (2014)~79}
  [{\tt \href{https://www.arxiv.org/abs/1406.2197}{1406.2197 [hep-th]}}].

\bibitem{Zucchini:2015ohw}
R.~Zucchini,
{\em {A Lie based 4-dimensional higher Chern--Simons theory},}
\href{https://dx.doi.org/10.1063/1.4947531}{J. Math. Phys. {\bf 57}  (2016)
  052301} [{\tt \href{https://www.arxiv.org/abs/1512.05977}{1512.05977
  [hep-th]}}].

\bibitem{Zucchini:2019mbz}
R.~Zucchini,
{\em {Wilson surfaces for surface knots},}
{\tt \href{https://www.arxiv.org/abs/1903.02853}{1903.02853 [hep-th]}}.

\bibitem{Zucchini:2021bnn}
R.~Zucchini,
{\em 4-d {C}hern--{S}imons theory: Higher gauge symmetry and holographic
  aspects,}
\href{https://dx.doi.org/10.1007/JHEP06(2021)025}{JHEP {\bf 2106}  (2021) 025}
  [{\tt \href{https://www.arxiv.org/abs/2101.10646}{2101.10646 [hep-th]}}].

\bibitem{Chen:2025msk}
H.~Chen,
{\em {Combinatorial quantization of 4d 2-{C}hern--{S}imons theory II: Quantum
  invariants of higher ribbons in $D^4$},}
{\tt \href{https://www.arxiv.org/abs/2506.05785}{2506.05785 [math-ph]}}.

\bibitem{Chen:2025qpt}
H.~Chen,
{\em {Combinatorial quantization of 4d 2-{C}hern--{S}imons theory I: the Hopf
  category of higher-graph states},}
{\tt \href{https://www.arxiv.org/abs/2501.06486}{2501.06486 [math-ph]}}.

\bibitem{Schenkel:2024dcd}
A.~Schenkel and B.~Vicedo,
{\em 5d 2-{C}hern--{S}imons theory and 3d integrable field theories,}
\href{https://dx.doi.org/10.1007/s00220-024-05170-9}{Commun. Math. Phys. {\bf
  405}  (2024) 293} [{\tt
  \href{https://www.arxiv.org/abs/2405.08083}{2405.08083 [hep-th]}}].

\bibitem{Chen:2024axr}
H.~Chen and J.~Liniado,
{\em Higher gauge theory and integrability,}
\href{https://dx.doi.org/10.1103/PhysRevD.110.086017}{Phys. Rev. D {\bf 110}
  (2024) 086017} [{\tt \href{https://www.arxiv.org/abs/2405.18625}{2405.18625
  [hep-th]}}].

\bibitem{Borsten:2024gox}
L.~Borsten, M.~Jalali~Farahani, B.~Jur\v{c}o, H.~Kim, J.~Narozny, D.~Rist,
  C.~Saemann, and M.~Wolf,
{\em Higher gauge theory,}
in: ``Encyclopedia of Mathematical Physics (Second Edition),'' Vol.4,
  pp.159-185, Elsevier
[\href{https://dx.doi.org/10.1016/B978-0-323-95703-8.00217-2}{doi}]
[{\tt \href{https://www.arxiv.org/abs/2401.05275}{2401.05275 [hep-th]}}].

\bibitem{Sati:2008eg}
H.~Sati, U.~Schreiber, and J.~Stasheff,
{\em $L_\infty$-algebra connections and applications to String- and
  Chern--Simons $n$-transport,}
in: ``Quantum Field Theory,'' eds. B. Fauser, J. Tolksdorf and E. Zeidler, p.
  303, Birkh{\"a}user 2009
[\href{https://dx.doi.org/10.1007/978-3-7643-8736-5_17}{doi}]
[{\tt \href{https://www.arxiv.org/abs/0801.3480}{0801.3480 [math.DG]}}].

\bibitem{Saemann:2019dsl}
C.~Saemann and L.~Schmidt,
{\em {Towards an {M}5-brane model {II}: Metric string structures},}
\href{https://dx.doi.org/10.1002/prop.202000051}{Fortschr. Phys. {\bf 68}
  (2020) 2000051} [{\tt \href{https://www.arxiv.org/abs/1908.08086}{1908.08086
  [hep-th]}}].

\bibitem{Kim:2019owc}
H.~Kim and C.~Saemann,
{\em Adjusted parallel transport for higher gauge theories,}
\href{https://dx.doi.org/10.1088/1751-8121/ab8ef2}{J. Phys. A {\bf 52}  (2020)
  445206} [{\tt \href{https://www.arxiv.org/abs/1911.06390}{1911.06390
  [hep-th]}}].

\bibitem{Rist:2022hci}
D.~Rist, C.~Saemann, and M.~Wolf,
{\em Explicit non-Abelian gerbes with connections,}
{\tt \href{https://www.arxiv.org/abs/2203.00092}{2203.00092 [hep-th]}}.

\bibitem{Fischer:2024vak}
S.-R.~Fischer, M.~Jalali~Farahani, H.~Kim, and C.~Saemann,
{\em {Adjusted connections I: Differential cocycles for principal groupoid
  bundles with connection},}
{\tt \href{https://www.arxiv.org/abs/2406.16755}{2406.16755 [math.DG]}}.

\bibitem{Jurco:2018sby}
B.~Jur\v{c}o, L.~Raspollini, C.~Saemann, and M.~Wolf,
{\em {$L_\infty$-algebras of classical field theories and the
  {B}atalin--{V}ilkovisky formalism},}
\href{https://dx.doi.org/10.1002/prop.201900025}{Fortsch. Phys. {\bf 67}
  (2019) 1900025} [{\tt \href{https://www.arxiv.org/abs/1809.09899}{1809.09899
  [hep-th]}}].

\bibitem{Jurco:2019bvp}
B.~Jur\v{c}o, T.~Macrelli, L.~Raspollini, C.~Saemann, and M.~Wolf,
{\em {$L_\infty$-algebras, the BV formalism, and classical fields},}
in: ``Higher Structures in M-Theory,'' proceedings of the
  \href{http://www.maths.dur.ac.uk/lms/109/index.html}{LMS/EPSRC Durham
  Symposium}, 12--18 August 2018
[\href{https://dx.doi.org/10.1002/prop.201910025}{doi}]
[{\tt \href{https://www.arxiv.org/abs/1903.02887}{1903.02887 [hep-th]}}].

\bibitem{Baez:2005sn}
J.~C.~Baez, D.~Stevenson, A.~S.~Crans, and U.~Schreiber,
{\em {From loop groups to 2-groups},}
\href{http://projecteuclid.org/euclid.hha/1201127333}{Homol. Homot. Appl. {\bf
  9}  (2007) 101} [{\tt
  \href{https://www.arxiv.org/abs/math.QA/0504123}{math.QA/0504123}}].

\bibitem{Saemann:2017zpd}
C.~Saemann and L.~Schmidt,
{\em {Towards an M5-brane model I: A 6d superconformal field theory},}
\href{https://dx.doi.org/10.1063/1.5026545}{J. Math. Phys. {\bf 59}  (2018)
  043502} [{\tt \href{https://www.arxiv.org/abs/1712.06623}{1712.06623
  [hep-th]}}].

\bibitem{Rist:2020uaa}
D.~Rist, C.~Saemann, and M.~van~der~Worp,
{\em Towards an {M}5-brane model {III}: {S}elf-duality from additional trivial
  fields,}
\href{https://dx.doi.org/10.1007/JHEP06(2021)036}{JHEP {\bf 2106}  (2021) 036}
  [{\tt \href{https://www.arxiv.org/abs/2012.09253}{2012.09253 [hep-th]}}].

\bibitem{kadeishvili1982algebraic}
T.~Kadeishvili,
{\em Algebraic structure in the homology of an $A_\infty$-algebra,}
Soobshch. Akad. Nauk. Gruz. SSR {\bf 108}  (1982) 249.

\bibitem{Kajiura:2003ax}
H.~Kajiura,
{\em Noncommutative homotopy algebras associated with open strings,}
\href{https://dx.doi.org/10.1142/S0129055X07002912}{Rev. Math. Phys. {\bf 19}
  (2007)~1} [{\tt
  \href{https://www.arxiv.org/abs/math.QA/0306332}{math.QA/0306332}}].

\bibitem{Borsten:2021ljb}
L.~Borsten, H.~Kim, and C.~Saemann,
{\em $EL_\infty$-algebras, generalized geometry, and tensor hierarchies,}
{\tt \href{https://www.arxiv.org/abs/2106.00108}{2106.00108 [hep-th]}}.

\bibitem{Gagliardo:2025oio}
G.~Gagliardo, C.~Saemann, and R.~Tellez-Dominguez,
{\em Principal 3-bundles with adjusted connections,}
{\tt \href{https://www.arxiv.org/abs/2505.13368}{2505.13368 [math-ph]}}.

\bibitem{Bergshoeff:1981um}
E.~Bergshoeff, M.~de~Roo, B.~de~Wit, and P.~van~Nieuwenhuizen,
{\em {Ten-dimensional Maxwell--Einstein supergravity, its currents, and the
  issue of its auxiliary fields},}
\href{https://dx.doi.org/10.1016/0550-3213(82)90050-5}{Nucl. Phys. B {\bf 195}
  (1982)~97}.

\bibitem{Chapline:1982ww}
G.~F.~Chapline and N.~S.~Manton,
{\em {Unification of Yang--Mills theory and supergravity in ten dimensions},}
\href{https://dx.doi.org/10.1016/0370-2693(83)90633-0}{Phys. Lett. B {\bf 120}
  (1983) 105}.

\bibitem{Baez:0307200}
J.~C.~Baez and A.~D.~Lauda,
{\em Higher-dimensional algebra V: 2-groups,}
\href{http://www.kurims.kyoto-u.ac.jp/EMIS/journals/TAC/volumes/12/14/12-14.pdf}{Th.
  App. Cat. {\bf 12}  (2004) 423} [{\tt
  \href{https://www.arxiv.org/abs/math.QA/0307200}{math.QA/0307200}}].

\bibitem{Kim:2022opr}
H.~Kim and C.~Saemann,
{\em Non-geometric T-duality as higher groupoid bundles with connections,}
{\tt \href{https://www.arxiv.org/abs/2204.01783}{2204.01783 [hep-th]}}.

\bibitem{Henneaux:1992}
M.~Henneaux and C.~Teitelboim,
{\em Quantization of gauge systems,} Princeton University Press, 1992
[\href{https://dx.doi.org/10.1515/9780691213866}{doi}].

\bibitem{Samtleben:2011fj}
H.~Samtleben, E.~Sezgin, and R.~Wimmer,
{\em {(1,0) superconformal models in six dimensions},}
\href{https://dx.doi.org/10.1007/JHEP12(2011)062}{JHEP {\bf 1112}  (2011) 062}
  [{\tt \href{https://www.arxiv.org/abs/1108.4060}{1108.4060 [hep-th]}}].

\bibitem{Samtleben:2012fb}
H.~Samtleben, E.~Sezgin, and R.~Wimmer,
{\em {Six-dimensional superconformal couplings of non-abelian tensor and
  hypermultiplets},}
\href{https://dx.doi.org/10.1007/JHEP03(2013)068}{JHEP {\bf 1303}  (2013) 068}
  [{\tt \href{https://www.arxiv.org/abs/1212.5199}{1212.5199 [hep-th]}}].

\bibitem{Bergshoeff:1989de}
E.~A.~Bergshoeff and M.~de~Roo,
{\em {The quartic effective action of the heterotic string and supersymmetry},}
\href{https://dx.doi.org/10.1016/0550-3213(89)90336-2}{Nucl. Phys. B {\bf 328}
  (1989) 439}.

\bibitem{Samtleben:2008pe}
H.~Samtleben,
{\em Lectures on gauged supergravity and flux compactifications,}
\href{https://dx.doi.org/10.1088/0264-9381/25/21/214002}{Class. Quant. Grav.
  {\bf 25}  (2008) 214002} [{\tt
  \href{https://www.arxiv.org/abs/0808.4076}{0808.4076 [hep-th]}}].

\bibitem{Fiorenza:2012tb}
D.~Fiorenza, H.~Sati, and U.~Schreiber,
{\em {Multiple M5-branes, string 2-connections, and 7d nonabelian Chern--Simons
  theory},}
\href{https://dx.doi.org/10.4310/ATMP.2014.v18.n2.a1}{Adv. Theor. Math. Phys.
  {\bf 18}  (2014) 229} [{\tt
  \href{https://www.arxiv.org/abs/1201.5277}{1201.5277 [hep-th]}}].

\end{thebibliography}
\end{document}